\newtheoremstyle{custom}{3pt}{3pt}{}{}{\bfseries}{:}{.5em}{}
\theoremstyle{custom}
\newtheorem{example}    {Example}
\newtheorem{definition} [example]{Definition}
\newtheorem{theorem}    [example]{Theorem}
\newtheorem{lemma}      [example]{Lemma}
\newtheorem{corollary}  [example]{Corollary}
\newtheorem{remark}[example]{Remark}
\newtheorem{problem}[]{Problem}
\newtheorem{proposition}[example]{Proposition}
\newcommand{\lowrank}{r}
\newcommand{\tpi}{\widetilde{\Pi}}
\newcommand{\tP}{\smash{\widetilde{P}}}
\newcommand{\tLam}{\smash{\widetilde{\Lambda}}}
\newcommand*{\bP}{\mathbb{P}}
\newcommand*{\bR}{\mathbb{R}}
\newcommand*{\bN}{\mathbb{N}}
\newcommand*{\cC}{\mathcal{C}}
\newcommand*{\cE}{\mathcal{E}}
\newcommand*{\cF}{\mathcal{F}}
\newcommand*{\fB}{\mathfrak{B}}
\newcommand*{\fW}{\mathfrak{W}}
\newcommand*{\Id}{\mathrm{Id}}
\DeclareMathOperator*{\argmax}{arg\,max}
\DeclareMathOperator*{\argmin}{arg\,min}
\DeclareMathOperator*{\diag }{diag}
\DeclarePairedDelimiterX{\infdivx}[2]{(}{)}{#1\;\delimsize\|\;#2}
\newcommand{\KLD}{\mathrm{D_{KL}}\infdivx}
\DeclarePairedDelimiter{\norm}{\lVert}{\rVert}
\definecolor{darkgreen}{rgb}{0,0.7,0}
\newcommand{\rev}[1]{\textcolor{black}{#1}}
\begin{document}
\title{Coherent set identification via direct low rank maximum likelihood estimation}


\author[1]{Robert M. Polzin}
\author[1]{Ilja Klebanov} 
\author[2]{Nikolas Nüsken}
\author[3]{P\'eter Koltai}
\renewcommand\Affilfont{\small}
\affil[1]{Institute of Mathematics, Freie Universit\"at Berlin, Germany}
\affil[2]{Department of Mathematics, King's College London, UK}
\affil[3]{Department of Mathematics, University of Bayreuth, Germany}

\date{}

\maketitle

\begin{abstract}

We analyze connections between two low rank modeling approaches from the last decade for treating dynamical data. The first one is the coherence problem (or coherent set approach), where groups of states are sought that evolve under the action of a stochastic transition matrix in a way maximally distinguishable from other groups. 
The second one is a low rank factorization approach for stochastic matrices, called Direct Bayesian Model Reduction (DBMR), which estimates the low rank factors directly from observed data.
We show that DBMR results in a low rank model that is a projection of the full model, and exploit this insight to infer bounds on a quantitative measure of coherence within the reduced model.
Both approaches can be formulated as optimization problems, and we also prove a bound between their respective objectives.
On a broader scope, this work relates the two classical loss functions of nonnegative matrix factorization, namely the Frobenius norm and the generalized Kullback--Leibler divergence, and suggests new links between likelihood-based and projection-based estimation of probabilistic models.
\end{abstract}

\noindent \textbf{Keywords:} low rank modeling, coherent sets, maximum likelihood, nonnegative matrix factorization, clustering, Markov state model.

\medskip

\noindent \textbf{MSC classification:} 
65F55, 
62M05, 
37M10, 
15A23, 
60J22. 

\section{Introduction}
\label{section:Intro}

\subsection{Motivation and contributions}

One of the fundamental concepts in statistics, data science and machine learning is that seemingly complicated data has an underlying simpler structure; and filtering out this structure is the crucial task of \emph{model reduction}.
Apart from a simpler representation of the data that requires less storage, the main advantages include robustness when used for predictions as well as interpretability of the low-dimensional features.
Since data is often given in the form of matrices $A\in \smash{\bR^{m\times n}}$ \rev{(modeling, for instance, the dependencies or interactions between two variables)}, the above task typically boils down to matrix factorizations of the form $A \approx \smash{BC}$, where $B\in\smash{\bR^{m\times \lowrank}}$ and $C\in \smash{\bR^{\lowrank\times n}}$ are matrices of lower dimensionality, $\smash{ \lowrank\ll \min\{m,n\} }$; see~\cite{udell2016generalized} for a comprehensive overview.
In many applications, data is inherently nonnegative, \rev{for example when frequencies, temperatures or length measurements are involved.} Incorporating the nonnegativity constraint on the factors $B$ and $C$ can add to the interpretability of the low rank approximation, which explains the success of \emph{nonnegative} matrix factorization (NMF) over the past two decades~\cite{lee1999learning,singh2008unified,wang2012nonnegative,li2014nonnegative,gillis2020nonnegative}.
Stochastic matrices are a frequent occurrence of nonnegative matrices in applications.
Often they arise as (or from) data matrices, because the entities encoded by rows and columns of these matrices are in some probabilistic relationship.

One such application, which will be the guiding example of this paper, is determining coherent sets of a dynamical system~\cite{froyland2010transport,FrLlSa10}, as made precise in section~\ref{section:Coherent_sets}.
In a nutshell, for a left stochastic\footnote{Depending on the field, some use the term ``column stochastic''. Both mean that all entries of the matrix are nonnegative and every column sums to one.} ``transition'' matrix $P \in \bR^{m\times n}$, we seek partitions of $\{1,2,\ldots,n\}$ such that random transitions as described by $P$ from different partition elements remain ``maximally dis\-tin\-guish\-able'' in the sense defined in~\eqref{eq:coherence_set} below. We call these partition elements ``coherent sets''.
The concept arose from the desire to understand the transport properties of unsteady flows, see e.g.~\cite{aref1984,romkedar_etal_1990,haller1998finite,froyland_padberg_09,haller2013coherent,aref2017} and references in them. 

Approximating the transition matrix $P$ by a product $P \approx \lambda \Gamma$ of two left stochastic matrices $\lambda,\Gamma$ of lower dimension can be interpreted as a (soft) clustering of the cor\-re\-spond\-ing states into coherent sets, even more so if $\Gamma \in \{0,1\}^{\lowrank\times n}$ has binary entries, which corresponds to (hard) clustering.
Such a factorization is precisely what is provided by direct Bayesian model reduction (DBMR), a specific NMF algorithm proposed by \cite{gerber2017toward,gerber2018scalable} for the identification of reduced models directly from the data, i.e.\ without approximating the ``full'' transition matrix $P$ in the first place.
One of the main motivations for this paper is the application of DBMR to the coherence problem described above.
In comparison to the ``classical'' approach popularized by Froyland and others~\cite{froyland2010transport,FrLlSa10}, which relies on a truncated singular value decomposition (SVD) of the transition matrix $P$ (after a suitable rescaling, cf.\ Algorithm~\ref{alg:classical_approach_coherence}), the DBMR approach holds the following promises:
\begin{compactitem}
\item As alluded to by the name, DBMR aims to \emph{directly} infer $\lambda$ and $\Gamma$ from data, without estimating the full transition matrix $P$ in a preliminary step. The model complexity is therefore constrained from the outset, which is expected to lead to improved stability and generalization properties in high-dimensional settings where only relatively few samples are available~\cite{gerber2017toward}.   
\item 
Whilst the classical approach requires an ambiguous post-processing step to identify coherent sets, often performed by $k$-means clustering \cite{denner2017coherent}, the DBMR output provides the coherent sets directly through the matrix $\Gamma$, while the matrix $\lambda$ acts as a reduced transition matrix acting on these compound states.
\rev{The left stochasticity of $\lambda$ ensures a form of structure preservation that the classical approach lacks. In the classical method, the `reduced transition matrix' is typically not a stochastic matrix, as it may contain negative entries and columns that do not sum to one.}
\end{compactitem}
Conceptually speaking, truncated SVD provides an optimal low rank approximation with respect to the Frobenius norm as asserted by the Eckart--Young--Mirsky theorem \linebreak[4] \cite[Theorem~4.4.7]{HsEu15}, whilst DBMR corresponds to a (relaxed) maximum likelihood estimate of $(\lambda,\Gamma)$ and hence minimizes the Kullback--Leibler (KL) divergence between the full and the low rank model, see Remark~\ref{remark:Connection_ML_KL} in section \ref{ssec:DBMR}.

In other words, both SVD and DBMR can be viewed as providing a ``low-complexity'' approximation to the transition matrix $P$,
\begin{equation}
\label{eq:low compl approx}
A^* \in \argmin_{A \, \text{``low-complexity''}}  d(A,P),
\end{equation}
where $d$ alludes to a ``distance-like'' quantity between matrices, and the notion of low-complexity has to be made precise (in our context this will amount to $A$ being low rank).
Building on this parallel and following \cite{lee2000algorithms}, \cite[Section 1.2]{gillis2020nonnegative}, truncated SVD and DBMR can be succinctly formulated as solutions to \textbf{Problem~\ref{prob:Frobenius}} and \textbf{Problem~\ref{prob:DBMR}} below, respectively.
From the perspective of matrix factorization, the Frobenius norm $\| A - BC \|_F$ and the (generalized) Kullback--Leibler divergence $\KLD{A}{BC}$ (essentially applied to the vector consisting of matrix entries \cite{lee2000algorithms}) are two of the most fundamental distances minimized within NMF for the approximation $A \approx \smash{BC}$ discussed above~\cite{wang2012nonnegative}.
For this reason, our comparison of the classical approach to the coherence problem with the one by DBMR should be seen on a broader scale: We derive connections between the above two central objectives of matrix factorization and make the following two main contributions (with the terminology yet to be made precise).
\begin{enumerate}[(i)]
\item
\label{item:Intro_Contribution_Projection}
We prove that the DBMR output corresponds to the composition of the full model $P$ and an orthogonal projection $\Pi$; that is, $P \Pi = \lambda \Gamma$ (Theorem~\ref{thm:projection}).
Based on this insight we deduce that the ``degree of coherence'' contained in the low rank model bounds the degree of coherence contained in the full model from below (Proposition~\ref{prop:sval_comp}).
\item
\label{item:Intro_Contribution_Bounds}
We derive an inequality involving the two measures of distance between the full and the low rank model mentioned above---the Frobenius norm (for the SVD approach) on the one hand, and the Kullback--Leibler divergence (for DBMR) on the other hand (Theorem~\ref{thm:Frob_DBMR_bound}).
To our knowledge, this is the first quantitative relationship between these two classical objectives of matrix factorization.
To this end, we prove and utilize a novel \emph{Pinsker-type inequality}, which could be of independent interest (Proposition~\ref{prop:l2_Pinsker} in Appendix~\ref{section:Pinsker_l_2}).
\end{enumerate}
Next, we turn to the discussion of related work, and the remainder of the paper is structured as follows.
Section~\ref{section:Setup_and_Notation} introduces basic notation, while the problem formulation of coherent set identification and the most common computational approach are discussed in section~\ref{section:Coherent_sets}.
In section~\ref{sec:Reducedmodels} we summarize the low rank modeling approach referred to as DBMR. The material in this first part of the paper is encapsulated in \textbf{Problem~\ref{prob:Frobenius}} and \textbf{Problem~\ref{prob:DBMR}} and those serve as the starting point for the analysis in the subsequent second part.
We emphasize that the detailed exposition of existing approaches in sections~\ref{sec:coh_svd} and~\ref{sec:Reducedmodels} is deliberate, in order to have a self-contained manuscript allowing for a direct comparison of these methods.
The novel contributions are derived in sections~\ref{sec:PTO} and~\ref{sec:relation_FrobKL}, then they are illustrated by numerical examples in section~\ref{sec:numerical_examples}, followed by a conclusion in section~\ref{sec:Outlook}.

\subsection{Related Work}
\label{section:Related_Work}

\paragraph{Coherent sets and canonical variables.}
The term ``coherent set'' stems from fluid dynamics and dynamical systems~\cite{froyland2010transport,FrLlSa10}, and the concept has been preceded by transport-related considerations around the term ``coherent structures'' (see the references in these papers). Coherent sets as we use them here have been applied in at\-mo\-spher\-ic~\cite{froyland2010transport} and oceanographic~\cite{froyland2015studying} applications.

The abstract linear-algebraic problem, that the coherence problem boils down to in our setting, is equivalent to Canonical Correlation Analysis~\cite{Hot36} (see \cite{klus2019kernel} for this observation) and it has also been transferred to other applications, e.g., nonequilibrium statistical physics~\cite{koltai2016metastability,KWNS18,wu2020variational}.

\paragraph{Orthogonal NMF and clustering.}
Coherent sets are a special form of clusters. Clus\-ter\-ing itself is strongly related to NMF, in particular, through a modification called Orthogonal NMF and a weighted version of $k$-means clustering \cite{pompili2014two}. Centered around this observation, a body of work on clustering and community detection via NMF has developed~\cite{ding2006orthogonal,yang2013overlapping,wu2018nonnegative,lu2020community,ortiz2022community}; see~\cite{li2014nonnegative} for a survey. From a broader perspective, the generality of the formulation in \eqref{eq:low compl approx} has been exploited to derive other modifications of NMF, for instance replacing the distance-like quantity $d$ \cite{fevotte2009nonnegative,fevotte2011algorithms}; see \cite{gillis2020nonnegative} for an overview. We would also like to point the reader to \cite{shashanka2008probabilistic} which directly links NMF and probabilistic modeling with latent variables.
The orthogonality constraint implicitly appears in DBMR as well: The rows of the DBMR factor $\Gamma$ turn out to be orthogonal, cf.\ Remark~\ref{rem:OrthNMF}, hence, after rescaling of the factors $\lambda$ and $\Gamma$, DBMR satisfies the constraints of orthogonal NMF.

\paragraph{Probabilistic models and estimation.}
With a probabilistic model in the background, the NMF \emph{approximation} problem can be phrased as an \emph{estimation} problem. 
This connection is used in Probabilistic Latent Semantic Analysis (PLSA)~\cite{hofmann1999probabilistic,hofmann2001unsupervised}, where in the original motivation rows and columns of the data matrix correspond to words and documents, respectively.
It has been shown in~\cite{ding2006nonnegative} that PLSA is equivalent to NMF if the latter is formulated in terms of the (generalized) Kullback--Leibler divergence.
As mentioned above as well as in Remark~\ref{remark:Connection_ML_KL}, the Kullback--Leibler divergence is associated with maximum likelihood estimation, hence both PLSA and DBMR essentially compute \emph{most likely low rank models}.
Indeed, Gerber and Horenko compare DBMR with PLSA in \cite[SI sec.~5]{gerber2017toward} and demonstrate superior scaling performance of DBMR for large problems.

\paragraph{Projection-based approximation.}
Considering our contributions \eqref{item:Intro_Contribution_Projection} and \eqref{item:Intro_Contribution_Bounds} mentioned above, we establish a new link between likelihood-based and pro\-jec\-tion-based ap\-prox\-i\-ma\-tion of probabilistic models.
The latter class has also been extensively studied~\cite{deuflhard2005robust,huisinga2005metastability,de2008shrinking,schutte2013metastability}, and provides bounds of the form where the eigenvalue error between original and projected model is bounded from above by the projection error of the associated eigenvectors.
Sharper bounds hold if the model is reversible, essentially meaning that the probability matrix that one seeks to approximate is self-adjoint with respect to a suitable inner product.
This applies to the classical approach to the coherence problem as well, since the singular value decomposition of $A$ is equivalent to the eigenvalue decomposition of $A^\top A$, which is self-adjoint by construction.

\section{Notation}
\label{section:Setup_and_Notation}

Throughout this paper, we denote by $\bR^{r}$ the Euclidean space of dimension $r\in\mathbb{N}$, equipped with the corresponding Euclidean norm $\norm{\cdot}_2$ and inner product $\langle \cdot, \cdot \rangle_2$. 
For a vector $w \in \bR_{>0}^{r}$ with positive entries and $x,y \in \mathbb{R}^r$, let
\begin{itemize}
        \item
        $w^{-1} \coloneqq (w_{j}^{-1})_{j=1,\dots,r}$ denote the componentwise inverse of $w$;
	\item 
	$D_{w} \coloneqq \mathrm{diag}(w) \coloneqq (w_{i}\delta_{ij})_{i,j=1}^r \in \bR^{r\times r}$ denote the corresponding diagonal matrix with entries $w_{i}$ on its diagonal and $\delta_{ij}$ being the Kronecker delta;
	\item
	$\langle x,y \rangle_w := x^\top D_w y = \langle D_w^{1/2}x, D_w^{1/2}y \rangle_2$ denote the $w$-weighted inner product and $\norm{\cdot}_w$ the associated norm.
    We call $x,y \in \bR^{r}$ $w$-orthogonal if $\langle x,y \rangle_w = 0$.
\end{itemize}
For a matrix $A \in \bR^{m \times n}$, $\smash{ \| A \|_F := \big( \sum_{i=1}^m \sum_{j=1}^n |a_{ij} |^2 \big)^{1/2} }$ denotes the Frobenius norm of $A$ and $\sigma_{k}(A)$ the $k$-th largest singular value. We abbreviate the $j$-th column of $A$ by $A_{\bullet j}$.
Throughout, a \emph{left stochastic matrix} $A \in \bR^{m\times n}$ will be a nonnegative matrix such that the entries in each of its columns sum to one.
We will \emph{not} require it to be square, slightly abusing standard terminology.

In the bulk of this work, we consider discrete state spaces modeled by finite sets of the form $[r] \coloneqq \{1,\ldots,r\}$, $r\in\mathbb{N}$. Probability measures on $[r]$ will be identified with probability vectors $p \in \mathbb{R}^r_{\ge 0}$, that is, with vectors having nonnegative entries that sum to one. 
For such probability vectors $u,v\in\bR_{\geq 0}^{m}$ we define the Kullback--Leibler divergence between $u$ and $v$ by
$\KLD{u}{v} = \sum_{i=1}^{m} u_{i} \log \tfrac{u_{i}}{v_{i}}$ if $u$ is absolutely continuous with respect to $v$ (interpreted as probability measures) and $\KLD{u}{v} = \infty$ otherwise.
Here and in what follows, we use the conventions $\log 0 \coloneqq -\infty$, $\tfrac{0}{0} \coloneqq 0$, and set $c \log 0$ to $0, -\infty,+\infty$ for $c=0$, $c>0$, $c<0$, respectively.
Finally, we denote by $\mathds{1}_E \in \mathbb{R}^r$ the indicator vector associated to a subset $E\subseteq [r]$: its $i$-th entry is $1$ if $i\in E$ and $0$ otherwise.

\section{Coherent sets}
\label{section:Coherent_sets}

Following \cite{froyland2010transport}, we introduce the concept of coherent sets induced by a stochastic (or deterministic) transition. In this context, we adapt a space-discrete setting: This can either be viewed as an approximation to a continuous-space dynamics, or else as a genuinely discrete system. Historically, the former case motivated the construction, and a straightforward connection between the space-continuous and discrete settings is briefly summarized in Appendix~\ref{app:cont_discr}.

\subsection{Problem setup}

Let $(\Omega,\Sigma,\bP)$ be a probability space and consider two random variables $X:\Omega\to [n]$ and $Y:\Omega\to [m]$ with distributions $p \in \bR_{\geq 0}^{n}$ and $q \in \bR_{\geq 0}^{m}$, respectively, modeling the state of a random system at initial and final time. 
Here, $n,m\in\bN$ denote the sizes (cardinalities) of the respective discrete state spaces. Often, $X \sim p$ is considered to be an \emph{input} and $Y \sim q$ to be an \emph{output},\footnote{The present situation is sometimes called a \emph{Bayesian relation model}~\cite{gerber2017toward}.
It is a specific, ``two-layer'' instance of a \emph{Bayesian network} or \emph{decision network}, see, for instance, \cite{heckerman1998tutorial}.} which is why the elements of $[n]$ and $[m]$ will be called input and output categories/states, respectively.
We assume that $X$ and $Y$ are coupled through the left stochastic transition matrix $P\in\bR^{m\times n}$,
$
P_{ij} = \bP \left[ Y=i \mid X=j\right],
$
and, hence, follow the joint distribution $\bP \left[ Y=i, X=j\right] = P_{ij}p_j$. Using matrix notation we write $(X,Y) \sim PD_p$ and note the relation $q = Pp$.

We next recall the \emph{coherence problem} for the input-output pair~$(X,Y)$.
On an intuitive level, we would like to obtain a coarse-grained understanding of the situation, for example allowing us to forecast $Y$ given $X$, in a conceptually simple and computationally tractable, yet faithful way.
To this end, we seek nontrivial\footnote{We say that the partition $(E_k)_{k=1}^r$ is nontrivial if none of the sets $E_k$ are empty.} partitions $\mathcal{E} := (E_k)_{k=1}^r$ of $[n]$ and $\cF := (F_k)_{k=1}^r$ of~$[m]$  such that $X\in E_k$ implies $Y\in F_k$ with high probability; or, as we will say, $(E_k,F_k)$ form a \emph{coherent pair}.
The number of subsets $r$ is fixed for now and roughly corresponds to the complexity of the reduced model; typically we shall aim for
$r\ll \min(m,n)$. 
Following \cite{froyland2010transport}, we formulate the following two heuristic conditions for $(E_k,F_k)$ to form a coherent pair:
\begin{enumerate}
\setlength{\itemsep}{0pt}
    \item $\mathbb{P}\left[ Y\in F_k \mid X\in E_k \right] \approx 1$, and
    \item $\mathbb{P}[X\in E_k] \approx \mathbb{P}[Y\in F_k]$.
\end{enumerate}
The first condition demands that states from $E_k$ transition predominantly to~$F_k$. The second condition ensures that, in addition, \emph{exclusively} the states from $E_k$ transition to~$F_k$, up to a small error. Taken together, these two conditions describe the scenario that the pair  $(E_k,F_k)$ evolves coherently, approximately unaffected by the dynamics on the complements $[n] \setminus E_k$ and~$[m] \setminus F_k$. In the standard matrix notation this means that if we group the columns and rows of $P$ according to the $E_k$ and the $F_k$, respectively, then the resulting matrix would have a pronounced block structure: The blocks corresponding to $E_k, F_k$ with the same index $k$ dominate the associated rows and columns, see figures~\ref{fig:CSI_1} and~\ref{fig:CSI_2} below. 

An attempt to accordingly partition the system into a fixed number $r\in\bN$ of coherent pairs is to consider the maximization problem
\begin{equation}
    \label{eq:coherence_set}
    \max_{\substack{\cE,\ \cF \\ \bP[X\in E_k] \approx \bP[Y\in F_k]}} \sum_{k=1}^r \mathbb{P}\left[ Y\in F_k \mid X\in E_k \right],
\end{equation}
carried out over all (nontrivial) partitions $\cE = (E_k)_{k=1}^r$ of $[n]$ and $\cF = (F_k)_{k=1}^r$ of~$[m]$ that respect the second condition from above.
Here, no partition elements are allowed to be empty sets.
To make this a well-posed problem, the constraint $\mathbb{P}[X\in E_k] \approx \mathbb{P}[Y\in F_k]$ needs to be given a quantitative meaning. Note that simply requiring equality may easily render the set of admissible solutions empty. Irrespective of this choice, \eqref{eq:coherence_set} tends to be a computationally hard combinatorial optimization problem; thus  we will later discuss a numerically more approachable relaxation (see Problem~\ref{prob:Frobenius} below).

Any partition $\cE = (E_{1},\dots,E_{\lowrank})$ of $[n]$ can be encoded by an ``assignment'' $\gamma:[n] \to [\lowrank]$ or an ``affiliation matrix'' $\Gamma\in \{ 0, 1\}^{\lowrank\times n}$ via
\begin{equation}
\label{equ:correspondence_partition_affiliation_matrix}
\gamma(j)
\coloneqq
k \text{ with } j \in E_{k},
\qquad
\Gamma_{kj}
\coloneqq
\delta_{k \gamma(j)}
=
\begin{cases}
	1 &\text{if } j \in E_{k},
	\\
	0 &\text{if } j \notin E_{k}.
\end{cases}
\end{equation}
The partition $\cE$ can then be characterized by $E_{k} = \gamma^{-1}( \{k\} )$.
To fix the terminology, we introduce the following notions:
\begin{definition}[Affiliation matrix]
\label{def:hard_affiliation_matrix}
We call a left stochastic matrix with binary entries $\Gamma\in \{ 0, 1\}^{\lowrank\times n}$, $\lowrank,n \in \bN$, a \emph{(hard) affiliation matrix}.
We call the unique map $\gamma:[n] \to [\lowrank]$ satisfying $\Gamma_{kj} = \delta_{k \gamma(j)}$ the \emph{assignment} corresponding to $\Gamma$. \rev{The image set (or range) $\mathrm{Ran} \gamma := \gamma([n])$ will be called the set of \emph{active} latent states.}
\end{definition}

As explained above, the distribution of the pair $(X,Y)$ can be  described in terms of the initial distribution $p$ and the transition matrix~$P$. In practical settings,
these objects are typically ap\-prox\-imated by their empirical (maximum likelihood) estimates based on finitely many samples, $\bm{D} = (X(u), Y(u))_{u=1}^S$, $S \in \bN$, where the pairs $(X(u), Y(u))$ are assumed to be independent and identically distributed copies of~$(X,Y)$.
The data $\bm{D}$ leads us to the \emph{count matrix} $N \in \mathbb{N}_{0}^{m \times n}$ and the \emph{empirical frequency estimators} $\hat{p} \in \bR_{> 0}^{n}$ and $\hat{P} \in \bR_{\geq 0}^{m \times n}$ given by
\begin{equation}
\label{eq:N_and_pP_estimators}
N_{ij} \coloneqq \# \{ u \mid X(u) = j, \, Y(u) = i \},
\qquad
\hat{p}_j = \frac{1}{S} \sum_{i=1}^m N_{ij},
\qquad
\widehat{P}_{ij} = \frac{N_{ij}}{\sum_{i'=1}^m N_{i'j}}.
\end{equation}
Here and in the following we assume the row and column sums of $N$ to be strictly positive for each $i \in [m]$ and $j \in [n]$; otherwise, the associated input or output categories are removed and the sets $[m]$ and $[n]$ are restricted and relabeled accordingly.
Note that $\hat{p}$ and $\hat{P}$ are, in fact, maximum likelihood estimates, cf.\ section \ref{sec:Reducedmodels} and equation \eqref{eq:original_likelihood} in particular.
Note that the maximum likelihood estimate of $q$, $\hat{q} \coloneqq (\sum_{j=1}^{n} N_{ij}/S)_{i \in [m]} \in \bR_{>0}^{m}$ satisfies $\smash{ \hat{q} = \hat{P} \hat{p} }$, inheriting the above-mentioned relation $q = Pp$ of the exact quantities.

\medskip
{
\centering
\fbox{
\begin{minipage}{0.9\textwidth}
As the estimation problem (i.e., the comparison of $P$ with $\widehat{P}$, $p$ with $\hat{p}$, and $q$ with $\hat{q}$) is not the focus of the current work, we will not distinguish the exact quantities from their frequency estimators from now on, i.e.\ we assume them to coincide.
In particular, for simplicity of notation, we will use $P,p,q$ to denote the empirical estimators.
\end{minipage}
}

}
\medskip

\rev{In the case of DBMR, the transition matrix $P$ is approximated by a reduced (i.e., low-rank) matrix $\Lambda = \lambda \Gamma$, where $\Lambda \in \bR^{m\times n}$, $\lambda \in \bR^{m\times \lowrank}$, and $\Gamma \in \bR^{\lowrank\times n}$ are left stochastic matrices. This notation will be used consistently throughout this manuscript.}

\rev{Additionally, as discussed in section~\ref{sec:coh_svd}, it is beneficial to normalize the transition matrices $P$ and $\Lambda$ with respect to the reference distributions $p$ and~$q$:
\begin{equation}
\label{eq:P_tilde}
P' := D_q^{-1} P D_p \in \bR^{m\times n},
\qquad
\tP \coloneqq D_q^{-1/2} P D_p^{1/2} \in \bR^{m\times n},
\qquad
\tLam \coloneqq D_q^{-1/2} \Lambda D_p^{1/2}.
\end{equation}
Note that the normalized transition matrix $P'$ transports \emph{densities} with respect to the reference measure $p$ at the initial time to densities with respect to the reference measure $q$ at the final time. Given that $q = Pp$, it follows immediately that $\smash{ P'\mathds{1}_{[n]} = \mathds{1}_{[m]} }$.
As we will demonstrate, using $\tilde{P}$ instead of $P'$ converts $p$-orthogonality into $\ell^{2}$-orthogonality, simplifying the application of results based on singular value decomposition and, in particular, the Courant--Fischer theorem.}



In this context, we also define our main measure of coherence within the pair $(P,p)$, the intuition behind which will be explained in detail in section~\ref{sec:coh_svd}.

\begin{definition}[Degree of coherence]
\label{def:coherence}
    We define the \emph{degree of $r$-coherence} $\cC_{r}(p,P)$ in the pair $(p,P)$ of input distribution and transition matrix as the sum $\sum_{i=1}^r \sigma_i(\tP)$ of the $r$ leading singular values of~$\smash{ \tP \coloneqq D_q^{-1/2} P D_p^{1/2} }$ with $q = Pp$.
    We simply say that this is the \emph{degree of coherence} in $P$ and write $\cC(P)$, if the integer $r$ and the reference distribution $p$ are clear from the context.
\end{definition}

\subsection{Classical approach to coherent sets}
\label{sec:coh_svd}

The following relaxation of the coherence problem \eqref{eq:coherence_set} can be found in \cite{froyland2010transport} for a two-partition, and in \cite[sec.~3.3]{denner2017coherent} for an arbitrary number of coherent pairs. For details, the reader is referred to these works.
Using \eqref{eq:P_tilde} we obtain
\begin{equation}
    \label{eq:coherence_functional1}
    \mathbb{P}\left[ Y\in F \mid X\in E \right] = \frac{\sum_{i\in F}\sum_{j\in E} P_{ij}p_j}{\sum_{j\in E} p_j} = \frac{\langle \mathds{1}_F, PD_p \mathds{1}_E\rangle_2}{\langle \mathds{1}_E, D_p\mathds{1}_E \rangle_2} = \frac{\langle \mathds{1}_F, P' \mathds{1}_E\rangle_q}{\| \mathds{1}_E \|_p^2}.
\end{equation}
Note that, if $\cE = (E_k)_{k=1}^r$ is a partition of $[n]$, then $\mathds{1}_{E_k},\mathds{1}_{E_l}$ are $p$-orthogonal whenever $k \neq l$, i.e.\ $\langle \mathds{1}_{E_k}, \mathds{1}_{E_l} \rangle_p = 0$.
To obtain a computationally feasible relaxation of the coherent set problem~\eqref{eq:coherence_set}, we relax the condition that the vectors $\mathds{1}_{E_k}$ should be indicator vectors, but keep their $p$-orthogonality. 
We thus replace $\mathds{1}_{E_k}$ by vectors $e_k\in\mathbb{R}^n$ and $\mathds{1}_{F_k}$ by vectors~$f_k\in \mathbb{R}^m$ (how these new vectors can be related back to partition elements is explained below).
Note that, although we do not require the system $f_k$ to be $q$-orthogonal at this stage, this property will be a consequence of our analysis (see below). The constraint $\mathbb{P}[X\in E_k] \approx \mathbb{P}[Y\in F_k]$ can now be required with equality and translates to $\| e_k\|_p = \| f_k\|_q$, yielding the relaxed coherent-set maximization problem
\begin{equation}
    \label{eq:coherence_relax1}
    \max_{\substack{e_1,\ldots,e_r\\ f_1,\ldots,f_r}} \sum_{k=1}^r \frac{\langle f_k, P' e_k \rangle_q}{\|e_k\|_p \|f_k\|_q},
\end{equation}
subject to $e_1,\ldots,e_r$ being a $p$-orthogonal system in~$\mathbb{R}^n$. Since \eqref{eq:coherence_relax1} is invariant under (positive) scaling of $e_k$ and $f_k$, we can further restrict the optimization to unit vectors.
By noting that $f_k = P' e_k / \| P' e_k\|_q$ is a maximizer of the summands for fixed $e_{1},\dots,e_{r}$, this further reduces to
 
\begin{equation}
    \label{eq:coherence_relax2}
    \max_{(e_1,\ldots,e_r) \text{ $p$-orthonormal}} \sum_{k=1}^r \| P' e_k \|_q
    \quad
    \Longleftrightarrow
    \quad
    \max_{(\tilde{e}_1,\ldots,\tilde{e}_r) \text{ orthonormal}} \sum_{k=1}^r \| \tP \tilde{e}_k \|_2,
\end{equation}
after observing that any $p$-orthonormal system $(e_{1},\dots,e_{r})$ in $\bR^{n}$ can be written as $\smash{ e_{k} = D_{p}^{-1/2} \tilde{e}_{k} }$, $k \in [r]$,
for some orthonormal system $(\tilde{e}_1,\ldots,\tilde{e}_r)$ in $\bR^{n}$ and that $\norm{P' e_{k}}_{q} = \smash{\norm{D_q^{1/2} P' D_p^{-1/2} \tilde{e}_k}_{2}} = \norm{\tP \tilde{e}_k}_{2}$.
By (the singular value version of) the Courant--Fischer theorem stated in Theorem \ref{thm:Courant_Fischer_singular_values_version}, the right- and left-hand side of \eqref{eq:coherence_relax2} are maximized by the $r$ leading right singular vectors $\tilde{e}_{k}$ of $\tP$ and by $\smash{ e_{k} = D_{p}^{-1/2}\tilde{e}_{k} }$, $k\in [r]$, respectively. The optimal value is equal to the sum of the leading $r$ singular values.

Via the Eckart--Young--Mirsky theorem \cite[Theorem~4.4.7]{HsEu15}, the task \eqref{eq:coherence_relax2} is equiv\-a\-lent to finding the best rank-$r$ ap\-prox\-i\-ma\-tion to $\tP$ with respect to the Frobenius norm and also the spectral norm:
\begin{equation}
\label{equ:leading_r_singular_modes}    
\arg\min_{\substack{\tP_{\textup{red}}\in \mathbb{R}^{m\times n} \\[3pt] \mathrm{rank}\,\tP_{\textup{red}} = r}} \big\| \tP - \tP_{\textup{red}}
\big\|_F = \sum_{k=1}^r \sigma_k(\tP) \widetilde{f}_k \widetilde{e}_k^\top.
\end{equation}
In this light, the relaxed coherence problem is equivalent to a low rank approximation problem of the weighted transition matrix $\tP$. To summarize the discussion so far, we formulate the relaxed coherence problem as follows:

\medskip
{
\centering
\fbox{
\begin{minipage}{0.9\textwidth}
\begin{problem}[Relaxed coherence] 
\label{prob:Frobenius}
Given a count matrix $N\in \mathbb{N}^{m \times n}$ and a fixed rank $r \le \min(m,n)$, find a rank-$r$ matrix $\tP_{\textup{red}} \in \mathbb{R}^{m \times n}$ that minimizes $\big\| \smash{\tP} - \tP_{\textup{red}}
\big\|_F$, where $\smash{\tP}$ is constructed from $N$ via \eqref{eq:N_and_pP_estimators} and \eqref{eq:P_tilde}.
\end{problem}
\end{minipage}
}

}
\medskip

We emphasize that the right-singular vectors of a rank-$r$ matrix $\tP_{\textup{red}} \in \mathbb{R}^{m \times n}$ solving Problem \ref{prob:Frobenius}\rev{, which coincide with the leading $r$ right singular vectors of $\tP$ by~\eqref{equ:leading_r_singular_modes},} satisfy the (right) optimality condition in~\eqref{eq:coherence_relax2}.
Due to the left stochasticity of $P$ the leading singular value of $\tP$ can be shown\footnote{Since $\smash{ q^{1/2} = \tP p^{1/2} }$ and $p,q$ are probability vectors, i.e.\ $\|p\|_2 = \|q\|_2 = 1$, we have~$\sigma_1 \ge 1$. To show $\sigma_1 \le 1$, note that $\sigma_1^2$ is the leading eigenvalue of $\smash{ \tP^{\top}\tP }$ and hence also of the similar matrix~$\smash{ D_p^{-1/2} \tP^{\top}\tP D_p^{1/2} = P^{\top} D_q^{-1} P D_p }$. It is a straightforward calculation that $\| P^{\top} D_q^{-1} P D_p \|_{\infty} = 1$, thus $\sigma_1^2 \le 1$.} to be~$\sigma_{1}=1$, with corresponding right singular vector~$\smash{ p^{1/2} := (\sqrt{p_1}, \ldots, \sqrt{p_n})^\top }$.
The maximizers $e_k$ of the relaxed coherence problem \eqref{eq:coherence_relax2} need not be approximate indicator vectors, but for well-pronounced coherent dynamics, their linear span (and likewise that of the $f_k$) is going to be close to the linear span of indicator vectors~\cite{koltai2016metastability}, see also~\cite{deuflhard2005robust}. This observation suggests, by viewing the singular vectors $e_k$ as \emph{features} of the states $j\in [n]$, various approaches to extract a coherent $r$-partition from the singular vectors: A number of algorithms exist, differing in how post-processing steps are handled, and whether hard or soft clusters are sought. One can use k-means clustering \cite[sec.~3.3]{denner2017coherent}, \cite{banisch2017understanding}, PCCA+ ~\cite{deuflhard2005robust,roblitz2013fuzzy}, or SEBA~\cite{froyland2019sparse}. The final-time members of the coherent pairs are obtained in a similar manner from the vectors $\smash{ f_k = D_q^{1/2}\tilde{f}_k }$, where $\tilde{f}_k$ are the left singular vectors of~$\tP$, and are matched to the initial-time members such that the objective in~\eqref{eq:coherence_set} is maximal.
We summarize this \emph{classical approach} to coherent pairs, using k-means clustering in the postprocessing step, in Algorithm~\ref{alg:classical_approach_coherence}.

\begin{algorithm}[htb]
\caption{Classical approach to coherent pairs.}
\label{alg:classical_approach_coherence}
\begin{algorithmic}[1]
    \State \textsc{Input}: Data subsumed into the count matrix $N$, number of coherent pairs $r\in\mathbb{N}$
    \State Compute $P,p,\tP$ via~\eqref{eq:N_and_pP_estimators} and~\eqref{eq:P_tilde}
    \State Compute SVD of $\tP$: $\tP = U\Sigma V^{\top}$ with orthogonal matrices $U\in \bR^{m \times m}$, $V\in \bR^{n \times n}$ and rectangular matrix
    \[
    \Sigma = \begin{pmatrix}
    \textrm{diag}(\sigma_{1},\dots,\sigma_{s}) & 0 \\ 0 & 0
    \end{pmatrix} \in \bR^{m \times n}
    \]
    with the singular values $\sigma_{1} \geq \cdots \geq \sigma_{s} > 0$ of $\tP$ on its diagonal
    \State Truncate to the $r$ leading singular values to obtain a low rank approximation $\tP_{\textup{red}}$ of $\tP$,
    \[
    \tP_{\textup{red}}
    =
    U \begin{pmatrix} \diag (\sigma_{1},\dots,\sigma_{r}) & 0 \\ 0 & 0
    \end{pmatrix} V^{\top},
    \]
    and transform back to obtain a low rank approximation $P_{\textup{red}} \coloneqq D_{q}^{-1/2} \tP_{\textup{red}} D_{p}^{1/2}$ of~$P$
    \State Use the first $r$ right singular vectors $\tilde{e}_{1},\dots,\tilde{e}_{r}$ of $\tP$ (the first $r$ columns of $V$) as features for a k-means-clustering of $[n]$ into $r$ clusters \cite[sec.~3.3]{denner2017coherent}, leading to the partition $\cE = (E_k)_{k=1}^r$ of~$[n]$; and obtain similarly the partition $\cF = (F_k)_{k=1}^r$ of~$[m]$ using the left singular vectors $\tilde{f}_{1},\dots,\tilde{f}_{r}$ of~$\tP$
    \State ``Match'' the partitions $\cE$ and $\cF$ by reordering $\cF$ such that the objective in~\eqref{eq:coherence_set} is maximized
    \State \textsc{Output}: $\tP_{\textup{red}}$, $P_{\textup{red}}$, $\cE$, $\cF$
\end{algorithmic}
\end{algorithm}

The relationship between \eqref{eq:coherence_functional1} and \eqref{eq:coherence_relax1} shows that the value of the latter is a measure for the coherence of an $r$-partition which motivates our usage of this value as the ``degree of coherence'' in Definition~\ref{def:coherence}.
Since the optimal value for \eqref{eq:coherence_relax1} is the sum $\sum_{i=1}^r \sigma_i(\tP)$ over the $r$ leading singular values of~$\tP$, it follows that the degree of coherence of an $r$-partition is bounded from above by~$r$.
In other words, tightness of the bound $\sum_{i=1}^r \sigma_i \le r$ indicates coherence of  
 the system at hand.
In the case of complete coherence ($\sum_{i=1}^r \sigma_i = r$), the transition matrix $P$ (and hence $\smash{\tP}$ as well) has the form where there are partitions $(E_k)_{k=1}^r$ of $[n]$ and $(F_k)_{k=1}^r$ of $[m]$ such that $P_{ij}>0$ implies $i\in E_k$ and~$j\in F_k$ for the same~$k$.

The question of how to choose the number $r$ of coherent sets can be answered by considering the singular spectrum of $\tP$ \cite{froyland2013analytic}:
The aim is to have the leading $r$ singular values close to one (and the corresponding singular vectors close to linear combinations of indicator vectors $\mathds{1}_{E_k}$ corresponding to some partition $\cE = (E_k)_{k=1}^r$), while the remaining singular values should, ideally, be substantially smaller than 1 (indicating no further coherence within the system).
Consequently, the choice of $r$ should be informed by the values and gaps in the spectrum.

\begin{remark}
\label{rem:CCA}
The observation, mentioned in section~\ref{section:Intro}, that the algebraic form of the relaxed coherence problem is equivalent to Canonical Correlation Analysis (CCA), has been made in~\cite{klus2019kernel}. CCA is commonly described as a method that finds bases of the input and output space with maximal correlation under an assumed probabilistic relationship.
\end{remark}

\section{Likelihood-based estimation from data}
\label{sec:Reducedmodels}

\subsection{Full versus low rank models}
\label{sec:Bayesian_model}

Solving the relaxed coherence problem as presented so far is a two-step procedure: First, $p$ and $P$ are estimated from observational data, and second, the dominant singular vectors are extracted from $\tP$ (see Problem \ref{prob:Frobenius}). Thus, it is natural to ask whether a low rank approximation of $\tP$ can be obtained directly, merging estimation and projection.

For this purpose, recall the empirical estimators $\widehat{p}$ and $\widehat{P}$ for $p$ and $P$ from \eqref{eq:N_and_pP_estimators}.
It is classical and straightforward to show that $\pi^{\ast} = \widehat{p}$, $\Lambda^{\ast} = \widehat{P}$ maximize the likelihood of the data $\bm{D}$,
\begin{equation}
\label{eq:original_likelihood}
\mathbb{P}\left[\bm{D} \mid \pi, \Lambda \right] = \prod_{i=1}^m \prod_{j=1}^n \pi_j^{N_{ij}} \Lambda_{ij} ^{N_{ij}}\,.
\end{equation}
Reiterating the discussion from section \ref{section:Coherent_sets}, we overload the notation, dropping the hats in $(\pi^*,\Lambda^*) = (\widehat{p},\widehat{P})$, and hence denoting the true objects $(p,P)$ and their empirical (maximum likelihood) estimators by the same symbols.
Traditionally, these maximum likelihood estimates are then used to approximate $\tP$ and its $r$ leading singular modes, from which a coherent $r$-partition can be extracted in various ways, see section~\ref{sec:coh_svd}.
Note that this approach requires the approximation of $n + mn$ probability values \rev{for the estimation of $(p,P)$ in \eqref{eq:N_and_pP_estimators}}, some of which might be very small, hence a large number of samples if often required.

However, if we are interested in a fixed number $\lowrank\ll \min ( m,n )$ of singular modes, the effective information of interest is already represented by $\mathcal{O}(\lowrank (n+m))$ quantities, and hence it might be expected that a direct approach (circumventing the estimation of $p$ and $P$) could provide accurate results based on a significantly reduced number of samples.
More specifically, we will contrast the traditional procedure with a direct estimation of $P$ by a low rank transition matrix $\Lambda = \lambda \Gamma$, where $\lambda \in \bR^{m\times \lowrank}$ and $\Gamma \in \bR^{\lowrank\times n}$ are left stochastic matrices still fulfilling~\rev{$q = \Lambda p$}, in the maximum likelihood framework of DBMR.
Indeed, DBMR requires the computation of only $\lowrank (n+m)$ matrix entries and thus its output promises to be of low variance, even in the regime where only a few samples are available~\cite[Theorem; in particular Eq.~7]{gerber2017toward}.

In terms of interpretability, our alternative approach has another crucial advantage:
DBMR maximizes a lower bound of the log-likelihood function, the optimum $\Lambda^{\ast} = \lambda^{\ast} \Gamma^{\ast}$ of which turns out to comprise a (hard) affiliation matrix $\Gamma^{\ast}$ in the sense of Definition~\ref{def:hard_affiliation_matrix} (though starting with the assumption on $\Gamma$ to be only left stochastic).
As mentioned in section \ref{section:Coherent_sets}, there is a one-to-one correspondence between such affiliation matrices and partitions of $[n]$, providing a meaningful partition $\cE := (E_k)_{k=1}^r$ of~$[n]$ \emph{without} any post-processing steps, while $\lambda$ corresponds to a `reduced transition matrix' on these compound states.
A natural choice for the output partition $\cF := (F_k)_{k=1}^r$ of~$[m]$ is given by
\begin{equation}
\label{eq:Sets_F_k}
F_{k} = \{ i \in [m] \mid \lambda_{ik} = \max_{k' \in [\lowrank]} \lambda_{ik'} \}
\end{equation}
(with arbitrary choice of category in case of non-uniqueness of the maximizer).

\subsection{Direct Bayesian model reduction (DBMR)}
\label{ssec:DBMR}

In the following, we will discuss how the low rank model $\Lambda = \lambda \Gamma \approx P$, where $\lambda \in \bR^{m\times \lowrank}$ and $\Gamma \in \bR^{\lowrank\times n}$ are left stochastic matrices, can be estimated from the data in a (relaxed) maximum likelihood fashion similar to the derivation of \eqref{eq:N_and_pP_estimators} from \eqref{eq:original_likelihood}.
This approach, proposed by Gerber and Horenko \cite{gerber2017toward}, achieves both estimation and model reduction simultaneously, without the need to estimate the full model $P$ in the first place.
In other words, we assume that the output depends on the input through a \emph{latent variable} $Z \in [\lowrank]$, illustrated by the graphical model $\smash{ X \stackrel{\Gamma}{\longrightarrow} Z \stackrel{\lambda}{\longrightarrow} Y }$, encapsulating the conditional independence assumption $\mathrm{Law}[Y \mid X,Z] = \mathrm{Law}[Y \mid Z]$.
In this case, $\Gamma$ and $\lambda$ correspond to the transition matrices to and  from the latent state, respectively:
\begin{equation} \label{eq:cond_prob}
\Gamma_{kj}=\mathbb{P}\big[Z=k \mid X=j\big],
\qquad
\lambda_{ik}=\mathbb{P}\big[Y = i \mid Z = k \big].
\end{equation}
Note that we can interpret $\Gamma_{kj} \in [0,1]$ as a (soft) \emph{affiliation} of input category~$j$ to the latent state~$k$. 
As we will see below, the DBMR solution in fact yields binary estimates $\Gamma_{kj}\in \{ 0,1 \}$, interpreted as hard affiliations as in Definition~\ref{def:hard_affiliation_matrix}.

Since \eqref{eq:original_likelihood} can be split into two optimization problems, one for $\pi$ and one for $\Lambda$, estimating the factors $\lambda$ and $\Gamma$ from the observation data~$\bm{D}$ via maximum likelihood estimation reduces to maximizing 
\begin{equation}
\label{eq:new_prob}
\ell(\lambda,\Gamma)
\coloneqq
\log \mathbb{P}[\bm{D}\mid\lambda\Gamma]
=
\sum_{i=1}^{m} \sum_{j=1}^{n} N_{ij}\log(\lambda\Gamma)_{ij},
\end{equation} 
over all pairs $(\lambda,\Gamma)$ of left stochastic matrices.
Since the full model $P = \widehat{P}$ maximizes \eqref{eq:original_likelihood} without the low rank constraint, we obtain the natural bound~$\ell(\lambda,\Gamma) \le \ell(P,\Id_n)$.
\rev{Since \eqref{eq:new_prob} might be challenging to maximize computationally, \cite{gerber2017toward} suggest to relax the problem and maximize a lower bound of $\ell$ instead,
\begin{equation}
\label{eq:DBMR_objective}
\hat\ell(\lambda,\Gamma)
:=
\sum_{i=1}^{m} \sum_{j=1}^{n} \sum_{k=1}^{\lowrank}
N_{ij}\Gamma{}_{kj}\log \lambda_{ik}
\leq
\ell(\lambda,\Gamma),
\end{equation}
where we have applied Jensen's inequality. We note that both \eqref{eq:new_prob} and \eqref{eq:DBMR_objective} do not admit closed-form maximizers, but the fact that $\Gamma_{kj}$ and $\log \lambda_{ik}$ split multiplicatively in \eqref{eq:DBMR_objective} gives rise to an elegant two-step alternating procedure (see \cite{gerber2017toward} and the discussion below).
We summarize the relaxed coherence objective as follows:}

\medskip
{
\centering
\fbox{
\begin{minipage}{0.9\textwidth}
\begin{problem}[DBMR]
\label{prob:DBMR}
Given a count matrix $N\in \mathbb{N}^{m \times n}$ and a fixed rank $\lowrank \le \min(m,n)$, find left stochastic matrices $\lambda \in \bR^{m\times \lowrank}$ and $\Gamma \in \bR^{\lowrank\times n}$ that maximize $\hat\ell$ given by \eqref{eq:DBMR_objective}.
\end{problem}
\end{minipage}
}

}
\medskip

The DBMR algorithm \ref{alg:the_alg} suggested by \cite{gerber2017toward} maximizes $\hat\ell(\lambda,\Gamma)$ by an alternating optimization over $\lambda$ and~$\Gamma$ with a computational cost that is linear in~$m$ and $n$:
Maximizing $\hat\ell(\lambda,\Gamma)$ for fixed $\Gamma$ yields a unique optimum
\begin{equation}
\label{eq:lamda_hat}
    {\lambda}_{ik}=\frac{\sum_{{j=1}}^{n}{\Gamma}_{kj}N_{ij}}{\sum_{{i'=1}}^{m}\sum_{{j'=1}}^{n}{\Gamma}_{kj'}N_{i'j'}}.
\end{equation}
On the other hand, for any fixed left stochastic matrix $\lambda$, maximizing $\hat\ell(\lambda,\Gamma)$ with respect to $\Gamma$ decouples into $n$ separate linear programs~\cite[Suppl.\ p.~19]{gerber2017toward} solved by
\begin{equation}
\label{eq:solution_wrt_gamma}
\Gamma_{kj} = \begin{cases}
1, & k=\arg\max_{k'}  \sum_{{i=1}}^{m}N_{ij}\log\lambda_{ik^{'}} \\
0, & \text{else.}
\end{cases}
\end{equation}
Possible non-uniqueness of the $\arg \max$ is resolved such that there is only one nonzero entry in every column of~$\Gamma$.

In particular, while $\Gamma \in \bR^{\lowrank\times n}$ was only assumed to be left stochastic in Problem~\ref{prob:DBMR}, it is solved by an affiliation matrix $\Gamma \in \{ 0,1 \}^{\lowrank\times n}$, describing \emph{binary} (or hard) affiliations $\Gamma_{kj} \in \{ 0,1 \}$ of the input states $j\in [n]$ to the latent states~$k\in [\lowrank]$ as in Definition~\ref{def:hard_affiliation_matrix}.
The fact that the partial updates \eqref{eq:lamda_hat} and \eqref{eq:solution_wrt_gamma} are available in closed form motivates the alternating procedure described in Algorithm \ref{alg:the_alg}, introducing the iteration counter $h$.  

\begin{algorithm}[htb]
\caption{DBMR algorithm from \cite{gerber2017toward}.}
\label{alg:the_alg}
\begin{algorithmic}[1]
    \State \textsc{Input}: Data subsumed into the count matrix $N$, number of latent states $\lowrank\in\mathbb{N}$, and maximum iteration number~$h_{\max}$
    \State Set random stochastic matrix $\Gamma^{(0)} \in \{0,1\}^{\lowrank\times n}$ and~$h=0$
    \State Set $\lambda^{(0)}$ by evaluating \eqref{eq:lamda_hat} for $\Gamma = \Gamma^{(0)}$
    \While{$\hat\ell(\lambda^{(h)},\Gamma^{(h)}) \neq \hat\ell(\lambda^{(h-1)},\Gamma^{(h-1)})$ and $h<h_{\max}$}
    \State Set $\Gamma^{(h+1)}$ by evaluating \eqref{eq:solution_wrt_gamma} for $\lambda = \lambda^{(h)}$
    \State \rev{Set $\lambda^{(h+1)}$} by evaluating \eqref{eq:lamda_hat} for $\Gamma = \Gamma^{(h+1)}$
    \State $h \leftarrow h+1$
    \EndWhile
    \State \textsc{Output:} $\lambda=\lambda^{(h)}$ and $\Gamma = \Gamma^{(h)}$
\end{algorithmic}
\end{algorithm}

We note that $\hat\ell(\lambda,\Gamma)$ is concave with respect to both $\lambda$ and $\Gamma$, individually. Thus, $\hat\ell(\lambda^{(h)},\Gamma^{(h)})$ increases in~$h$.
Since there are only finitely many values that $\Gamma$ can take, the algorithm converges, but possibly to a maximum of $\hat{\ell}$ that is only local (with respect to the updates \eqref{eq:lamda_hat} and \eqref{eq:solution_wrt_gamma}). A practical alternative stopping criterion is to stop if the relaxed likelihood $\hat\ell$ shows small improvements that are below a given threshold.
Since the algorithm might only find a locally optimal solution, it is usually run several times with independent random initializations $\Gamma^{(0)}$, and the result with the highest relaxed likelihood value is taken.

\begin{remark}
\label{rem:OrthNMF}
Note that~\eqref{eq:solution_wrt_gamma} implies that $\Gamma \Gamma^\top$ is a diagonal matrix: the rows of $\Gamma$ are orthogonal, but typically not orthonormal.
In Orthogonal NMF, as mentioned in section~\ref{section:Related_Work}, the orthogonality requirement would translate to~$\Gamma \Gamma^\top = \Id_K$. 
If $\Gamma$ has full rank, this can be achieved by the replacement $\Gamma \to D\Gamma$, using a diagonal scaling $D$; cf.~\cite[below equation~(10)]{ding2006orthogonal}. In this case we also have $\lambda \Gamma = \big(\lambda D^{-1}\big) (D\Gamma)$, where the factors in the parentheses fulfill the requirements of Orthogonal NMF.
DBMR hence yields a particular form of Orthogonal NMF.
\end{remark}

\paragraph{DBMR as maximum likelihood estimate on a constraint set.}
For fixed $m,n\in\bN$ and for $\lowrank \in \mathbb{N}$,  
$\lowrank \le \min\{m,n\}$, denote
\begin{equation}
\label{equ:DBMR_constraint_set}
\mathcal{D}^\lowrank_{\Lambda}
\coloneqq
\left\{
\Lambda = \lambda \Gamma \in \mathbb{R}^{m \times n} \mid
\lambda \in \bR_{\geq 0}^{m \times \lowrank}, \, \Gamma \in \{ 0,1 \}^{\lowrank \times n} \text{ both left stochastic}
\right\}.    
\end{equation}
The set $\smash{ \mathcal{D}^\lowrank_{\Lambda} }$ comprises the set of  transition matrices that we consider as low rank approximations (more precisely, as approximations of rank at most $\lowrank$) to the full-rank transition matrix~$P$. The salient feature (in addition to the low rank constraint) is the sparsity assumption $\Gamma \in \{ 0,1 \}^{\lowrank \times n}$ which, by the left stochasticity of $\Gamma$, leads to the interpretation of $\Gamma$ as a (hard) affiliation matrix, see Definition~\ref{def:hard_affiliation_matrix}.

If we restrict the maximum likelihood estimation from section~\ref{sec:Bayesian_model} to the set~$\mathcal{D}_{\Lambda}^\lowrank$,
\begin{equation}
\label{eq:loglikeproblem restricted}
 \Lambda^* = \underset{\Lambda \in \mathcal{D}_{\Lambda}^\lowrank}{\arg \max} \left\{ {\displaystyle \sum_{i=1}^{m}}{\displaystyle \sum_{j=1}^{n}}N_{ij}\log\Lambda_{ij}\right\},
\end{equation}
then we obtain an alternative characterization of the DBMR problem (Problem~\ref{prob:DBMR}):
\begin{lemma}
\label{lemma:dbmr constrained}
The functions $\ell$ and $\hat \ell$ coincide on $\smash{\mathcal{D}_{\Lambda}^\lowrank}$ (viewed as the set of admissible pairs~$(\lambda,\Gamma)$ in~\eqref{equ:DBMR_constraint_set}).
Further, every solution of \eqref{eq:loglikeproblem restricted} is a \rev{maximizer of} $\hat\ell$ in \eqref{eq:DBMR_objective}.
\end{lemma}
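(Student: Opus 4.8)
The plan is to exploit the single structural fact that distinguishes the admissible pairs in $\mathcal{D}_{\Lambda}^\lowrank$ from general left stochastic pairs: on $\mathcal{D}_{\Lambda}^\lowrank$ the matrix $\Gamma$ is a hard affiliation matrix, so each column $\Gamma_{\bullet j}$ has exactly one nonzero entry, a $1$ in row $\gamma(j)$, where $\gamma$ is the assignment of Definition~\ref{def:hard_affiliation_matrix}. Everything in both assertions will follow from feeding this degeneracy into the two objectives.

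For the coincidence of $\ell$ and $\hat\ell$, I would substitute this structure directly. Because $(\lambda\Gamma)_{ij} = \sum_k \lambda_{ik}\Gamma_{kj} = \lambda_{i\gamma(j)}$ and, for the same reason, $\sum_k \Gamma_{kj}\log\lambda_{ik} = \log\lambda_{i\gamma(j)}$, both objectives collapse to $\sum_{i,j} N_{ij}\log\lambda_{i\gamma(j)}$ and therefore agree on $\mathcal{D}_{\Lambda}^\lowrank$. Conceptually this is nothing but the degenerate case of the Jensen step used to derive \eqref{eq:DBMR_objective}: when the mixing weights $\Gamma_{\bullet j}$ concentrate on a single latent state, Jensen's inequality holds with equality. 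I would add a line noting that the $\log 0$ convention makes both sides equal to $-\infty$ simultaneously on any term with $N_{ij} > 0$ and $\lambda_{i\gamma(j)} = 0$, so that the identity is unconditional.

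For the second assertion, the crux is to identify the global maximum of $\hat\ell$ over all left stochastic pairs with the optimal value of the restricted MLE \eqref{eq:loglikeproblem restricted}. Writing $\hat\ell(\lambda,\Gamma) = \sum_j \sum_k \Gamma_{kj} c_{kj}$ with $c_{kj} := \sum_i N_{ij}\log\lambda_{ik}$, I would observe that for fixed $\lambda$ the objective decouples over columns and that, for each $j$, the quantity $\sum_k \Gamma_{kj}c_{kj}$ is a convex combination of the numbers $(c_{kj})_{k}$, hence bounded above by $\max_k c_{kj}$. Moving all mass in column $j$ onto a latent state attaining this maximum produces a hard affiliation matrix $\Gamma'$ with $\hat\ell(\lambda,\Gamma') \ge \hat\ell(\lambda,\Gamma)$ — this is precisely the per-column linear program behind the update \eqref{eq:solution_wrt_gamma}. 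Consequently the supremum of $\hat\ell$ is already attained on pairs with binary $\Gamma$, and on such pairs the first assertion gives $\hat\ell = \ell = \sum_{i,j}N_{ij}\log\Lambda_{ij}$ with $\Lambda = \lambda\Gamma \in \mathcal{D}_{\Lambda}^\lowrank$; thus this supremum equals the optimal value $V$ of \eqref{eq:loglikeproblem restricted}.

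It then remains to close the loop. Any solution $(\lambda^*,\Gamma^*)$ of \eqref{eq:loglikeproblem restricted} has $\Gamma^*$ binary and attains $\sum_{i,j}N_{ij}\log\Lambda^*_{ij} = V$; by the first assertion $\hat\ell(\lambda^*,\Gamma^*) = V$, which we have just identified as $\max \hat\ell$, so $(\lambda^*,\Gamma^*)$ maximizes $\hat\ell$. I expect the genuinely delicate points to be bookkeeping rather than conceptual: keeping the two optimization domains (binary versus merely left stochastic $\Gamma$) cleanly separated, justifying that a simplex vertex attains the per-column maximum, and handling both the non-uniqueness of the factorization $\Lambda = \lambda\Gamma$ and the $\log 0$ conventions so that no term is silently mis-assigned.
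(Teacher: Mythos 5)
Your proposal is correct, and its first half is exactly the paper's argument: equality in Jensen's inequality for binary $\Gamma$, so $\ell$ and $\hat\ell$ coincide on $\mathcal{D}_{\Lambda}^\lowrank$. The difference lies in the second claim. The paper disposes of it in one line by \emph{citing} the theorem of Gerber and Horenko \cite[Theorem]{gerber2017toward}, which asserts that $\hat\ell$ admits a maximizer $\Lambda^*=\lambda^*\Gamma^*$ lying in $\mathcal{D}_{\Lambda}^\lowrank$; combined with the coincidence of $\ell$ and $\hat\ell$ on that set, the claim follows. You instead \emph{reprove} this fact from scratch: for fixed $\lambda$ the objective $\hat\ell(\lambda,\cdot)$ is linear in $\Gamma$ and decouples over columns, each column contributes a convex combination of the numbers $c_{kj}=\sum_i N_{ij}\log\lambda_{ik}$, and pushing all mass onto an index attaining $\max_k c_{kj}$ yields a binary $\Gamma'$ with $\hat\ell(\lambda,\Gamma')\ge\hat\ell(\lambda,\Gamma)$ --- which is precisely the mechanism behind the update \eqref{eq:solution_wrt_gamma}. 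Your route is longer but self-contained and more informative, as it exposes \emph{why} binary maximizers exist rather than outsourcing it; the paper's route is shorter but leans on the external reference. One cosmetic remark: your detour through ``the supremum of $\hat\ell$ is attained on binary pairs'' is not needed and requires a small extra compactness remark (finitely many binary $\Gamma$, upper semicontinuity in $\lambda$) to be airtight. You can avoid it entirely by running the inequality chain directly against a given solution $\Lambda^\circ=\lambda^\circ\Gamma^\circ$ of \eqref{eq:loglikeproblem restricted}: for any left stochastic pair $(\lambda,\Gamma)$, your column-wise step gives $\hat\ell(\lambda,\Gamma)\le\hat\ell(\lambda,\Gamma')=\ell(\lambda,\Gamma')\le\ell(\lambda^\circ,\Gamma^\circ)=\hat\ell(\lambda^\circ,\Gamma^\circ)$, where the middle inequality holds because $\lambda\Gamma'\in\mathcal{D}_{\Lambda}^\lowrank$ and the last equality is your first assertion, valid for any factorization of $\Lambda^\circ$.
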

\begin{proof}
Note that the inequality in \eqref{eq:DBMR_objective} follows from Jensen's inequality, $\sum_{k=1}^{\lowrank} \Gamma_{kj}\log\lambda_{ik} \leq \log \sum_{k=1}^{\lowrank} \lambda_{ik}\Gamma_{kj}$, and is, in fact, an equality whenever $\Gamma$ is a \emph{binary} left stochastic matrix.
Hence, $\ell$ and $\smash{\hat{\ell}}$ coincide on $\smash{\mathcal{D}_{\Lambda}^\lowrank}$.
Further, there exists a maximizer $\Lambda^*= \lambda^* \Gamma^*$ of $\smash{\hat\ell}$ that lies in $\mathcal{D}_{\Lambda}^\lowrank$ \cite[Theorem]{gerber2017toward}.
Since $\ell$ and $\hat{\ell}$ coincide on $\smash{\mathcal{D}_{\Lambda}^\lowrank}$, this proves the second claim.
\end{proof}

It is well known that maximum likelihood estimation is inherently related to Kullback--Leibler minimization \cite[Section 9.5]{wasserman2004all}.
The following remark establishes this connection in the specific context of DBMR:
\begin{remark}[Connection to Kullback--Leibler divergences]
\label{remark:Connection_ML_KL}
As mentioned in section~\ref{section:Coherent_sets}, the joint distribution of the data $(X,Y)$ is described by the probability values $P_{ij} p_{j} = \bP \left[ Y=i, X=j\right]$ (which are proportional to $N_{ij}$). We can likewise describe the joint distribution of the DBMR output by $\Lambda_{ij} p_{j}$, where $\Lambda \in \mathcal{D}_{\Lambda}^\lowrank$.
The DBMR objective \eqref{eq:loglikeproblem restricted} can be rewritten in the form
\begin{equation}
 \Lambda^* = \underset{\Lambda \in \mathcal{D}_{\Lambda}^\lowrank}{\arg \min} \, \KLD{\left\{P_{ij} p_{j} \right\}}{ \left\{\Lambda_{ij} p_{j}\right\}},
\end{equation}
where, with slight abuse of notation, we naturally extend the definition of $\mathrm{D}_{\mathrm{KL}}$ from section~\ref{section:Setup_and_Notation} to matrices associated with joint distributions.
In other words, DBMR attempts to match the true joint distribution in the Kullback--Leibler sense, while obeying the rank and sparsity constraints imposed by~$\mathcal{D}_{\Lambda}^\lowrank$. Indeed, 
\begin{equation}
\label{eq:KL P Lambda}
\KLD{ \left\{P_{ij} p_{j} \right\}}{ \left\{\Lambda_{ij} p_{j}\right\} } = \sum_{i,j} \log\left( \frac{P_{ij} p_{j}} {\Lambda_{ij} p_{j}}\right) P_{ij} p_{j} \, \propto \, \sum_{i,j} N_{ij} \left( \log P_{ij} - \log \Lambda_{ij}\right).
\end{equation}
Clearly, minimizing \eqref{eq:KL P Lambda} is equivalent to maximizing \eqref{eq:loglikeproblem restricted}, since $\sum_{i,j} N_{ij} \log P_{ij}$ does not depend on~$\Lambda$.
\end{remark}

\begin{remark}
\label{remark:No_low_rank_structure_identification}
In the case when the count matrix $N$ is of low effective dimensionality, that is, when $\mathrm{rank} \, N = \widetilde{\lowrank} < \lowrank$, a natural question is whether the DBMR output $\Lambda^{\ast} \in \smash{ \mathcal{D}_{\Lambda}^{\lowrank} }$ lies in the smaller set $\smash{ \mathcal{D}_{\Lambda}^{\widetilde{\lowrank}} }$, i.e., whether DBMR automatically identifies the low rank structure of~$N$.
In general, this is not the case:
Consider two linearly independent vectors $a,b \in \bR^{m}$, $m\geq 3$, and $\smash{ P = [a, \tfrac{a}{2} + \tfrac{b}{2}, b] \in \mathbb{R}^{m\times 3} }$ (recall that $N$ is simply a scaled version of $P$, implying here that $\smash{ \mathrm{rank} \, N = \mathrm{rank} \, P =  \widetilde{\lowrank} = 2 }$).
For $\lowrank=3$, the best approximation of $P$ within $\smash{ \mathcal{D}_{\Lambda}^{\lowrank} }$ is clearly $P$ itself, by choosing $\lambda^{\ast} = P$ and $\Gamma^{\ast} = \Id_{3}$.
This solution is unique up to permutations of the columns of $\lambda^{\ast}$ and rows of $\Gamma^{\ast}$, respectively, since, for any affiliation matrix $\Gamma \in \bR^{3\times 3}$ with $\mathrm{rank}\, \Gamma < 3$ and any $\lambda \in \bR^{m\times 3}$, the product $\lambda\Gamma$ has at least two identical columns and cannot coincide with~$P$.
Clearly, the above solution $\Lambda^{\ast} = \lambda^{\ast}\Gamma^{\ast}$ maximizes the likelihood in \eqref{eq:new_prob} and coincides with the DMBR output, at least for appropriate initializations (e.g.\ $\Gamma^{(0)} = \Id_{3}$).
\end{remark}


\section{Relations between the full and the reduced model}
\label{sec:PTO}

\subsection{DBMR as a projection}

In order to analyze the approximation properties of the reduced model $\Lambda = \lambda \Gamma$ provided by DBMR, we will consider the (hard) affiliation matrix $\Gamma$ fixed throughout this section (assume e.g.\ that it has already been computed).
Then, in view of \eqref{eq:lamda_hat}, DBMR constructs the low rank approximation $\Lambda$ of $P$ as follows:
For each $k=1,\dots,\lowrank$,
\begin{itemize}
\item 
compute the column $\lambda_{\bullet k}$ as the weighted average of the columns of $P$ associated with $k$:
$ \lambda_{\bullet k} \, \propto\, \sum_{j=1}^{n} \Gamma_{kj} p_{j} P_{\bullet j} $,
\item 
replace the columns of $P$ associated with $k$ by this average:
$\Lambda_{\bullet j} = \lambda_{\bullet k}$ if $\Gamma_{kj} = 1$.
\end{itemize}
The crucial observation in this section is that this process can be rewritten as a composition of $P$ with a projection $\Pi$, 
\begin{equation}
\label{eq:DMBR decomp}
\Lambda = \lambda \Gamma = P \Pi,
\end{equation}
as illustrated by the following example:
\begin{example}
\label{example:DBMR_as_projection}
Let $n = 5$ and $\lowrank=2$ and assume that we have already computed
\[ \Gamma=\left(\begin{array}{ccccc}
\cellcolor{yellow!20} 1 & \cellcolor{yellow!20} 1 & 0 & 0 & 0\\
0 & 0 & \cellcolor{green!10} 1 & \cellcolor{green!10} 1 & \cellcolor{green!10} 1
\end{array}\right). \]
Then the observation described above concerning the $\lambda$-update-step~\eqref{eq:lamda_hat}
 within DBMR can be illustrated as follows:
\begin{align*}
&\xrightarrow[\text{according to $\Gamma$}]{\text{group columns of $P$}}&
P
&=
\left(\begin{array}{>{\columncolor{yellow!20}}c>{\columncolor{yellow!20}}c>{\columncolor{green!10}}c>{\columncolor{green!10}}c>{\columncolor{green!10}}c}
	P_{\bullet 1} & P_{\bullet 2} & P_{\bullet 3} & P_{\bullet 4} & P_{\bullet 5}
\end{array}\right)
\\
&\xrightarrow[\text{columns}]{\text{average}}&
\lambda
&=
\left(\begin{array}{>{\columncolor{yellow!20}}c>{\columncolor{green!10}}c}
	\lambda_{\bullet 1}
    =\tfrac{p_{1}P_{\bullet 1}+p_{2}P_{\bullet 2}}{p_{1}+p_{2}}
    &
    \lambda_{\bullet 2}
    =\tfrac{p_{3}P_{\bullet 3}+p_{4}P_{\bullet 4}+p_{5}P_{\bullet 5}}{p_{3}+p_{4}+p_{5}}
\end{array}\right)
\\
&\xrightarrow[]{\text{augment}}&
\Lambda
= \lambda \Gamma
&=
\left(\begin{array}{>{\columncolor{yellow!20}}c>{\columncolor{yellow!20}}c>{\columncolor{green!10}}c>{\columncolor{green!10}}c>{\columncolor{green!10}}c}
	\lambda_{\bullet 1} &\lambda_{\bullet 1}  & \lambda_{\bullet 2} & \lambda_{\bullet 2} & \lambda_{\bullet 2}
\end{array}\right).
\end{align*}
Clearly, this procedure of obtaining $\Lambda = \lambda \Gamma$ can be written as a matrix product $\Lambda = P \Pi$ for
\[
\Pi
=
\begin{pmatrix}
	\cellcolor{yellow!20} \frac{p_{1}}{p_{1}+p_{2}} & \cellcolor{yellow!20} \frac{p_{1}}{p_{1}+p_{2}} & 0 & 0 & 0\\[1ex]
	\cellcolor{yellow!20} \frac{p_{2}}{p_{1}+p_{2}} & \cellcolor{yellow!20} \frac{p_{2}}{p_{1}+p_{2}} & 0 & 0 & 0\\[1ex]
	0 & 0 & \cellcolor{green!10} \frac{p_{3}}{p_{3}+p_{4}+p_{5}} & \cellcolor{green!10} \frac{p_{3}}{p_{3}+p_{4}+p_{5}} &  \cellcolor{green!10} \frac{p_{3}}{p_{3}+p_{4}+p_{5}}\\[1ex]
	0 & 0 & \cellcolor{green!10} \frac{p_{4}}{p_{3}+p_{4}+p_{5}} & \cellcolor{green!10} \frac{p_{4}}{p_{3}+p_{4}+p_{5}} &  \cellcolor{green!10} \frac{p_{4}}{p_{3}+p_{4}+p_{5}}\\[1ex]
	0 & 0 & \cellcolor{green!10} \frac{p_{5}}{p_{3}+p_{4}+p_{5}} & \cellcolor{green!10} \frac{p_{5}}{p_{3}+p_{4}+p_{5}} &  \cellcolor{green!10} \frac{p_{5}}{p_{3}+p_{4}+p_{5}}
\end{pmatrix}.
\]
This motivates the definition of $\Pi$ in \eqref{equ:definition_Pi_and_tpi} below.
\end{example}

\begin{remark}
The procedure and example above explain why DBMR provides ``good'' low rank approximations $\Lambda$ of $P$ and why it identifies coherent pairs:
$P$ is a maximum likelihood estimate and $\Lambda$ maximizes the same likelihood, but within the set $\mathcal{D}_{\Lambda}^\lowrank$ of low rank matrices, cf.~\eqref{eq:loglikeproblem restricted}.
Hence, indirectly, $\Lambda$ results to be as ``close'' to $P$ as possible.
In view of the discussion above, the best way to be ``close'' to $P$ is to group its columns by \emph{similarity} (and this clustering then defines $\Gamma$).
This way, each column in $P$ does not differ too much from the corresponding average column in $\Lambda$.
As a consequence, in terms of coherence, the states within each group (i.e.\ partition element) evolve with similar probability vectors, hence ``coherently''.
\end{remark}

After establishing equation~\eqref{eq:DMBR decomp} in Theorem~\ref{thm:Frob_DBMR_bound} below, together with several properties of $\Pi$, we leverage this result to draw conclusions about the relationship between the full model $P$ and the low rank model~$\Lambda = \lambda\Gamma$, in particular, in the context of coherence.
More precisely,	we show that the degree of coherence (measured by the sum of leading singular values of the corresponding matrix as in section~\ref{sec:coh_svd}) associated to the DBMR approximation is bounded from above by the one of the full model (cf.\ Proposition~\ref{prop:sval_comp}), $\cC(\Lambda)\leq \cC(P)$.

In what follows, we work with the full and low rank transition matrices $P$ and $\Lambda$ as well as with their rescaled versions $\smash{ \tP = D_q^{-1/2} P D_p^{1/2} \in \bR^{m\times n},\,	\tLam = D_q^{-1/2} \Lambda D_p^{1/2} }$, see \eqref{eq:P_tilde} and section~\ref{sec:coh_svd}. To identify the projection $\Pi$ in \eqref{eq:DMBR decomp}, recall the assignment $\gamma:[n] \rightarrow [\lowrank]$ associated to a fixed (hard) affiliation matrix $\Gamma \in \mathbb{R}^{\lowrank \times n}$ from Definition~\ref{def:hard_affiliation_matrix}. 
Motivated by Example~\ref{example:DBMR_as_projection}, we define $\Pi \in \bR_{\geq 0}^{n\times n}$ and $\smash{ \tpi = D_p^{-1/2} \Pi D_p^{1/2} }$ by
\begin{equation}
\label{equ:definition_Pi_and_tpi}
\Pi_{ij}
=
\frac{p_i \, \delta_{\gamma(i) \gamma(j)}}{\sum_{l} p_l \delta_{\gamma(l) \gamma(i)}}, 
\qquad
\tpi_{ij} = \frac{\sqrt{p_i p_j} \, \delta_{\gamma(i)\gamma(j)}}{\sum_l p_l \delta_{\gamma(l)\gamma(i)}},
\end{equation}
noting that here and in subsequent sections, $\widetilde{\Pi}$ serves as an auxiliary object that facilitates computations. Indeed, it is straightforward to verify that $\tpi$ is an orthogonal projection (cf.\ Lemma~\ref{lemma:DBMR_projection_tpi_properties} in Appendix~\ref{app:proof_projection_theorem}). Consequently, $\Pi$ is a $p^{-1}$-orthogonal projection, meaning that $\Pi^2 =  \Pi$ and that $\Pi$ is $p^{-1}$-symmetric:
\begin{equation}
\label{eq:p symmetric}
\langle \Pi x,y\rangle_{p^{-1}} = \langle x,\Pi y\rangle_{p^{-1}}, \qquad \text{for all} \quad x,y \in \mathbb{R}^n.   
\end{equation}
The following result confirms the relation \eqref{eq:DMBR decomp} and clarifies the structure of $\Pi$ in terms of its eigenvector decomposition. \rev{An important role is played by the set 
\begin{equation}
\label{eq:range gamma}
\mathrm{Ran} \gamma := \gamma([n])  \subset [\lowrank],
\end{equation}
 which we refer to as the set of \emph{active} latent states (recall Definition \ref{def:hard_affiliation_matrix}). In \eqref{eq:range gamma}, the notation $\gamma([n])$ refers to the image set (or range) of $\gamma$, that is, $k \in \gamma([n])$ if and only if there exists $i \in [n]$ such that $\gamma(i) = k$.}

\begin{theorem}[DBMR as a projection]
\label{thm:projection}
Let $\Gamma \in \{0,1\}^{\lowrank\times n}$ be a hard affiliation matrix (according to Definition \ref{def:hard_affiliation_matrix}) and $\lambda$ be given by~\eqref{eq:lamda_hat}.
    Then $\Pi$ as defined in \eqref{equ:definition_Pi_and_tpi} is a left~stochastic and $p^{-1}$-orthogonal projection which satisfies $\lambda \Gamma = P \Pi$.
    Moreover, $\Pi$ has the following properties:
	\begin{enumerate}[(a)]
\item
\label{item:projection_thm_rank_bound}
The rank of $\Pi$  coincides with the number of active latent states, $\mathrm{rank} \,  \Pi = \# \mathrm{Ran} \gamma$. In particular, ~$\mathrm{rank}\,  \Pi \leq \lowrank$.
\item
\label{item:projection_thm_eigenvectors}
The vectors $a^{(k)} \in \mathbb{R}^n$, associated to active latent states $k \in \mathrm{Ran}\gamma \subseteq [\lowrank]$ and defined by
\begin{equation}
\label{eq:ak}
a_i^{(k)} := p_i \delta_{\gamma(i)k}\,, \qquad i =1, \ldots,n,
\end{equation}
are eigenvectors of $\Pi$ with eigenvalue $1$, i.e.,~$\Pi a^{(k)} = a^{(k)}$. They span the image space of $\Pi$, that is, 
\begin{equation}
\label{eq:eigenspace}
\mathrm{Span} \{a^{k}: \,\, k \in \mathrm{Ran} \, \gamma\} = \mathrm{Ran} \, \Pi.
\end{equation}

The supports of the vectors $a^{(k)}$ are disjoint, that is, $a_i^{(k)} a_i^{(l)} = 0$, for all $i$ and all $k \neq l $.
In particular, these vectors are orthogonal as well as $p$-orthogonal.

Hence, the vector $\smash{ p = \sum_{k \in \mathrm{Ran}\gamma} a^{(k)} }$ is also an eigenvector of $\Pi$ with eigenvalue $1$.
\item
\label{item:projection_thm_DBMR_structure_preserving}
The DBMR output $\Lambda = \lambda \Gamma$ respects the final distribution in the sense that~$\Lambda p = q$. 
\end{enumerate}
\end{theorem}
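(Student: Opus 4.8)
The plan is to reduce every claim to the auxiliary matrix $\tpi = D_p^{-1/2}\Pi D_p^{1/2}$ from \eqref{equ:definition_Pi_and_tpi}, exploiting the similarity relation $\Pi = D_p^{1/2}\tpi D_p^{-1/2}$. First I would record (citing Lemma~\ref{lemma:DBMR_projection_tpi_properties}) that $\tpi$ is a genuine orthogonal projection; the cleanest way to see this is to exhibit, indexed by the active states $k\in\mathrm{Ran}\gamma$, the system $v^{(k)}_i = \sqrt{p_i}\,\delta_{\gamma(i)k}/\sqrt{S_k}$ with $S_k := \sum_{l:\gamma(l)=k} p_l$, and to check by a direct fibre-wise sum that these are orthonormal and that $\tpi = \sum_{k\in\mathrm{Ran}\gamma} v^{(k)}(v^{(k)})^\top$, so $\tpi^2=\tpi=\tpi^\top$. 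Idempotency and symmetry of $\tpi$ then transfer to $\Pi$: $\Pi^2 = D_p^{1/2}\tpi^2 D_p^{-1/2}=\Pi$, while symmetry of $\tpi$ is precisely the $p^{-1}$-symmetry of $\Pi$ in \eqref{eq:p symmetric}. Left stochasticity of $\Pi$ I would verify straight from \eqref{equ:definition_Pi_and_tpi}: nonnegativity is immediate, and in the sum over column $j$ only indices $i$ with $\gamma(i)=\gamma(j)$ contribute, collapsing it to $\sum_{i:\gamma(i)=\gamma(j)} p_i / S_{\gamma(j)} = 1$.

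The identity $\lambda\Gamma = P\Pi$ is the computational heart. I would first use the relation $N_{ij}\propto P_{ij}p_j$ (from \eqref{eq:N_and_pP_estimators}, after dropping hats) together with left stochasticity of $P$ to simplify the denominator of \eqref{eq:lamda_hat}: since $\sum_{i'}P_{i'j'}=1$, one gets $\sum_{i',j'}\Gamma_{kj'}N_{i'j'}\propto\sum_{j'}\Gamma_{kj'}p_{j'}=S_k$, whence $\lambda_{ik}=\bigl(\sum_{j:\gamma(j)=k}P_{ij}p_j\bigr)/S_k$. Then $(\lambda\Gamma)_{ij}=\lambda_{i\gamma(j)}$, while computing $(P\Pi)_{ij}=\sum_l P_{il}\Pi_{lj}$ — where only $l$ with $\gamma(l)=\gamma(j)$ survive, each weighted by $p_l/S_{\gamma(j)}$ — yields exactly the same fibre average. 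Matching the two confirms \eqref{eq:DMBR decomp}. (Inactive latent states contribute trivially, since their rows in $\Gamma$ vanish and thus never enter $\lambda\Gamma$.)

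For the remaining items I would proceed as follows. Part \eqref{item:projection_thm_rank_bound} follows because $\Pi$ and $\tpi$ are similar, hence equirank, and $\tpi$ projects onto the $(\#\mathrm{Ran}\gamma)$-dimensional span of the orthonormal $v^{(k)}$. For part \eqref{item:projection_thm_eigenvectors} I would observe that $a^{(k)} = \sqrt{S_k}\,D_p^{1/2}v^{(k)}$, so $D_p^{1/2}$ carries the image of $\tpi$ onto $\mathrm{Span}\{a^{(k)}\}$, which is therefore exactly $\mathrm{Ran}\,\Pi$, giving \eqref{eq:eigenspace}; lying in the range of a projection, each $a^{(k)}$ satisfies $\Pi a^{(k)}=a^{(k)}$ (equivalently, one checks this directly by the same fibre computation). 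Disjointness of supports is immediate from $\delta_{\gamma(i)k}\delta_{\gamma(i)l}=0$ for $k\neq l$, forcing both ordinary and $p$-orthogonality; and $\sum_{k\in\mathrm{Ran}\gamma}a^{(k)}_i=p_i\sum_k\delta_{\gamma(i)k}=p_i$, so $p=\sum_k a^{(k)}\in\mathrm{Ran}\,\Pi$ is fixed by $\Pi$. Part \eqref{item:projection_thm_DBMR_structure_preserving} then drops out immediately: $\Lambda p = P\Pi p = Pp = q$, using $\Pi p = p$ and the standing relation $q=Pp$.

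The routine obstacle throughout is the bookkeeping over the fibres $\gamma^{-1}(k)$ and keeping the $i$-dependent denominator in \eqref{equ:definition_Pi_and_tpi} under control. The one genuinely load-bearing step is the simplification of the $\lambda$-denominator via left stochasticity of $P$: without the passage from $N$ to $P_{ij}p_j$, the matrices $\lambda\Gamma$ and $P\Pi$ would not be manifestly equal, and \eqref{eq:DMBR decomp} — on which parts \eqref{item:projection_thm_rank_bound}--\eqref{item:projection_thm_DBMR_structure_preserving} ultimately rest — would be obscured.
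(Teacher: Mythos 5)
Your proposal is correct, and its skeleton matches the paper's: reduce to the auxiliary matrix $\tpi$, transfer its properties to $\Pi$ by the similarity $\Pi = D_p^{1/2}\tpi D_p^{-1/2}$, prove $\lambda\Gamma = P\Pi$ by an entrywise fibre computation (your simplification of the denominator of \eqref{eq:lamda_hat} to $S_k$ via $N_{ij}\propto P_{ij}p_j$ is exactly the paper's step, phrased through left stochasticity of $P$ rather than through $\sum_i N_{ij}=Sp_j$), and then deduce (c) from $\Pi p = p$ and $q=Pp$. Where you genuinely diverge is in how the projection structure is established. The paper verifies $\tpi^\top=\tpi$ and $(\tpi^2)_{ij}=\tpi_{ij}$ by direct index manipulation (Lemma~\ref{lemma:DBMR_projection_tpi_properties}), proves $\Pi a^{(k)}=a^{(k)}$ and $\Pi p = p$ by separate entrywise computations, and then needs an extra argument for the eigenspace identity \eqref{eq:eigenspace}: since a projection has only eigenvalues $0$ and $1$, it suffices to show that any $b$ with $\langle b, a^{(k)}\rangle_{p^{-1}}=0$ for all active $k$ satisfies $\Pi b = 0$, which is checked by hand. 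You instead exhibit the explicit orthonormal system $v^{(k)}_i=\sqrt{p_i}\,\delta_{\gamma(i)k}/\sqrt{S_k}$ and the dyadic decomposition $\tpi=\sum_{k\in\mathrm{Ran}\gamma}v^{(k)}(v^{(k)})^\top$, from which symmetry, idempotency, the rank count in (a), and the range identification \eqref{eq:eigenspace} (via $a^{(k)}=\sqrt{S_k}\,D_p^{1/2}v^{(k)}$ and $\mathrm{Ran}\,\Pi = D_p^{1/2}\,\mathrm{Ran}\,\tpi$) all fall out at once, with no separate orthogonal-complement argument. Your route costs one extra verification up front (that the dyads really sum to $\tpi$ and that the $v^{(k)}$ are orthonormal), but it buys a unified and more structural treatment of the projection, its rank, and its eigenspace; the paper's route is more elementary index-pushing but must handle \eqref{eq:eigenspace} and the rank bound as after-the-fact consequences. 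A minor point in your favor: you explicitly dispose of inactive latent states (whose $0/0$ columns of $\lambda$ are harmless because the corresponding rows of $\Gamma$ vanish) and you verify left stochasticity of $\Pi$, both of which the paper passes over with ``by definition.''
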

\begin{proof}
The proof is given in Appendix~\ref{app:proof_projection_theorem}.
\end{proof}

\begin{remark}
\label{rem:comments_on_projection_theorem}
Let us comment on the previous result.
\begin{enumerate}[(a)]
    \item Note that the assumptions in Theorem \ref{thm:projection} on  $\Gamma$ and $\lambda$ are satisfied at any (completed) iteration of the DBMR algorithm; convergence is not required.
    \item For a fixed (active) latent state $k \in [\lowrank]$, it is natural to consider the corresponding set $\gamma^{-1}(\{k\})$ comprised of the states in $[n]$ that are mapped to $k$. The disjoint union $\bigcup_{k} \gamma^{-1}(\{k\}) = [n]$ can then be viewed as a coarse-graining of the input set $[n]$ induced by the assignment $\gamma$. Theorem \ref{thm:projection} shows that this interpretation persists at the level of the projection $\Pi$, and that its range encodes the same information. Indeed, the eigenvectors in  \eqref{eq:ak} can be obtained as the restriction of $p$ to $\gamma^{-1}(\{k\})$,   
    \begin{equation}
    a_i^{(k)} = \begin{cases}
    p_i \quad \text{if } i \in  \gamma^{-1}(\{k\}),  \\
    0  \quad \text{otherwise.}
    \end{cases}   
    \end{equation}
\end{enumerate}    
\end{remark}

\begin{corollary}
\label{cor:DBMR_as_best_approximation}
Let $\Gamma \in \{0,1\}^{\lowrank\times n}$ be a hard affiliation matrix (according to Definition \ref{def:hard_affiliation_matrix}), $\lambda$ be given by~\eqref{eq:lamda_hat} and $\Lambda = \lambda \Gamma$.
Further, let $\tP$ and $\tLam$ be given by \eqref{eq:P_tilde} and $\tpi$ by \eqref{equ:definition_Pi_and_tpi}.
Then $\tP-\tLam$ is orthogonal to any matrix of the form $A\tpi$, $A\in\bR^{m\times n}$ with respect to the Frobenius norm.
In particular, since $\tLam = \tP\tpi$,
\begin{equation}
\label{eq:Pythagoras}
\| \tP-\tLam\| _{F}^{2}=\| \tP\| _{F}^{2} - \| \tLam\| _{F}^{2}.
\end{equation}
Further, for a fixed (hard) affiliation matrix $\Gamma\in \{0,1\}^{\lowrank\times n}$ of rank $r$, the choice of $\lambda$ in~\eqref{eq:lamda_hat} results in a matrix $\tLam$ that is the best approximation of $\tP$ in Frobenius norm:
\begin{equation}
    \label{eq:bestapprox}
    \tLam = D_q^{-1/2} \lambda \Gamma D_p^{1/2} \in \argmax_{\lambda' \in \bR^{m \times \lowrank}}\, \big\| \tP - D_q^{-1/2} \lambda' \Gamma D_p^{1/2} \big\|_{F}.
\end{equation}
\end{corollary}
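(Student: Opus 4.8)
The plan is to reduce every assertion to a single structural fact supplied by Theorem~\ref{thm:projection} and the discussion preceding it: that $\tpi$ is a genuine orthogonal projection ($\tpi^2 = \tpi$ and $\tpi^\top = \tpi$) and that $\tLam = \tP\tpi$ (this follows from $\Lambda = P\Pi$ after conjugating by $D_q^{\pm 1/2}, D_p^{\pm 1/2}$, and is in any case granted in the statement). With this in hand the orthogonality claim is the crux, and both the Pythagorean identity~\eqref{eq:Pythagoras} and the best-approximation property~\eqref{eq:bestapprox} fall out as consequences.

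First I would establish the orthogonality claim. Writing $\tP - \tLam = \tP(\Id - \tpi)$ and using the Frobenius inner product $\langle M, N\rangle_F = \mathrm{tr}(M^\top N)$, I would compute, for arbitrary $A \in \bR^{m\times n}$,
\[
\langle \tP(\Id-\tpi),\, A\tpi\rangle_F = \mathrm{tr}\big((\Id-\tpi)\tP^\top A\tpi\big) = \mathrm{tr}\big(\tpi(\Id-\tpi)\tP^\top A\big),
\]
invoking the symmetry $\tpi^\top = \tpi$ and the cyclic invariance of the trace. The factor $\tpi(\Id-\tpi) = \tpi - \tpi^2$ vanishes by idempotency, so the inner product is zero; this is the entire content of the first assertion.

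The Pythagorean identity is then immediate: taking $A = \tP$ gives $\langle \tP - \tLam, \tLam\rangle_F = 0$, and expanding $\|\tP\|_F^2 = \|(\tP - \tLam) + \tLam\|_F^2$ yields~\eqref{eq:Pythagoras}. For the best-approximation claim I would first verify that every competitor $\tLam' := D_q^{-1/2}\lambda'\Gamma D_p^{1/2}$ lies in the range $\{A\tpi : A \in \bR^{m\times n}\}$. This rests on the identity $\Gamma\Pi = \Gamma$, which I would check entrywise from~\eqref{equ:correspondence_partition_affiliation_matrix} and~\eqref{equ:definition_Pi_and_tpi}: for $k = \gamma(j)$ the $(k,j)$-entry of $\Gamma\Pi$ collapses to $\big(\sum_{i:\gamma(i)=k}p_i\big)/\big(\sum_{l:\gamma(l)=k}p_l\big)=1$, and it is zero otherwise. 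Consequently $\tLam'\tpi = D_q^{-1/2}\lambda'\Gamma\Pi D_p^{1/2} = \tLam'$, so $\tLam - \tLam'$ is again of the form $A\tpi$, and the orthogonality result gives $\|\tP - \tLam'\|_F^2 = \|\tP - \tLam\|_F^2 + \|\tLam - \tLam'\|_F^2$, minimized exactly at $\tLam' = \tLam$.

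I expect the main obstacle to be the bookkeeping in the best-approximation step rather than the orthogonality itself: one must confirm that the feasible set $\{D_q^{-1/2}\lambda'\Gamma D_p^{1/2}\}$ genuinely sits inside the range of right-multiplication by $\tpi$, which is precisely where $\Gamma\Pi = \Gamma$ enters (the rank-$r$ hypothesis on $\Gamma$ moreover guarantees that this feasible set fills out that range). I would also flag that the operator $M \mapsto M\tpi$ is itself an orthogonal projection on $(\bR^{m\times n}, \langle\cdot,\cdot\rangle_F)$, since it is idempotent and self-adjoint; this conceptually explains why $\tLam = \tP\tpi$ is the Frobenius-closest point and gives a coordinate-free route to~\eqref{eq:bestapprox}. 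Finally, I would note that the $\argmax$ in~\eqref{eq:bestapprox} should read $\argmin$, consistent with the phrase ``best approximation''.
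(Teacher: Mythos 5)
Your proof is correct, and its first two steps---the trace computation showing $\langle \tP - \tLam, A\tpi\rangle_F = \operatorname{tr}\big(\tpi(\Id-\tpi)\tP^\top A\big) = 0$ via symmetry and idempotency of $\tpi$, and the resulting Pythagorean identity \eqref{eq:Pythagoras}---coincide with the paper's argument. Where you genuinely diverge is the best-approximation step \eqref{eq:bestapprox}. The paper shows that every competitor $\lambda'\Gamma$ lies in the range of right-multiplication by $\Pi$ through an explicit matrix factorization: writing $P = ND_N$, $D_1\Gamma = \Gamma D_2$ and $\Pi = D_N^{-1}\Gamma^\top\Gamma D_2$ for suitable diagonal matrices and inserting $(\Gamma\Gamma^\top)^{-1}\Gamma\Gamma^\top$, which is precisely where the rank-$r$ hypothesis (invertibility of $\Gamma\Gamma^\top$) enters. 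You instead prove the identity $\Gamma\Pi = \Gamma$ entrywise from \eqref{equ:correspondence_partition_affiliation_matrix} and \eqref{equ:definition_Pi_and_tpi}, which immediately gives $\lambda'\Gamma = (\lambda'\Gamma)\Pi$, hence $\tLam'\tpi = \tLam'$ for every competitor $\tLam' = D_q^{-1/2}\lambda'\Gamma D_p^{1/2}$, and you then conclude by the same orthogonality. Your route is shorter, avoids the diagonal-matrix bookkeeping, and---worth emphasizing---does not use the rank-$r$ assumption at all: $\Gamma\Pi = \Gamma$ holds for any hard affiliation matrix, since inactive latent states contribute zero rows on both sides, so your argument shows that the hypothesis $\mathrm{rank}\,\Gamma = r$ in the corollary is superfluous for \eqref{eq:bestapprox}. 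Your two closing remarks are also sound: the map $M \mapsto M\tpi$ is idempotent and self-adjoint for $\langle\cdot,\cdot\rangle_F$, hence an orthogonal projection on $\bR^{m\times n}$, which is the coordinate-free content of the entire corollary; and the $\argmax$ in \eqref{eq:bestapprox} is indeed a typo for $\argmin$, exactly as the paper's own proof (which concludes with a minimization via \eqref{eq:FrobOrthogonality}) confirms.
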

\begin{proof}
Note that for real matrices $A$ and $B$ of the same dimension, the Frobenius norm is induced by the inner product $\langle A, B\rangle_F := \operatorname{tr} (A^\top B)$, where $\operatorname{tr}(\cdot)$ denotes the trace.
By \eqref{eq:P_tilde} and \eqref{equ:definition_Pi_and_tpi} the identity $\Lambda = P \Pi$ from Theorem~\ref{thm:projection} implies $\tLam = \tP\tpi$.
Therefore, using the properties~$\tpi^{\top} = \tpi$ and $\smash{ \tpi^2 = \tpi }$ from Lemma~\ref{lemma:DBMR_projection_tpi_properties} as well as invariance of the trace under cyclic permutation of factors, we obtain, for each $A\in\bR^{m\times n}$
\begin{equation}
    \label{eq:FrobOrthogonality}
    \langle \tP-\tLam, A\tpi \rangle_F = \operatorname{tr} \left ( \big(\tP-\tLam\big)^\top A\tpi \right) = \operatorname{tr} \left( \tpi \big(\Id-\tpi\big) \tP^\top A \right) = 0.
\end{equation}
This readily implies~\eqref{eq:Pythagoras}.

For the last statement, let us define the diagonal matrices $D_N = \diag (\mathds{1}^{\top}N)^{-1}$, $D_1 = \diag (\mathds{1}^{\top}N\Gamma)^{-1}$, and~$D_2 = \diag( \mathds{1}^{\top}N \Gamma^{\top}\Gamma)^{-1}$. It is a straightforward calculation to see that $P = ND_N$, $D_1\Gamma = \Gamma D_2$, and~$\Pi = D_N^{-1}\Gamma^{\top}\Gamma D_2$. Since $\Gamma$ is assumed to have rank~$r$, $\Gamma\Gamma^{\top}$ is invertible, yielding that for an arbitrary~$\lambda' \in \mathbb{R}^{m\times r}$ we have
\begin{align*}
    \lambda'\Gamma &= \lambda' D_1^{-1} D_1\Gamma = \lambda' D_1^{-1} \Gamma D_2
    = \lambda' D_1^{-1} (\Gamma\Gamma^\top)^{-1} \Gamma \Gamma^\top \Gamma D_2 \\
    &= \lambda' D_1^{-1} (\Gamma\Gamma^\top)^{-1} \Gamma D_N \ D_N^{-1}\Gamma^\top \Gamma D_2 =: A' \Pi
\end{align*}
This means that $\lambda' \Gamma$ is of the form $A' \Pi$ for a suitable $A' \in \mathbb{R}^{m\times n}$, and~\eqref{eq:bestapprox} follows from~\eqref{eq:FrobOrthogonality}.
\end{proof}
Hence, \eqref{eq:lamda_hat} is optimal in the sense of~\eqref{eq:bestapprox} in addition to being optimal in terms of the (relaxed) maximum likelihood from section~\ref{ssec:DBMR}.

Corollary~\ref{cor:DBMR_as_best_approximation} will be used in Corollary~\ref{cor:relating_objectives} below to relate the objectives degree of coherence $\cC_{\lowrank}$ and relaxed likelihood~$\hat{\ell}$.


\subsection{Pointwise singular value bounds}

Recall that $\sigma_i(M)$, $i \ge 1$, denotes the $i$-th singular value of a matrix $M$, in descending order.
By Theorem~\ref{thm:projection}\eqref{item:projection_thm_eigenvectors}, $\Pi p = p$ and hence $p^{1/2}$ is a right eigenvector of $\tpi$ with eigenvalue~$1$:
\[
\tpi p^{1/2} = D_p^{-1/2} \Pi D_p^{1/2} p^{1/2} = D_p^{-1/2} \Pi p = D_p^{-1/2} p = p^{1/2}.
\]
As discussed in section~\ref{sec:coh_svd}, we have that~$\sigma_1(\tP)=1$ with corresponding right singular vector~$p^{1/2}$. Thus, we also have that~$\sigma_1(\tP\tpi)=1$ with corresponding right singular vector~$p^{1/2}$, since $\tpi$ is an orthogonal projection and hence $\tP\tpi$ cannot have singular values larger than~$1$. Thus, $\tP$ and $\tP\tpi$ share the same leading singular value with the same left and right singular vector pair.
As for the comparison of the other singular values, the following holds true.

\begin{proposition}
\label{prop:sval_comp}
With $\tP$ defined in \eqref{eq:P_tilde} and $\tpi$ defined in \eqref{equ:definition_Pi_and_tpi}, we have that
\begin{equation}
    \label{eq:sval_ineq}
    \sigma_i(\tP\tpi) \le  \sigma_i(\tP),\quad i\le \min\{m,n\}.
\end{equation}
\end{proposition}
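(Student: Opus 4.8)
The statement is a general fact about right-multiplication by an orthogonal projection, and the plan is to reduce it to the Courant--Fischer characterization of singular values (Theorem~\ref{thm:Courant_Fischer_singular_values_version}). Recall from Lemma~\ref{lemma:DBMR_projection_tpi_properties} that $\tpi$ is an orthogonal projection, i.e.\ $\tpi^\top = \tpi$ and $\tpi^2 = \tpi$; let $R := \mathrm{Ran}\,\tpi \subseteq \bR^n$ and $r := \dim R$. The guiding observation is that right-multiplication by $\tpi$ confines the action of $\tP$ to the subspace $R$: for any $x \in \bR^n$ we have $\tP\tpi x = \tP(\tpi x)$ with $\tpi x \in R$, while $\tpi$ acts as the identity on $R$ and annihilates $R^\perp$.

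First I would identify the singular values of $\tP\tpi$ with those of the restriction $\tP|_R \colon R \to \bR^m$. Since $(\tP\tpi)^\top \tP\tpi = \tpi\, \tP^\top \tP\, \tpi$ annihilates $R^\perp$ and coincides on $R$ with the compression of $\tP^\top\tP$ to $R$, the decomposition $\bR^n = R \oplus R^\perp$ block-diagonalizes this operator, so its eigenvalues are exactly those of $(\tP|_R)^\top(\tP|_R)$ together with $n-r$ additional zeros. Ordering these values shows $\sigma_i(\tP\tpi) = \sigma_i(\tP|_R)$ for $i \le r$ and $\sigma_i(\tP\tpi) = 0$ for $i > r$.

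Second I would compare $\tP|_R$ with $\tP$ via the max--min form of Courant--Fischer,
\[
\sigma_i(\tP) = \max_{\substack{V \subseteq \bR^n \\ \dim V = i}} \ \min_{\substack{x \in V \\ \norm{x}_2 = 1}} \norm{\tP x}_2,
\qquad
\sigma_i(\tP|_R) = \max_{\substack{V \subseteq R \\ \dim V = i}} \ \min_{\substack{x \in V \\ \norm{x}_2 = 1}} \norm{\tP x}_2 \quad (i \le r).
\]
Every $i$-dimensional subspace of $R$ is in particular an $i$-dimensional subspace of $\bR^n$, so the maximization defining $\sigma_i(\tP|_R)$ ranges over a subfamily of those defining $\sigma_i(\tP)$; hence $\sigma_i(\tP|_R) \le \sigma_i(\tP)$ for $i \le r$. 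Combined with the first step, and noting $\sigma_i(\tP\tpi) = 0 \le \sigma_i(\tP)$ for $i > r$, this yields \eqref{eq:sval_ineq} for all $i \le \min\{m,n\}$.

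The only delicate point is the first step: making precise that appending the zero eigenvalues coming from the action of $\tpi$ on $R^\perp$ does not disturb the ordering, so that $\sigma_i(\tP\tpi)$ indeed equals $\sigma_i(\tP|_R)$ in the claimed range. An alternative route that sidesteps this bookkeeping is to invoke the standard singular value inequality $\sigma_i(AB) \le \sigma_i(A)\,\sigma_1(B)$ directly with $A = \tP$ and $B = \tpi$: since the singular values of the orthogonal projection $\tpi$ lie in $\{0,1\}$, we have $\sigma_1(\tpi) \le 1$, and \eqref{eq:sval_ineq} follows at once.
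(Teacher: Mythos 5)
Your proposal is correct. Both your argument and the paper's self-contained proof ultimately rest on the max--min (Courant--Fischer) characterization of singular values, but the reductions are organized differently. The paper fixes the span $F_i$ of the leading $i$ right singular vectors of $\tP\tpi$, observes that $\ker\tpi\cap F_i=\{0\}$ whenever $\sigma_i(\tP\tpi)>0$ (so that $\dim\tpi F_i=i$), and then uses $\tpi F_i$ as a competitor subspace in the max--min formula for $\sigma_i(\tP)$; this requires the small case distinction on $\sigma_i(\tP\tpi)>0$ and the kernel-intersection bookkeeping. You instead block-diagonalize $(\tP\tpi)^\top\tP\tpi=\tpi\,\tP^\top\tP\,\tpi$ along $\bR^n=R\oplus R^\perp$ to identify $\sigma_i(\tP\tpi)$ with $\sigma_i(\tP|_R)$, and then conclude by subspace monotonicity of the max--min, since $i$-dimensional subspaces of $R$ form a subfamily of those of $\bR^n$. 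What your route buys is the elimination of the case distinction and of the dimension-counting step; what it costs is the need to justify the compression identity $(\tP|_R)^\top(\tP|_R)=P_R\tP^\top\tP|_R$ and to apply Courant--Fischer to an operator defined on a subspace, which strictly speaking requires identifying $R$ with $\bR^{\dim R}$ via an orthonormal basis, as the paper's Theorem~\ref{thm:Courant_Fischer_singular_values_version} is stated for matrices only -- a standard but nonvacuous point. Finally, your alternative one-line argument via $\sigma_i(AB)\le\sigma_i(A)\,\sigma_1(B)$ with $\sigma_1(\tpi)\le 1$ is exactly the paper's opening remark, which cites the corresponding inequality from Gohberg--Kre\u{\i}n together with $\|\tpi\|_2=1$; so in that variant you have independently rediscovered the paper's short route, while your main argument is a clean alternative to its self-contained one.
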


\begin{proof}
The claim follows from \cite[(2.3) on pp.~27]{gohberg1978introduction} by noting that $\|\tpi\|_2=1$. For the readers' convenience, we give a proof based on the Courant--Fischer theorem:
Let us fix $i\le \min\{m,n\}$ such that $\sigma_i(\tP\tpi) > 0$ (otherwise, inequality~\eqref{eq:sval_ineq} holds trivially), in particular, $i\le \mathrm{rank}(\tpi)$.
Let $F_i$ denote the subspace spanned by the first $i$ right singular vectors of $\tP\tpi$.
Since the associated singular values are all nonzero, $\mathrm{ker}\,\tpi \cap F_i = \{0\}$ and thereby
\begin{equation}
	\label{eq:dim_Fi}
	\dim F_i
	=
	\dim \{ \tpi f \mid f \in F_i \}
	\eqqcolon
	\dim \tpi F_i.
\end{equation}
Since $\tpi$ is an orthogonal projection, $\|\tpi h\|_2 \le \|h\|_2$ for each $h \in \bR^{n}$, and Theorem \ref{thm:Courant_Fischer_singular_values_version} (Courant--Fischer) implies
\[
\sigma_i(\tP\tpi)
=
\min_{\substack{h\in F_i\\ \|h\|_2=1}} \norm{\tP\tpi h}_{2}
\leq 
\min_{\substack{\tilde{h}\in \tpi F_i\\ \|\tilde{h}\|_2=1}} \norm{\tP \tilde{h}}_{2}
\leq
\max_{W \in \fW_{i}}  \min_{\substack{\tilde{h}\in W\\ \|\tilde{h}\|_2=1}}  \norm{\tP \tilde{h}}_{2}
=
\sigma_i(\tP),
\]		
where $\fW_{i}$ denotes the set of $i$-dimensional subspaces of $\bR^{n}$.
This proves the claim.
\end{proof}

\section{Relations between the Frobenius norm and the relaxed likelihood objectives}
\label{sec:relation_FrobKL}

Recall that the degree of coherence $\cC_{r}$ of the full matrix $P$ and the reduced matrix $\Lambda$ is defined via the singular values of the scaled transition matrices $\tP = \linebreak[3] \smash{ D_q^{-1/2} P D_p^{1/2} }$ and $\smash{ \tLam = D_q^{-1/2} \Lambda D_p^{1/2} }$, respectively, and that $\sigma_i(\tLam) \leq  \sigma_i(\tP)$ for each $i\le \min\{m,n\}$ by Proposition~\ref{prop:sval_comp}.

Noting that the squared Frobenius norm $\|\cdot\|_F^{2}$ of a matrix equals the sum of its squared singular values, it is therefore natural to measure the discrepancy between full and reduced models by the corresponding difference in Frobenius norm.
Theorem~\ref{thm:Frob_DBMR_bound} below relates $\| \tP - \tLam \|_{F}^{2}$ to the relaxed likelihood $\hat{\ell}$ from~\eqref{eq:DBMR_objective} that is maximized by DBMR, indicating that DBMR provides a quasi-optimal solution of the (relaxed) coherence problem~\eqref{eq:coherence_relax1}.

\begin{remark}
    In the context of Nonlinear Matrix Factorization, \cite[Equations (8)--(10)]{ding2006nonnegative} show that, assuming small errors and linearizing the objective around the optimum, a maximum-likelihood estimation of nonnegative factor matrices can be connected to $\chi^2$ statistics.
    This leads them to the minimization of the Frobenius-norm difference of an empirical frequency matrix and its factorized approximation as well as to a connection to the maximum likelihood setting. They do not elaborate this any further, eventually.
\end{remark}

More generally, Theorem~\ref{thm:Frob_DBMR_bound} establishes an a posteriori bound between the two main NMF objectives discussed in section~\ref{section:Intro}, namely the Frobenius norm $\| A - BC \|_F$ and the (generalized\footnote{The Kullback--Leibler divergence $\KLD{A}{BC}$ is generalized in the sense that $A$ as well as $BC$ represent \emph{unnormalized} probability distributions.}) Kullback--Leibler divergence~$\KLD{A}{BC}$.
For this purpose, we require Pinsker-like inequalities for the weighted $\ell^{2}$ norm in Appendix \ref{section:Pinsker_l_2}.
These are based on the concept of \emph{balancedness} of a vector $x\in\bR^{m}$ that we introduce in the following.
Roughly speaking, we call a vector $x\in\bR^{m}$ balanced if $\norm{x}_{\infty}/\norm{x}_{1} \ll 1$.
Note that the inequality $\norm{x}_{\infty} \leq \norm{x}_{1}$ holds true in general and equality only holds for (multiples of) standard unit vectors. 
On the other hand, the above ratio is minimal if all entries of $x$ have the same modulus, i.e.\ $x_{i} = \pm \norm{x}_{\infty}$ for each $i = 1,\dots,m$.
In other words, for $x$ to be balanced, the ``mass'' of the vector (measured by $\norm{x}_{1}$) should not be attributed to one or just a few entries, with the others being zero or close to zero, but should be distributed rather evenly among the entries.
More generally, $q$-balancedness of $x$ indicates that the ratio $|x_{i}|/q_{i}$ is close to constant in $i$, with $q\in\bR_{>0}^{m}$ being a strictly positive probability vector:

\begin{definition}
\label{def:balancedness}
For $m\in\bN$ and a strictly positive probability vector $q\in\bR_{>0}^{m}$, we define the \emph{balancedness} and the \emph{$q$-balancedness} of a vector $x\in\bR^{m}$ by
\[
\fB(x)
\coloneqq
\frac{\norm{x}_{1}}{m \norm{x}_{\infty}}
\in
[\tfrac{1}{m}, 1],
\qquad
\fB_{q}(x)
\coloneqq
\frac{\norm{x}_{1}}{\max_{i} \frac{|x_{i}|}{q_{i}}}
\in
[\min_{i} q_{i}, 1],
\qquad
\text{if }
x\neq 0,
\]
and by $\fB(0) \coloneqq \fB_{q}(0) \coloneqq 1$.
\end{definition}

\begin{remark}
Note that $\fB(x) = \fB_{q}(x)$ for the vector $q = (1/m,\dots,1/m)$.
\end{remark}

\begin{theorem}
\label{thm:Frob_DBMR_bound}
Let $\Lambda\in \bR^{m\times n}$, $\lambda \in \bR^{m\times \lowrank}$ and $\Gamma \in \{0,1\}^{\lowrank\times n}$ be left stochastic matrices such that $\Lambda = \lambda \Gamma$ and let $\tP$ and $\tLam$ be given by \eqref{eq:P_tilde}.
Further, let $\alpha_{j} \coloneqq \alpha(P_{\bullet j},\Lambda_{\bullet j}) \coloneqq \smash{\frac{2}{3} \max_{i} \frac{|P_{ij}-\Lambda_{ij}|}{P_{ij}}} \in [0,\infty]$, $j=1,\dots,n$,
\begin{align*}
\kappa_{q}^{(1)}
&\coloneqq
\tfrac{1}{2}
\min_{j=1,\dots,n} \fB_{q}(P_{\bullet j} - \Lambda_{\bullet j}),
\\
\kappa_{q}^{(2)}
&\coloneqq
\tfrac{1}{2}
\min_{j=1,\dots,n} \fB_{q}(P_{\bullet j}) (1 - \alpha_{j}),
\end{align*}
$\kappa^{\mathrm{pr}} \coloneqq \min_{i} q_{i} /2$ and $\kappa^{\mathrm{post}} \coloneqq \max(\kappa_{q}^{(1)},\kappa_{q}^{(2)})$.
Then, $\kappa^{\mathrm{post}} \geq \kappa^{\mathrm{pr}}$ and, for $\kappa \in \{ \kappa^{\mathrm{pr}} , \kappa^{\mathrm{post}} \}$,
\begin{equation}
\label{eq:Frob_DBMR_bound}
\| \tP - \tLam \|_{F}^{2}
\ \leq\ 
\kappa^{-1} \sum_{j=1}^n p_j \KLD{P_{\bullet j}}{\Lambda_{\bullet j}}
\ =\ 
\frac{1}{\kappa S}
\left( \hat{\ell}(P,\Id_n) -
\hat{\ell}(\lambda,\Gamma) \right),
\end{equation}
where $\hat{\ell}$ is the DBMR objective given by~\eqref{eq:DBMR_objective}. 
\end{theorem}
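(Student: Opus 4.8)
The plan is to reduce the matrix inequality \eqref{eq:Frob_DBMR_bound} to a family of column-wise estimates, each an instance of a weighted $\ell^{2}$ Pinsker-type inequality, and then aggregate these against the weights $p_j$. First I would dispose of the right-hand equality. Since $\Gamma$ is binary, Lemma~\ref{lemma:dbmr constrained} gives $\hat\ell(\lambda,\Gamma)=\sum_{i,j}N_{ij}\log\Lambda_{ij}$, and similarly $\hat\ell(P,\Id_n)=\sum_{i,j}N_{ij}\log P_{ij}$. Using $N_{ij}=S\,p_jP_{ij}$ (which follows from \eqref{eq:N_and_pP_estimators}),
\[
\hat\ell(P,\Id_n)-\hat\ell(\lambda,\Gamma)
=\sum_{i,j}N_{ij}\log\tfrac{P_{ij}}{\Lambda_{ij}}
=S\sum_{j}p_j\sum_i P_{ij}\log\tfrac{P_{ij}}{\Lambda_{ij}}
=S\sum_{j}p_j\,\KLD{P_{\bullet j}}{\Lambda_{\bullet j}},
\]
which is the claimed identity after division by $\kappa S$.

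Next I would rewrite the left-hand side column-wise. From $\tP_{ij}=q_i^{-1/2}P_{ij}p_j^{1/2}$ and $\tLam_{ij}=q_i^{-1/2}\Lambda_{ij}p_j^{1/2}$ one gets $(\tP-\tLam)_{ij}=q_i^{-1/2}p_j^{1/2}(P_{ij}-\Lambda_{ij})$, hence
\[
\norm{\tP-\tLam}_F^{2}
=\sum_j p_j\sum_i\frac{(P_{ij}-\Lambda_{ij})^{2}}{q_i}
=\sum_j p_j\,\norm{P_{\bullet j}-\Lambda_{\bullet j}}_{q^{-1}}^{2}.
\]
Comparing with the identity above, the theorem reduces to the uniform per-column bound $\norm{P_{\bullet j}-\Lambda_{\bullet j}}_{q^{-1}}^{2}\le\kappa^{-1}\KLD{P_{\bullet j}}{\Lambda_{\bullet j}}$ for the two stated choices of $\kappa$.

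I would then invoke (and, since I have reconstructed it, sketch) the weighted $\ell^{2}$ Pinsker inequality of Proposition~\ref{prop:l2_Pinsker}, which supplies the two candidate constants. Writing $u=P_{\bullet j}$, $v=\Lambda_{\bullet j}$, the first estimate follows from $\norm{u-v}_{q^{-1}}^{2}\le(\max_i\tfrac{|u_i-v_i|}{q_i})\,\norm{u-v}_1=\norm{u-v}_1^{2}/\fB_{q}(u-v)$ combined with classical Pinsker $\norm{u-v}_1^{2}\le2\,\KLD{u}{v}$, giving the per-column constant $\tfrac12\fB_{q}(u-v)$. The second estimate uses the sharper scalar bound $-\log(1+s)\ge-s+\tfrac{s^{2}}{2}(1-\tfrac23|s|)$ (valid for $s>-1$, proved by checking the sign of the derivative), applied to $s_i=(v_i-u_i)/u_i$; since $\sum_i(u_i-v_i)=0$ this yields $\KLD{u}{v}\ge(1-\alpha_j)\sum_i\tfrac{(u_i-v_i)^{2}}{2u_i}$, and the weight conversion $\sum_i\tfrac{(u_i-v_i)^{2}}{u_i}\ge(\min_i\tfrac{q_i}{u_i})\,\norm{u-v}_{q^{-1}}^{2}=\fB_{q}(u)\,\norm{u-v}_{q^{-1}}^{2}$ produces the constant $\tfrac12\fB_{q}(u)(1-\alpha_j)$. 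These are exactly $\kappa_q^{(1)}$ and $\kappa_q^{(2)}$ before taking the minimum over $j$.

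Finally I would aggregate and compare the constants. Taking minima over $j$ yields the global constants $\kappa_q^{(1)},\kappa_q^{(2)}$; the first per-column bound always holds (with $\kappa_q^{(1)}>0$), while the second holds whenever $\alpha_j<1$, which is automatic precisely when $\kappa_q^{(2)}>0$. Consequently the bound with the larger constant $\kappa^{\mathrm{post}}=\max(\kappa_q^{(1)},\kappa_q^{(2)})$ is valid in all cases, and summing against $p_j$ gives \eqref{eq:Frob_DBMR_bound}. The inequality $\kappa^{\mathrm{post}}\ge\kappa^{\mathrm{pr}}$ follows from $\fB_{q}(x)\ge\min_i q_i$ (the lower end of the range in Definition~\ref{def:balancedness}), so that $\kappa_q^{(1)}\ge\tfrac12\min_i q_i=\kappa^{\mathrm{pr}}$, and the looser bound with $\kappa^{\mathrm{pr}}$ is then immediate. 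I expect the main obstacle to be the second per-column estimate: controlling the third-order Taylor remainder of $-\log(1+s)$ uniformly through $\alpha_j$, converting the $u$-weighted $\ell^{2}$ norm into the $q$-weighted one via $\fB_{q}$, and correctly tracking the degenerate cases (zero entries of $P_{\bullet j}$, or $\alpha_j\ge1$, where $\kappa_q^{(2)}$ becomes vacuous and one must fall back on $\kappa_q^{(1)}$).
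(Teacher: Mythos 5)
Your proposal is correct and follows essentially the same route as the paper: it reduces \eqref{eq:Frob_DBMR_bound} to column-wise weighted $\ell^{2}$ Pinsker-type bounds (parts \eqref{item:weighted_l2_Pinsker_Hoelder} and \eqref{item:weighted_l2_Pinsker_Taylor} of Proposition~\ref{prop:l2_Pinsker}, which you reconstruct via H\"older plus classical Pinsker, and via the third-order logarithmic Taylor bound of Lemma~\ref{lemma:technical_inequality_log_Taylor}, respectively), aggregates against the weights $p_j$, converts the KL sum into the likelihood difference using $N_{ij}=S\,p_jP_{ij}$ and Lemma~\ref{lemma:dbmr constrained}, and handles the $\alpha_j\ge 1$ degeneracy and the comparison $\kappa^{\mathrm{post}}\geq\kappa^{\mathrm{pr}}$ exactly as the paper does.
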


\begin{proof}
Note that $\kappa^{\mathrm{post}} \geq \kappa^{\mathrm{pr}}$ by the definition of $\kappa_{q}^{(1)}$ and $\fB_{q}$, so it suffices to to prove \eqref{eq:Frob_DBMR_bound} for $\kappa = \kappa^{\mathrm{post}}$.
Observe that $P_{ij} p_{j} = N_{ij}/S$ for any $i=1,\dots,m$ and $j=1,\dots,n$, and that $\ell(\lambda,\Gamma) = \hat{\ell}(\lambda,\Gamma)$, since $\Gamma$ is an affiliation matrix, cf.\ \eqref{eq:new_prob}--\eqref{eq:DBMR_objective}.
Hence, Proposition~\ref{prop:l2_Pinsker} \rev{in Appendix~\ref{section:Pinsker_l_2}} implies
(note that we do not need to verify the condition $\alpha_{j} < 1$ for each $j=1,\dots,n$, since, in this case, $\smash{ \kappa_{q}^{(2)} < 0 }$ and $\kappa = \smash{ \kappa_{q}^{(1)} } $)
\begin{align*}
\| \tP - \tLam \|_F^2
&=
\sum_{i=1}^m \sum_{j=1}^n \frac{1}{q_i} (P_{ij} - \Lambda_{ij} )^2 p_j
\\
&\leq
\min\bigg(
\sum_{j=1}^n p_j \sum_{i=1}^m \frac{(P_{ij} - \Lambda_{ij} )^2}{q_i}
\ , \ 
\frac{1}{\min_{i} q_i} \sum_{j=1}^n p_j \| P_{\bullet j} - \Lambda_{\bullet j} \|_2^2
\bigg)
\\
&\leq
\kappa^{-1} 
\sum_{j=1}^n p_j \KLD{P_{\bullet j}}{\Lambda_{\bullet j}}
\\
&=
\kappa^{-1} \sum_{j=1}^n p_j \sum_{i=1}^{m} P_{ij} \log \frac{P_{ij}}{\Lambda_{ij}}
\\
&=
(\kappa S)^{-1} \left( \sum_{j=1}^n \sum_{i=1}^{m} N_{ij} \log P_{ij} -
\sum_{i=1}^{m} \sum_{j=1}^n N_{ij} \log \sum_{k=1}^{\lowrank} \lambda_{ik} \Gamma_{kj} \right)
\\
&=
(\kappa S)^{-1} \left( \sum_{j=1}^n \sum_{i=1}^{m} N_{ij} \sum_{k=1}^n \delta_{kj} \log P_{ik} -
\sum_{i=1}^{m} \sum_{j=1}^n N_{ij} \sum_{k=1}^{\lowrank} \Gamma_{kj}\log \lambda_{ik} \right)
\\
&=
(\kappa S)^{-1} 
\bigg( \hat{\ell}(P,\Id_n) - \hat{\ell}(\lambda,\Gamma) \bigg).
\end{align*}
\end{proof}

For the interpretation of this result, a few remarks are in order.

\begin{remark}
\label{rem:Frob_DBMR_bound}
\quad
\begin{enumerate}[(a)]
    \item 
    Note that \eqref{eq:Frob_DBMR_bound} provides a (weaker) \emph{a priori} bound for $\kappa = \kappa^{\mathrm{pr}}$ and a (sharper) \emph{a posteriori} estimate for $\kappa = \kappa^{\mathrm{post}}$ due to its dependence on the solution $\Lambda$ of the DBMR problem.    

    \item
    In the proof of Theorem~\ref{thm:Frob_DBMR_bound} we used only two of the four inequalities established in Proposition~\ref{prop:l2_Pinsker} \rev{in Appendix~\ref{section:Pinsker_l_2}}.
    Using all four inequalities, and defining $\kappa \coloneqq \max(\kappa^{(1)},\kappa_{q}^{(1)},\kappa^{(2)},\kappa_{q}^{(2)})$ with
    \begin{align*}
    \kappa^{(1)}
    &\coloneqq
    \tfrac{m}{2}
    \min_{i=1,\dots,m} q_{i} \, 
    \min_{j=1,\dots,n} \fB(P_{\bullet j} - \Lambda_{\bullet j}),
    \\    
    \kappa^{(2)}
    &\coloneqq
    \tfrac{m}{2}
    \min_{i=1,\dots,m} q_{i} \, 
    \min_{j=1,\dots,n} \fB(P_{\bullet j}) (1 - \alpha_{j}),
    \end{align*}    
    one would obtain a seemingly sharper bound in \eqref{eq:Frob_DBMR_bound}.
    However,
    \begin{alignat*}{3}
    \kappa_{q}^{(1)}
    &=
    \tfrac{1}{2} \min_{j=1,\dots,n}
    \frac{\norm{P_{\bullet j} - \Lambda_{\bullet j}}_{1}}
    {\max_{i} \frac{|P_{i j} - \Lambda_{i j}|}{q_{i}}}
    && \geq
    \frac{\min_{i} q_{i}}{2}\, 
    \min_{j}
    \frac{\norm{P_{\bullet j} - \Lambda_{\bullet j}}_{1}}{\norm{P_{\bullet j} - \Lambda_{\bullet j}}_{\infty}}
    &&
    =
    \kappa^{(1)},
    \\
    \kappa_{q}^{(2)}
    &=
    \tfrac{1}{2} \min_{j}
    \frac{1 - \alpha_{j}}
    {\max_{i} \frac{P_{ij}}{q_{i}}}
    && \geq
    \frac{\min_{i} q_{i}}{2}\, 
    \min_{j}
    \frac{1 - \alpha_{j}}
    {\norm{P_{\bullet j}}_{\infty}}
    &&
    =
    \kappa^{(2)},
    \end{alignat*}
    so this would not lead to an improvement over Theorem~\ref{thm:Frob_DBMR_bound}.
    
    \item
    Clearly, the higher the $q$-balancedness $\fB_{q}(P_{\bullet j})$ of $P_{\bullet j}$ in the formula of $\kappa_{q}^{(2)}$ is for each $j=1,\dots,n$, the sharper the inequality \eqref{eq:Frob_DBMR_bound} becomes.
    Note that this balancedness is large (for fixed $j$) if $P_{\bullet j} \approx q$.
    The dynamic interpretation of $P_{\bullet j} \approx q$ is that the state $j$ is mapped to a distribution that is close to the final distribution $q$. If that is true for every $j$, then there is little coherence in the system, as $\smash{ P \approx q \mathds{1}_{[n]}^{\top} }$, with singular values $\sigma_1 = 1$ and~$\sigma_n \approx 0$, $n\ge 2$.
    In contrast, $\smash{ \kappa_{q}^{(1)} }$ is large if the $q$-balancedness of the \emph{difference}, $\fB_{q}(P_{\bullet j} - \Lambda_{\bullet j})$, is large for each $j=1,\dots,n$.
    On the one hand, this seems to be a less restrictive requirement than the previous one. On the other hand, it is harder to characterize a priori, as all we know about the columns of $P$ and $\Lambda$ is that they are probability vectors, hence their difference has zero mean.
    
    \item DBMR maximizes $\hat{\ell}(\lambda,\Gamma)$ over all pairs of stochastic matrices of given fixed di\-men\-sions, cf.\ Problem~\ref{prob:DBMR}. Thus, within the bound given in Theorem~\ref{thm:Frob_DBMR_bound}, DBMR minimizes the Frobenius norm of the difference between the full model and the low rank model.
    By the Eckart--Young--Mirsky theorem \cite[Theorem~4.4.7]{HsEu15}, the best rank-$\lowrank$ approximation of a matrix with respect to the Frobenius norm is given by the composition of the leading $\lowrank$ singular modes of the matrix, cf.\ \eqref{equ:leading_r_singular_modes}.
    Theorem~\ref{thm:Frob_DBMR_bound} thus states that the optimal DBMR solution is a quasi-optimal approximation of the leading $\lowrank$ singular modes of~$\tP$, and hence to the coherence problem, as discussed in section~\ref{sec:coh_svd}.
    In particular, the bound in \eqref{eq:Frob_DBMR_bound} is zero if $P=\Lambda$, as any reasonably tight bound of $\|\tP-\tLam\|_F$ should be.
    
    \item The seeming dependence of \eqref{eq:Frob_DBMR_bound} on $S$ is deceptive, since $\hat \ell$ itself scales ``linearly'' with~$S$.
    More precisely, the right-hand side of the bound converges almost surely as $S\to \infty$, if the data acquisition procedure is such that $\frac1S N$ converges almost surely for~$S\to\infty$. This is obvious from~\eqref{eq:DBMR_objective}.
    The i.i.d.\ sampling procedure assumed in section~\ref{section:Coherent_sets} satisfies this condition by the law of large numbers.

    \item We note that a related ``balancedness'' concept plays a role in a different a posteriori refinement of Pinsker's inequality~\cite[Theorem~2.1]{ordentlich2005distribution}, relating the total variation distance and the Kullback--Leibler divergence.
\end{enumerate}
\end{remark}

An alternative interpretation to Theorem~\ref{thm:Frob_DBMR_bound} arises by invoking Corollary~\ref{cor:DBMR_as_best_approximation}. 

\begin{corollary}
\label{cor:relating_objectives}
Under the assumptions of Theorem~\ref{thm:Frob_DBMR_bound},
\begin{equation}
\cC_{\lowrank}(\Lambda)
\ge
\frac{1}{\kappa S}
\Big( \hat{\ell}(\lambda,\Gamma) - \hat{\ell}(P,\Id_n) \Big) + \big\| \tP \big\|_{F}^{2} \, .
\end{equation}
\end{corollary}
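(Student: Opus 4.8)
The plan is to bound $\cC_{\lowrank}(\Lambda)$ from below by $\|\tLam\|_F^2$, and then rewrite $\|\tLam\|_F^2$ via the Pythagorean identity of Corollary~\ref{cor:DBMR_as_best_approximation} combined with the upper bound of Theorem~\ref{thm:Frob_DBMR_bound}. All three ingredients are already in place, so the work is to assemble them correctly.

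First I would establish $\cC_{\lowrank}(\Lambda)\ge\|\tLam\|_F^2$. Recall $\cC_{\lowrank}(\Lambda)=\sum_{i=1}^{\lowrank}\sigma_i(\tLam)$. Two facts close this step. On the one hand, $\tLam=\tP\tpi$ factors through $\bR^{\lowrank}$ (equivalently, $\Lambda=\lambda\Gamma$ with $\lambda\in\bR^{m\times\lowrank}$), so $\mathrm{rank}\,\tLam\le\lowrank$ and therefore $\|\tLam\|_F^2=\sum_{i=1}^{\lowrank}\sigma_i(\tLam)^2$, the remaining singular values being zero. On the other hand, each singular value satisfies $\sigma_i(\tLam)\le 1$, as noted just before Proposition~\ref{prop:sval_comp} (or via $\sigma_i(\tLam)=\sigma_i(\tP\tpi)\le\sigma_i(\tP)\le\sigma_1(\tP)=1$). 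Since $0\le\sigma_i(\tLam)\le 1$ forces $\sigma_i(\tLam)^2\le\sigma_i(\tLam)$, summing over $i\le\lowrank$ gives $\|\tLam\|_F^2\le\cC_{\lowrank}(\Lambda)$.

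Second, I would substitute the Pythagorean identity $\|\tLam\|_F^2=\|\tP\|_F^2-\|\tP-\tLam\|_F^2$ from Corollary~\ref{cor:DBMR_as_best_approximation}, and then replace $\|\tP-\tLam\|_F^2$ by its upper bound $\tfrac{1}{\kappa S}\bigl(\hat\ell(P,\Id_n)-\hat\ell(\lambda,\Gamma)\bigr)$ supplied by Theorem~\ref{thm:Frob_DBMR_bound}. Combining this with the first step and flipping the sign of the likelihood difference produces precisely the claimed inequality.

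The argument is essentially bookkeeping once the prior results are marshalled, so I do not expect a genuine obstacle. The only point requiring care is the conversion in the first step from the Frobenius norm (a sum of \emph{squared} singular values) to the coherence functional (a sum of singular values): this hinges exactly on the two observations that $\tLam$ has rank at most $\lowrank$ and that all its singular values lie in $[0,1]$, without which the inequality $\sigma_i^2\le\sigma_i$ and the truncation of the sum to $i\le\lowrank$ would both fail.
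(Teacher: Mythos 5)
Your proposal is correct and follows essentially the same route as the paper's own proof: bound $\cC_{\lowrank}(\Lambda)\ge\sum_{i\le\lowrank}\sigma_i^2(\tLam)=\|\tLam\|_F^2$ using $\sigma_i(\tLam)\in[0,1]$ and $\mathrm{rank}\,\tLam\le\lowrank$, then apply the Pythagorean identity of Corollary~\ref{cor:DBMR_as_best_approximation} and the bound of Theorem~\ref{thm:Frob_DBMR_bound}. The paper's proof is just a more compact write-up of exactly this chain of inequalities.
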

\begin{proof}
Using Corollary~\ref{cor:DBMR_as_best_approximation} and noting that all singular values of $\tLam$ satisfy $\sigma_i(\tLam) \in [0,1]$, we obtain
\[
\cC_{\lowrank}(\Lambda)
=
\sum_{i=1}^\lowrank \sigma_i(\tLam)
\ge
\sum_{i=1}^\lowrank \sigma_i^2(\tLam)
=
\big\| \tLam \big\|_F^2
=
\| \tP\| _{F}^{2} - \| \tP-\tLam\| _{F}^{2}.
\]
The claim follows directly from Theorem~\ref{thm:Frob_DBMR_bound}.
\end{proof}
Therefore, relying on the a priori error bound in \eqref{eq:Frob_DBMR_bound} with $\kappa = \kappa^{\mathrm{pr}}$, an increase of the DBMR objective $\hat{\ell}(\lambda,\Gamma)$ results in a sharper lower bound on the degree of coherence in~$\Lambda = \lambda\Gamma$.

\section{Numerical examples}
\label{sec:numerical_examples}

We will consider three examples and compare the performance of Algorithm~\ref{alg:the_alg} im\-ple\-menting DBMR \cite{gerber2017toward} which is available as open access, see section \ref{sec:code_availability}, with Algorithm~\ref{alg:classical_approach_coherence} implementing the classical approach to coherence.
In the first example we model a transition matrix with two perfectly coherent partition elements where one of these elements can again be subdivided into two strongly, but not perfectly coherent sets.
The second example is a discrete version of a map with three large (and several small) coherent sets; see~\cite[Example 1]{FrLlSa10}.
The third one, a benchmark fluid-dynamical system, will be elaborated in Example~\ref{example:periodically_perturbed_double_gyre} below and the subsequent discussion here only refers to the first two examples.

In each example we consider three different perturbations of the transition matrix: unperturbed ($\varepsilon = 0$), slightly perturbed ($\varepsilon = 2$ or~1) and strongly perturbed ($\varepsilon = 10$ or~4) in the following sense:
Each data point $(x,y) \in \bm{D}$ is replaced by a uniform random point from the set
\begin{equation}
\label{equ:perturbation_of_data}
\big( (x + \{ -\varepsilon,\dots,\varepsilon \}) \, \mathrm{mod}\, n \big) \times \big( (y + \{ -\varepsilon,\dots,\varepsilon \}) \, \mathrm{mod}\, n \big).
\end{equation}
Note that we assume the states to be ordered periodically, i.e., states $1$ and $n$ are adjacent.
For DBMR we perform $100$ independent runs with randomly generated initial affiliation matrices $\Gamma^{(0)}$ (i.e., the columns of this matrix are independent uniform random samples of the $\lowrank$ canonical unit vectors) and the best result in terms of the DBMR objective~\eqref{eq:DBMR_objective} is taken.
The following criteria for coherence are considered for the comparison between DBMR and the classical approach with $\lowrank=3$ latent states:
\begin{enumerate}[(a)]
\item
\label{coherence_criteria:singular_values}
The second and third singular values $\sigma_{2},\sigma_{3}$ (note that $\sigma_{1}=1$ by construction) of the low rank projected (and reweighted) transition matrices $\tLam$ (of DBMR) and of $\tP_{\textup{red}}$ of the classical approach (note that, by construction, the latter coincide with the ones of the full rank transition matrix $\tP$) are presented in Tables~\ref{Tab:sv_model_case} and \ref{Tab:sv_tripling}.
\item
\label{coherence_criteria:DBMR_objective}
The DBMR objective~\eqref{eq:DBMR_objective} is evaluated for the resulting affiliation matrices `DBMR-$\Gamma$' and `SVD-$\Gamma$' (which naturally correspond to partitions by~\eqref{equ:correspondence_partition_affiliation_matrix}) of Algorithms~\ref{alg:classical_approach_coherence} and~\ref{alg:the_alg}, and compared to the `default' affiliation matrix `default-$\Gamma$' given by our construction of the example (see the descriptions below).
For this purpose, the corresponding matrix $\lambda$ is chosen to maximize $\hat{\ell}(\cdot,\Gamma)$ in~\eqref{eq:DBMR_objective} and, hence, is given by~\eqref{eq:lamda_hat}.
In addition, we compare these values to the `reference value' $\hat\ell(P,\Id_{n})$ of the unreduced model.
The corresponding values are presented in Tables~\ref{Tab:sv_model_case} and~\ref{Tab:sv_tripling}.
\item
\label{coherence_criteria:bound_tightness}
Finally, we compare the objectives of the different model reduction tools by con\-si\-dering the tightness of the bound in Theorem~\ref{thm:Frob_DBMR_bound}. Recall
from Corollary~\ref{cor:relating_objectives}
that the quantity $\|\tP - \tLam\|_F^2$ is approximately monotonic in the degree of coherence~$\cC$ introduced in Definition \ref{def:coherence}. More precisely, the larger $\cC(\Lambda)$, the smaller $\|\tP - \tLam\|_F^2$. 
Tables~\ref{Tab:sv_model_case} and~\ref{Tab:sv_tripling} show both sides of the inequality~\eqref{eq:Frob_DBMR_bound} for $\Lambda$ obtained in the best DBMR run.
In addition, Figure~\ref{fig:DBMRtraj} in Appendix~\ref{sec:DBMR_stat} illustrates in how far the two objectives $\|\tP - \tLam\|_F^2$ and $\hat{\ell}$ are in line by comparing their values for a large number of corresponding pairs $(\lambda,\Gamma)$.
\item
\label{coherence_criteria:visual_comparison}
A visual comparison is performed in Figures~\ref{fig:CSI_1} and~\ref{fig:CSI_2} by plotting the transition matrix $P$ and its reduced versions $P_{\textup{red}}$ (SVD)\footnote{Note that $P_{\textup{red}}$ can have negative entries, so it need not be a transition matrix.} and $\Lambda = \lambda \Gamma$ (DBMR), where larger transition probabilities correspond to darker shades of gray.
In addition, the partitions of the input states corresponding to the respective affiliation matrices (default-$\Gamma$, SVD-$\Gamma$ and DBMR-$\Gamma$) are color-coded in yellow, green and red on the bottom line of the matrix images.
\end{enumerate}


\begin{figure}[htbp]
	\centering
	\begin{subfigure}[b]{0.015\textwidth}
		\centering
		\raisebox{4.0em}{\rotatebox{90}{\rev{Full matrix $P$}}}
	\end{subfigure}
	\hfill
	\begin{subfigure}[b]{0.32\textwidth}
		\centering
		\caption*{\hspace{1.3em} \rev{Unperturbed}}
		\includegraphics[width=\textwidth]{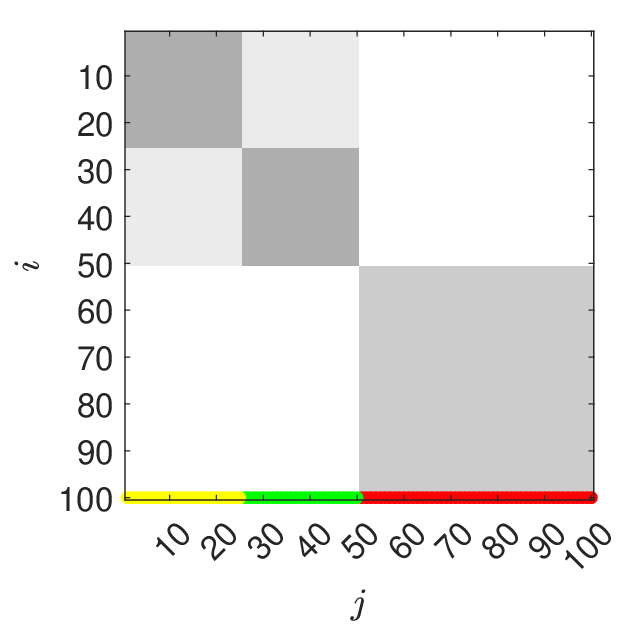}
		\phantomsubcaption
	\end{subfigure}
	\hfill
	\begin{subfigure}[b]{0.32\textwidth}
		\centering
		\caption*{\hspace{1.3em} \rev{Slightly Perturbed}}
		\includegraphics[width=\textwidth]{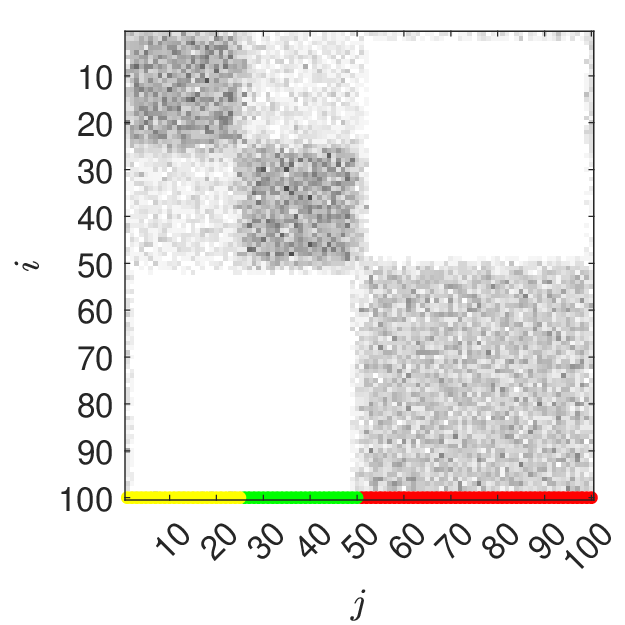}
		\phantomsubcaption
	\end{subfigure}
	\hfill
	\begin{subfigure}[b]{0.32\textwidth}
		\centering
		\caption*{\hspace{1.3em} \rev{Strongly Perturbed}}
		\includegraphics[width=\textwidth]{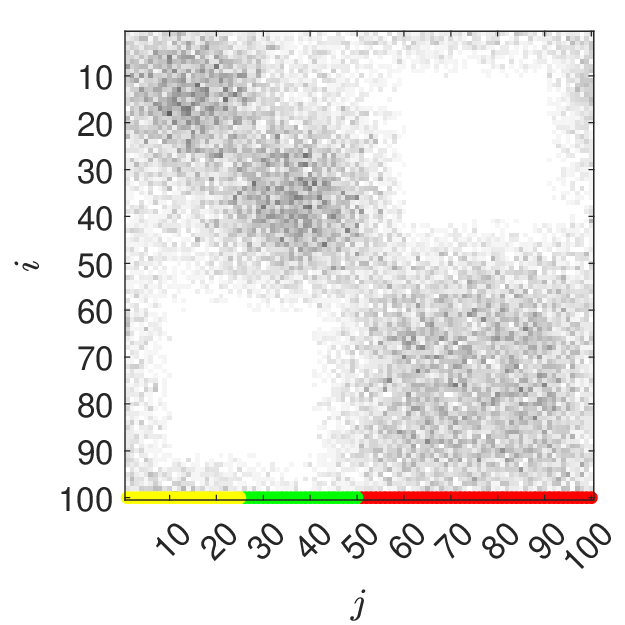}
		\phantomsubcaption
	\end{subfigure} 
	\vfill
	\begin{subfigure}[b]{0.015\textwidth}
		\centering
		\raisebox{3.4em}{\rotatebox{90}{\rev{Red.\ matrix $P_{\textup{red}}$}}}
	\end{subfigure}
	\hfill
	\begin{subfigure}[b]{0.32\textwidth}
		\centering
		\includegraphics[width=\textwidth]{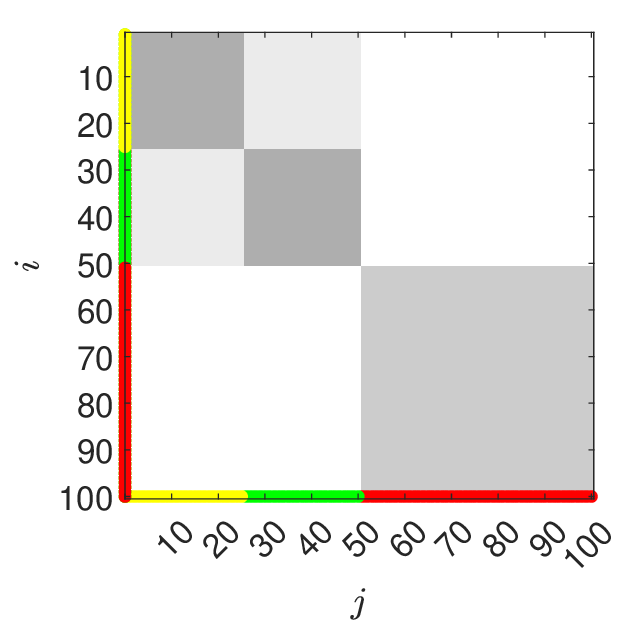}
	\end{subfigure}
	\hfill
	\begin{subfigure}[b]{0.32\textwidth}
		\centering
		\includegraphics[width=\textwidth]{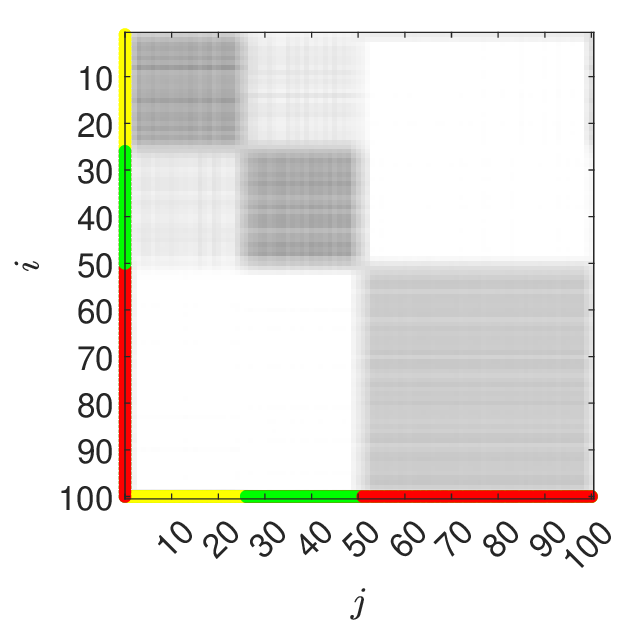}
	\end{subfigure}
	\hfill
	\begin{subfigure}[b]{0.32\textwidth}
		\centering
		\includegraphics[width=\textwidth]{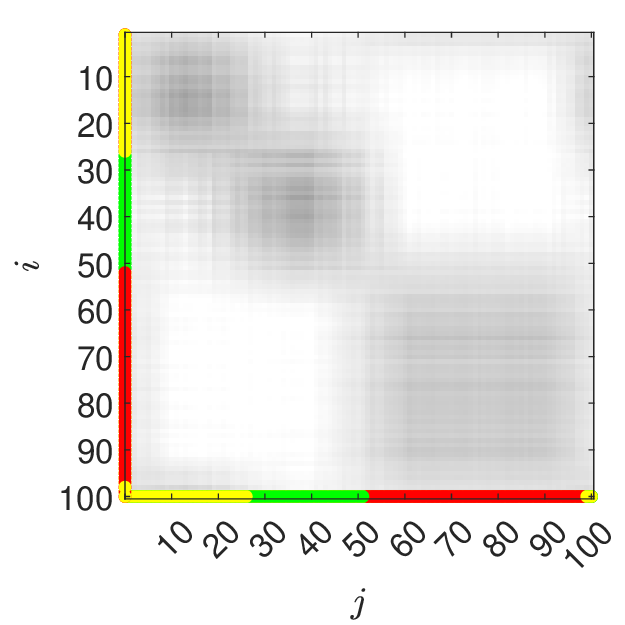}
	\end{subfigure}
	\vfill
	\begin{subfigure}[b]{0.015\textwidth}
		\centering
		\raisebox{2.6em}{\rotatebox{90}{\rev{Red.\ matrix $\Lambda = \lambda \Gamma$}}}
	\end{subfigure}
	\hfill
	\begin{subfigure}[b]{0.32\textwidth}
		\centering
		\includegraphics[width=\textwidth]{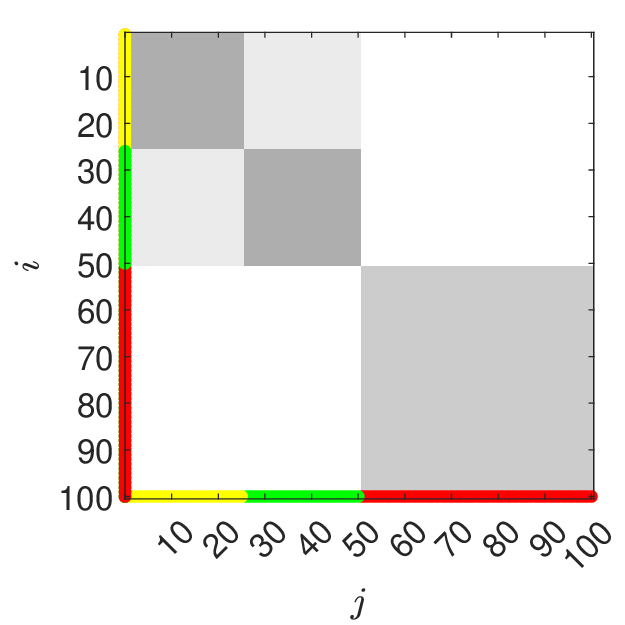}
	\end{subfigure}
	\hfill
	\begin{subfigure}[b]{0.32\textwidth}
		\centering
		\includegraphics[width=\textwidth]{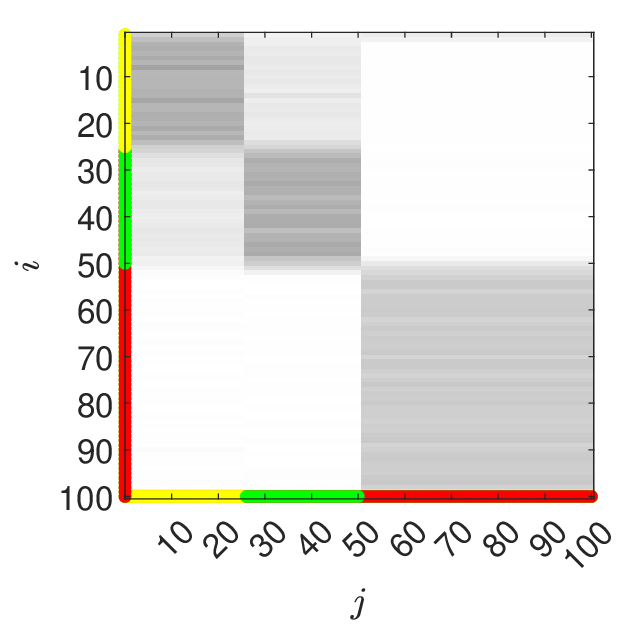}
	\end{subfigure}
	\hfill
	\begin{subfigure}[b]{0.32\textwidth}
		\centering
		\includegraphics[width=\textwidth]{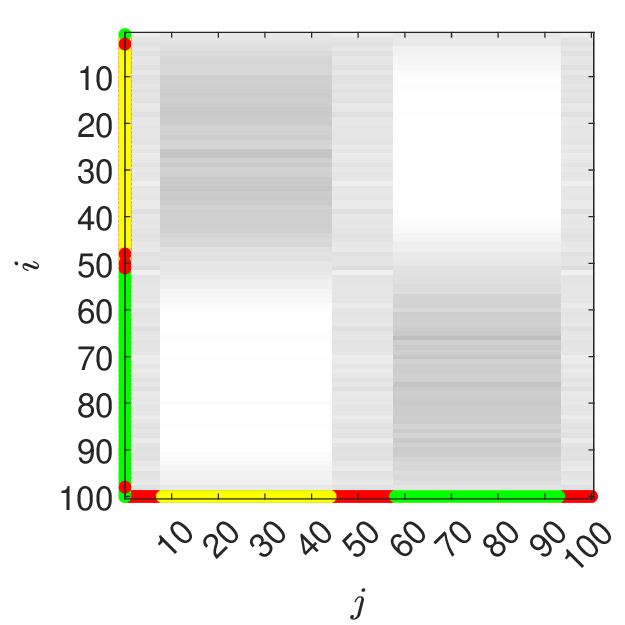}
	\end{subfigure}
	\caption{Coherent set identification for Example~\ref{example:1_three_coherent_sets} with $\lowrank=3$ clusters and $3$ different levels of perturbation:
    \textit{Top:} Full transition matrix $P$.
        \textit{Middle:} Reduced transition matrix $P_{\textup{red}}$ obtained within the classical Algorithm~\ref{alg:classical_approach_coherence}.
        \textit{Bottom:} Reduced transition matrix $\Lambda = \lambda \Gamma$ of the DBMR Algorithm~\ref{alg:the_alg}.
        The coloring at the bottom line of each plot corresponds to the clustering given by the associated affiliation matrix $\Gamma$ (or partition $\cE$):
        default-$\Gamma$ (top), SVD-$\Gamma$ (middle), and DBMR-$\Gamma$ (bottom). \rev{The coloring on the left margin of the plots represents the corresponding partitions~$\cF$.
        For the middle row, $\cF$ is obtained by clustering and matching (lines 5--6 of Algorithm~\ref{alg:classical_approach_coherence}), and for the bottom row by~\eqref{eq:Sets_F_k}.}
        }
	\label{fig:CSI_1}
\end{figure}

\begin{example}[three coherent sets]
\label{example:1_three_coherent_sets}
Our first example is an idealized dynamics having two perfectly coherent sets, one of which can further be subdivided into two less coherent sets.
We take $n=m=100$ input and output states and define the coherent sets $E_1 = \{1,\ldots,25\}$, $E_2 = \{26,\ldots,50\}$, and $E_3 = \{51,\ldots,100\}$ which partition both $[n]$ and~$[m]$. The data set $\bm{D}$ consists of $S=25000$ pairs $(X_u,Y_u)$, $u=1,\ldots,S$, and is constructed such that
\[
N_{ij} = \left\{
\begin{array}{ll}
	8, & i,j \in E_1 \text{ or }i,j\in E_2, \\
	2, & i\in E_1, j\in E_2 \text{ or } i\in E_2,j\in E_1, \\
	5, & i,j \in E_3, \\
	0, & \text{otherwise.}
\end{array}
\right.
\]
Hence, there are $\sum_{i=1}^m N_{ij} = 250$ transitions out of every state.
As discussed above, we also consider two perturbed version of the above data given by \eqref{equ:perturbation_of_data} for $\varepsilon = 2, 10$.

The resulting coherence criteria \eqref{coherence_criteria:singular_values}, \eqref{coherence_criteria:DBMR_objective}, \eqref{coherence_criteria:bound_tightness} and \eqref{coherence_criteria:visual_comparison} described above are summarized in Table~\ref{Tab:sv_model_case} and visualized in Figure~\ref{fig:CSI_1}.
Note that the rank of the unperturbed transition matrix $P$ is $\lowrank=3$, allowing the truncated SVD to match the exact transition matrix.
Also, since $P$ has only three different columns, the DBMR result with $\lowrank=3$ latent states coincides with $P$ (cf.\ Example~\ref{example:DBMR_as_projection}).
As expected, in both the full and the reduced models, coherence (measured by the singular values) as well as the values of $\hat{\ell}$, are decreasing with increasing perturbation strength.
We observe that, for unperturbed and slightly perturbed data, the reduced models as well as the partitions visualized in Figure~\ref{fig:CSI_1} are rather similar, as are the singular values $\sigma_{2}, \sigma_{3}$ and the values of $\hat{\ell}$ in Table~\ref{Tab:sv_model_case}.
On the other hand, for strong perturbations ($\varepsilon=10$), we report larger differences in all of the above criteria, suggesting that the two objectives $\cC_{3}$ and $\hat{\ell}$ are not entirely aligned.
\rev{In particular, the DBMR objective prefers merging the two small coherent sets into one and assigning a `mixing zone' on the interface between the two halves of the state space as a third cluster.}
Finally, we consider the tightness of the inequality~\eqref{eq:Frob_DBMR_bound} in Table~\ref{Tab:sv_model_case}~(c).
Since the transition matrix $P$ in the unperturbed case has rank three, DBMR is exact and both values are below machine precision. We observe for the perturbed cases that there is a factor of 10 between the left-hand side and right-hand side of the inequality, which could be indicative of the different nature of the objectives that are optimized in Problem~\ref{prob:Frobenius} and Problem~\ref{prob:DBMR}. 
\end{example}

\begin{table}[!h]
	\centering
	\renewcommand{\arraystretch}{1.5}
    \begin{tabular}{p{2.3em}<{\centering} p{5.3em}<{\centering} p{5.3em}<{\centering} p{2.5em}<{\centering} p{2.5em}<{\centering} p{2.5em}<{\centering} p{2.5em}<{\centering} p{2.5em}<{\centering} p{2.5em}<{\centering}}	
		\toprule
		& \multicolumn{2}{c}{Perturbation} & \multicolumn{2}{c}{$\varepsilon=0$} & \multicolumn{2}{c}{$\varepsilon=2$} & \multicolumn{2}{c}{$\varepsilon=10$}
		\\
		\midrule
		\multirow{2}{*}{\eqref{coherence_criteria:singular_values}} &
		\cellcolor{black!10} $\sigma_{2}(\tP)$ & $\sigma_{3}(\tP)$ & \cellcolor{black!10} $1.000$ & $0.600$ & \cellcolor{black!10} $0.939$ & $0.545$ & \cellcolor{black!10} $0.725$ & $0.362$
		\\ 
		& \cellcolor{black!10} $\sigma_{2}(\tLam)$ & $\sigma_{3}(\tLam)$  & \cellcolor{black!10}$1.000$ & $0.600$  & \cellcolor{black!10}$0.918$ & $0.528$ & \cellcolor{black!10}$0.702$ & $0.071$
		\\
		\midrule		
		\multirow{4}{*}{\eqref{coherence_criteria:DBMR_objective}} &
		\multicolumn{2}{c}{$\hat{\ell}(\lambda,\Gamma)$ for default-$\Gamma$} & \multicolumn{2}{c}{$-0.954\cdot 10^5$} & \multicolumn{2}{c}{$-0.997\cdot 10^5$} & \multicolumn{2}{c}{$-1.077\cdot 10^5$}
		\\ 
		& \multicolumn{2}{c}{$\hat{\ell}(\lambda,\Gamma)$ for SVD-$\Gamma$} & \multicolumn{2}{c}{$-0.954\cdot 10^5$} & \multicolumn{2}{c}{$-0.997\cdot 10^5$} & \multicolumn{2}{c}{$-1.076\cdot 10^5$}
		\\ 
		& \multicolumn{2}{c}{$\hat{\ell}(\lambda,\Gamma)$ for DBMR-$\Gamma$} & \multicolumn{2}{c}{$-0.954\cdot 10^5$} & \multicolumn{2}{c}{$-0.997\cdot 10^5$} & \multicolumn{2}{c}{$-1.072\cdot 10^5$}
		\\ 
		& \multicolumn{2}{c}{reference value $\hat\ell(P,\Id_{n})$} & \multicolumn{2}{c}{$-0.954\cdot 10^5$} & \multicolumn{2}{c}{$-0.951\cdot 10^5$} & \multicolumn{2}{c}{$-1.012\cdot 10^5$}   
		\\
		\midrule		
		\multirow{3}{*}{\eqref{coherence_criteria:bound_tightness}} &
		\multicolumn{2}{c}{$\| \tP - \tLam \|_F^2$} & \multicolumn{2}{c}{$5.2083\cdot 10^{-31}$} & \multicolumn{2}{c}{$0.4123$} & \multicolumn{2}{c}{$0.5443$} \\ 	
		& \multicolumn{2}{c}{$\kappa^{-1} \sum_{j=1}^n p_j \KLD{P_{\bullet j}}{\Lambda_{\bullet j}}$} & \multicolumn{2}{c}{$8.2991\cdot 10^{-17}$}
		& \multicolumn{2}{c}{$4.3351$} & \multicolumn{2}{c}{$3.9866$} \\ 
		& \multicolumn{2}{c}{$\kappa$} & \multicolumn{2}{c}{$0.1562^{(2)}$} 
		& \multicolumn{2}{c}{$0.0424^{(1)}$} & \multicolumn{2}{c}{$0.0621^{(1)}$} \\
		\bottomrule
	\end{tabular}
	\caption{Coherence criteria \eqref{coherence_criteria:singular_values},
		\eqref{coherence_criteria:DBMR_objective} and
		\eqref{coherence_criteria:bound_tightness} discussed above for the comparison of the classical approach to coherence (Algorithm~\ref{alg:classical_approach_coherence}) and DBMR (Algorithm~\ref{alg:the_alg}) \rev{for Example~\ref{example:1_three_coherent_sets}}.
		In the last row, the superscript 1 or 2 indicates whether $\smash{ \kappa=\kappa^{(1)}_q }$ or $\smash{ \kappa=\kappa^{(2)}_q }$.
	}
	\label{Tab:sv_model_case}
\end{table}


\begin{figure}[htbp]
	\centering
        \begin{subfigure}[b]{0.015\textwidth}
		\centering
		\raisebox{4.0em}{\rotatebox{90}{\rev{Full matrix $P$}}}
	\end{subfigure}
	\hfill
	\begin{subfigure}[b]{0.32\textwidth}
		\centering
		\caption*{\hspace{1.3em} \rev{Unperturbed}}
		\includegraphics[width=\textwidth]{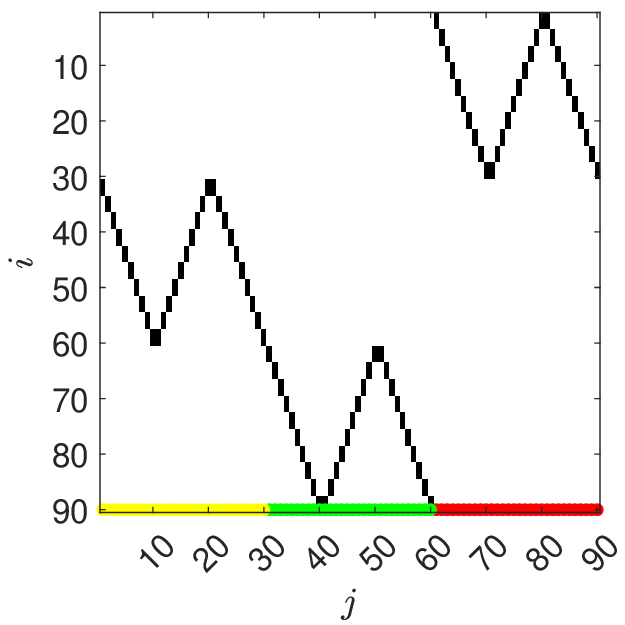}
		\phantomsubcaption
	\end{subfigure}
	\hfill
	\begin{subfigure}[b]{0.32\textwidth}
		\centering
		\caption*{\hspace{1.3em} \rev{Slightly Perturbed}}
		\includegraphics[width=\textwidth]{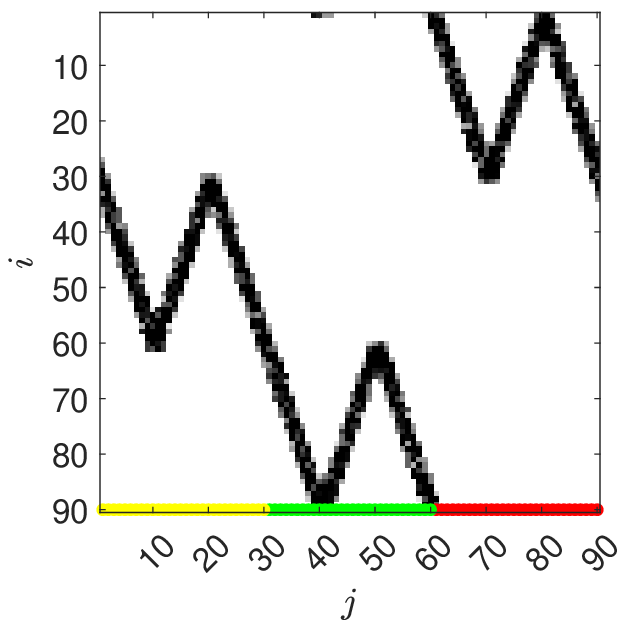}
		\phantomsubcaption
	\end{subfigure}
	\hfill
	\begin{subfigure}[b]{0.32\textwidth}
		\centering
		\caption*{\hspace{1.3em} \rev{Strongly Perturbed}}
		\includegraphics[width=\textwidth]{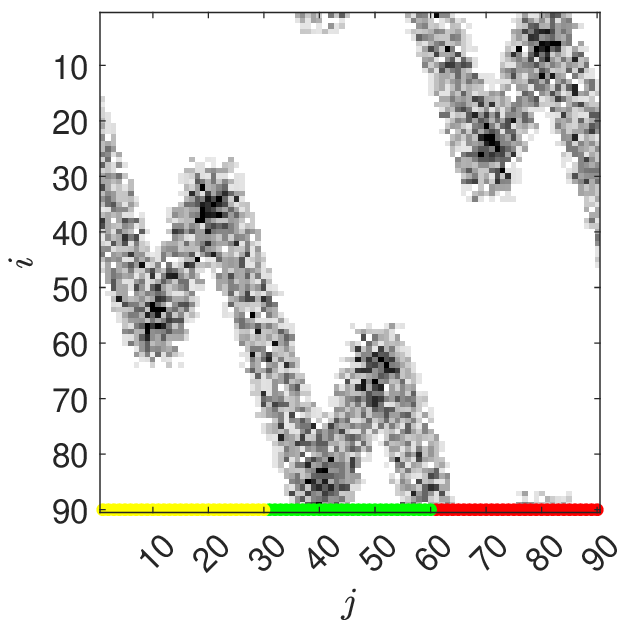}
		\phantomsubcaption
	\end{subfigure}	
    \vfill
        \begin{subfigure}[b]{0.015\textwidth}
		\centering
		\raisebox{3.4em}{\rotatebox{90}{\rev{Red.\ matrix $P_{\textup{red}}$}}}
	\end{subfigure}
	\hfill
    \begin{subfigure}[b]{0.32\textwidth}
		\centering
		\includegraphics[width=\textwidth]{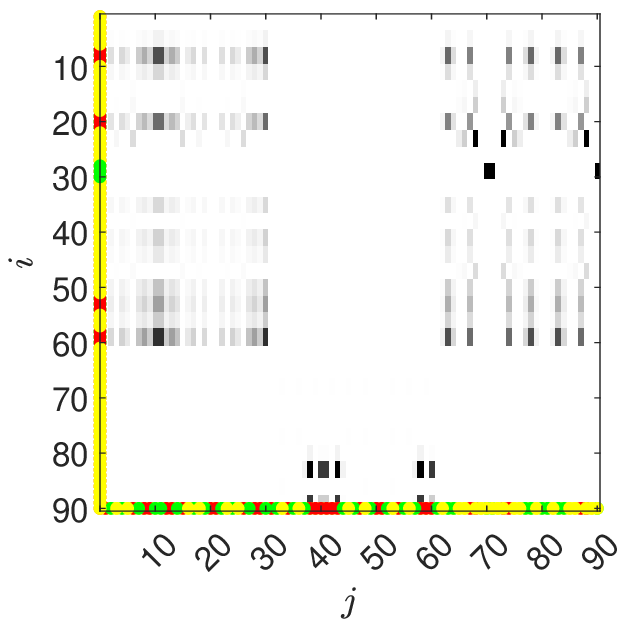}
	\end{subfigure}
	\hfill
	\begin{subfigure}[b]{0.32\textwidth}
		\centering
		\includegraphics[width=\textwidth]{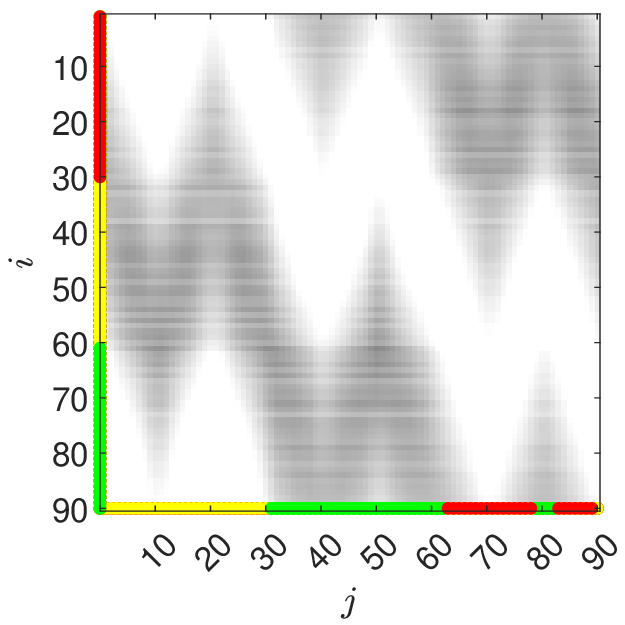}
	\end{subfigure}
 \hfill
	\begin{subfigure}[b]{0.32\textwidth}
		\centering
		\includegraphics[width=\textwidth]{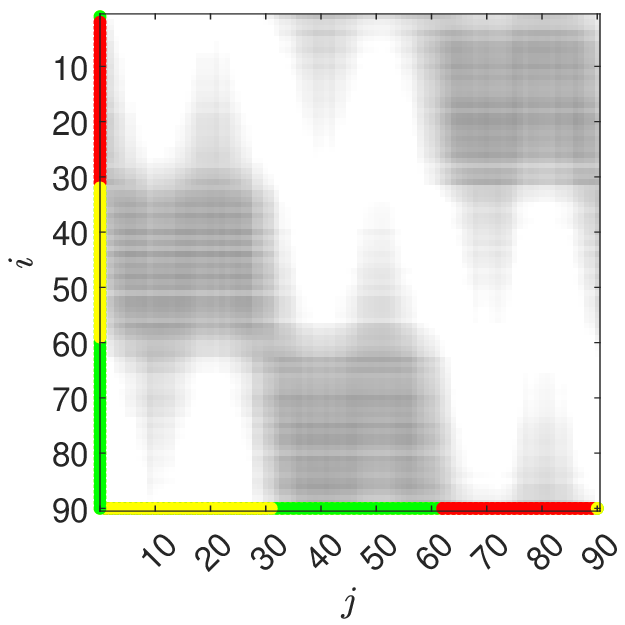}
	\end{subfigure}
    \vfill
        \begin{subfigure}[b]{0.015\textwidth}
		\centering
		\raisebox{2.6em}{\rotatebox{90}{\rev{Red.\ matrix $\Lambda = \lambda \Gamma$}}}
	\end{subfigure}
	\hfill
    \begin{subfigure}[b]{0.32\textwidth}
		\centering
		\includegraphics[width=\textwidth]{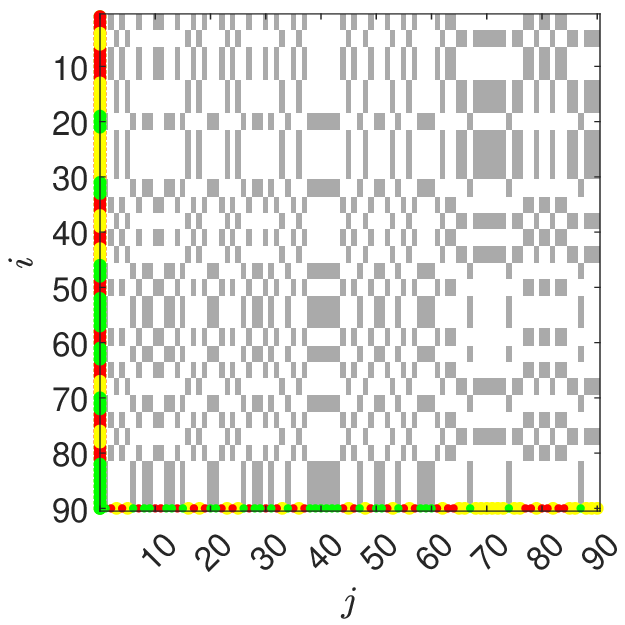}
	\end{subfigure}
	\hfill
	\begin{subfigure}[b]{0.32\textwidth}
		\centering
		\includegraphics[width=\textwidth]{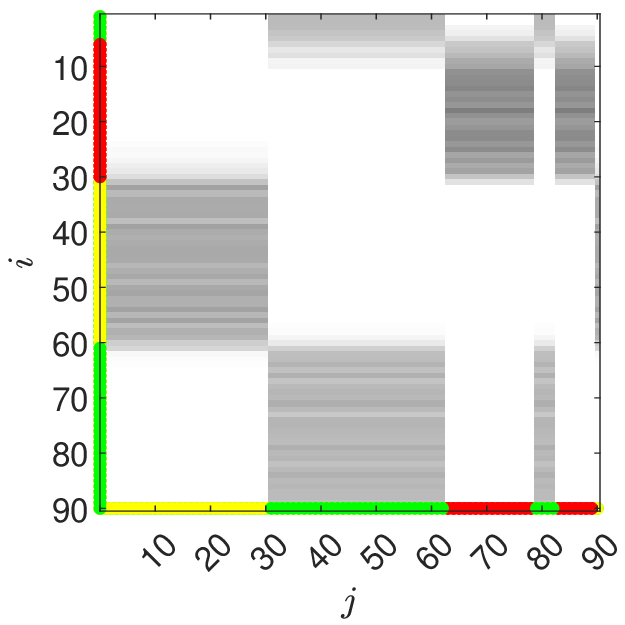}
	\end{subfigure}
 \hfill
	\begin{subfigure}[b]{0.32\textwidth}
		\centering
		\includegraphics[width=\textwidth]{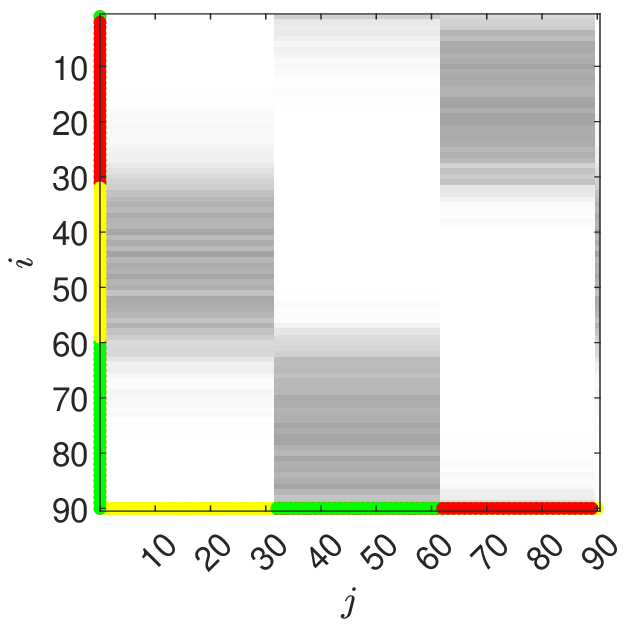}
	\end{subfigure}
	\caption{Coherent set identification for Example~\ref{example:piecewise_expanding_interval_map} with $\lowrank=3$ clusters and $3$ different levels of perturbation:
     \textit{Top:} Full transition matrix $P$.
        \textit{Middle:} Reduced transition matrix $P_{\textup{red}}$ obtained within the classical Algorithm~\ref{alg:classical_approach_coherence}.
        \textit{Bottom:} Reduced transition matrix $\Lambda = \lambda \Gamma$ of the DBMR Algorithm~\ref{alg:the_alg}.
        The coloring at the bottom line of each plot corresponds to the clustering given by the associated affiliation matrix $\Gamma$ (or partition $\cE$):
        default-$\Gamma$ (top), SVD-$\Gamma$ (middle) and DBMR-$\Gamma$ (bottom). \rev{The coloring on the left margin of the plots represents the corresponding partitions~$\cF$.
        For the middle row, $\cF$ is obtained by clustering and matching (lines 5--6 of Algorithm~\ref{alg:classical_approach_coherence}), and for the bottom row by~\eqref{eq:Sets_F_k}.}
        }         
	\label{fig:CSI_2}
\end{figure}

\begin{example}[piecewise expanding interval map]
\label{example:piecewise_expanding_interval_map}
In our second example, the number of input and output states equals $n=m=90$ with 90 transitions out of every state, totaling~$S=8100$ data points.
Each input state $j$ is paired, with equal frequencies, with three output states $i$ such that $N_{ij}=30$ for the corresponding pairs.
We do not explicitly write down how these three output states are chosen, but instead refer the reader to Figure~\ref{fig:CSI_2} (top left) as well as to \cite[Example 1]{FrLlSa10}, which this example was inspired by.
The sets $E_1 = \{1,\ldots,30\}$, $E_2 = \{31,\ldots,60\}$, and $E_3 = \{61,\ldots,90\}$ are perfectly coherent (their output `partners' being $E_2,E_3$, and $E_1$, respectively). There are also smaller perfectly coherent sets, for instance $\{61,80,81\}$, of which the output `partner' is~$\{1,2,3\}$.
There are, in fact, 30 such 3-element coherent sets, and arbitrary unions of them are also perfectly coherent. Note that the smaller a coherent set, the more its coherence will be affected by the perturbations. For the small perturbation we use $\varepsilon=1$ and for the large one we use~$\varepsilon=4$, cf.\ Figure~\ref{fig:CSI_2} (top middle and right).

The coherence criteria \eqref{coherence_criteria:singular_values}, \eqref{coherence_criteria:DBMR_objective}, \eqref{coherence_criteria:bound_tightness} and \eqref{coherence_criteria:visual_comparison} described above are reported in Table~\ref{Tab:sv_tripling} and visualized in Figure~\ref{fig:CSI_2}.
In the unperturbed case, Algorithms~\ref{alg:classical_approach_coherence} and~\ref{alg:the_alg} both identified perfectly coherent sets that we comment on in Remark~\ref{remark:coherent_set_sizes} below.
In the slightly perturbed case, the $\hat\ell$-value of DBMR-$\Gamma$ was worse than the ones of default-$\Gamma$ and SVD-$\Gamma$.
This shows that, even with a large number of 100 independent runs, DBMR was incapable of identifying the global optimum of $\hat{\ell}$, which we attribute to the large number of small coherent sets (in the unperturbed case), presumably resulting in a large number of local optima.
We did not observe this issue in the strongly perturbed case, where both Algorithms~\ref{alg:classical_approach_coherence} and~\ref{alg:the_alg} identified the default partition.
This suggests that the perturbation of $\varepsilon = 4$ was sufficient to `smoothen out' many of the local optima.
We also point out that, compared to Example~\ref{example:1_three_coherent_sets}, the inequality \eqref{eq:Frob_DBMR_bound} is sharper --- the deviating factor is between 2 and 8 rather than~10.
\end{example}

\begin{table}[!h]
	\centering
	\renewcommand{\arraystretch}{1.5}
	\begin{tabular}{p{2.3em}<{\centering} p{5.3em}<{\centering} p{5.3em}<{\centering} p{2.5em}<{\centering} p{2.5em}<{\centering} p{2.5em}<{\centering} p{2.5em}<{\centering} p{2.5em}<{\centering} p{2.5em}<{\centering}}
		\toprule
		& \multicolumn{2}{c}{Perturbation} & \multicolumn{2}{c}{$\varepsilon=0$} & \multicolumn{2}{c}{$\varepsilon=1$} & \multicolumn{2}{c}{$\varepsilon=4$}
		\\
		\midrule
		\multirow{2}{*}{\eqref{coherence_criteria:singular_values}} &
		\cellcolor{black!10} $\sigma_{2}(\tP)$ & $\sigma_{3}(\tP)$ & \cellcolor{black!10} $1.0000$ & $1.0000$ & \cellcolor{black!10} $0.9849$ & $0.9846$ & \cellcolor{black!10} $0.8961$ & $0.8948$
		\\ 
		& \cellcolor{black!10} $\sigma_{2}(\tLam)$ & $\sigma_{3}(\tLam)$  & \cellcolor{black!10} $1.0000$ & $1.0000$ & \cellcolor{black!10} $0.9548$ & $0.9234$ & \cellcolor{black!10} $0.8518$ & $0.8400$
		\\
		\midrule
		\multirow{4}{*}{\eqref{coherence_criteria:DBMR_objective}} &
		\multicolumn{2}{c}{$\hat{\ell}(\lambda,\Gamma)$ for default-$\Gamma$} & \multicolumn{2}{c}{$-0.2755\cdot 10^5$} & \multicolumn{2}{c}{$-0.2828\cdot 10^5$} & \multicolumn{2}{c}{$-0.3002\cdot 10^5$}
		\\ 
		& \multicolumn{2}{c}{$\hat{\ell}(\lambda,\Gamma)$ for SVD-$\Gamma$} & \multicolumn{2}{c}{$-0.3212\cdot 10^5$} & \multicolumn{2}{c}{$-0.2828\cdot 10^5$} & \multicolumn{2}{c}{$-0.3002\cdot 10^5$}
		\\ 
		& \multicolumn{2}{c}{$\hat{\ell}(\lambda,\Gamma)$ for DBMR-$\Gamma$} & \multicolumn{2}{c}{$-0.2755\cdot 10^5$} & \multicolumn{2}{c}{$-0.2861\cdot 10^5$} & \multicolumn{2}{c}{$-0.3002\cdot 10^5$}
		\\ 
		& \multicolumn{2}{c}{reference value $\hat\ell(P,\Id_{n})$} & \multicolumn{2}{c}{$-0.0890\cdot 10^5$} & \multicolumn{2}{c}{$-0.1817\cdot 10^5$} & \multicolumn{2}{c}{$-0.2531\cdot 10^5$}   
		\\
		\midrule		
		\multirow{3}{*}{\eqref{coherence_criteria:bound_tightness}} &
		\multicolumn{2}{c}{$\| \tP - \tLam \|_F^2$} & \multicolumn{2}{c}{$27.000$} & \multicolumn{2}{c}{$7.5525$} & \multicolumn{2}{c}{$2.0543$} \\ 	
		& \multicolumn{2}{c}{$\kappa^{-1} \sum_{j=1}^n p_j \KLD{P_{\bullet j}}{\Lambda_{\bullet j}}$} & \multicolumn{2}{c}{$69.0776$}
		& \multicolumn{2}{c}{$40.3973$} & \multicolumn{2}{c}{$15.6709$} \\ 
		& \multicolumn{2}{c}{$\kappa$} & \multicolumn{2}{c}{$0.0333^{(1)}$} 
		& \multicolumn{2}{c}{$0.0319^{(1)}$} & \multicolumn{2}{c}{$0.0370^{(1)}$} \\
		\bottomrule
	\end{tabular}
	\caption{Coherence criteria \eqref{coherence_criteria:singular_values},
		\eqref{coherence_criteria:DBMR_objective} and
		\eqref{coherence_criteria:bound_tightness} discussed above for the comparison of the classical approach to coherence (Algorithm~\ref{alg:classical_approach_coherence}) and DBMR (Algorithm~\ref{alg:the_alg})  \rev{for Example~\ref{example:piecewise_expanding_interval_map}}.
		In the last row, the superscript 1 or 2 indicates whether $\smash{ \kappa=\kappa^{(1)}_q }$ or $\smash{ \kappa=\kappa^{(2)}_q }$.
	}
	\label{Tab:sv_tripling}
\end{table}

\begin{remark}
\label{remark:coherent_set_sizes}
In Example~\ref{example:piecewise_expanding_interval_map} with no perturbation, visualized in Figure \ref{fig:CSI_2} (left), there is a large number of small perfectly coherent sets.
Hence, each partitioning of these sets into three groups will again produce perfectly coherent sets.
In that sense, both the classical Algorithm~\ref{alg:classical_approach_coherence} and the DBMR Algorithm~\ref{alg:the_alg} identify perfectly coherent sets (we verified that the sum of the leading three singular values of $\smash{ \tLam = D_q^{-1/2} \lambda\Gamma D_p^{1/2} }$, i.e.\ the degree of coherence $\cC_{3}(p,\lambda \Gamma)$, has the maximal possible value of $\lowrank=3$, $(\lambda,\Gamma)$ being the DBMR output), showing that this DBMR result is not inferior to the classical one with respect to our measure of coherence. Furthermore, DBMR performs a partitioning into groups of equal size (30 states each), cf.\ Figure~\ref{fig:CSI_2} (bottom left), while the group sizes resulting from the classical approach (namely $84/3/3$) strongly differ, cf.\ Figure~\ref{fig:CSI_2} (middle left).
As argued by \cite{froyland2010transport} in the context of $\lowrank=2$ coherent sets, equal group size is a preferable property in terms of coherence.
In fact, \cite[Section~III.A]{froyland2010transport} \emph{imposes} the two coherent sets to have approximately the same mass.
In that sense, the DBMR result is preferable to the one of the classical approach.
Note that this preferable property of coherent sets having large size is not reflected by our measures of coherence, namely the objective in~\eqref{eq:coherence_set} and the degree of coherence in Definition~\ref{def:coherence}.
\end{remark}

\begin{example}[Periodically perturbed double gyre]
\label{example:periodically_perturbed_double_gyre}

\rev{
As the concept of coherent sets arose from fluid-dynamical problems, we briefly consider a benchmark example from this field. More precisely, we consider a two-dimensional phenomeno\-logical model from \cite[sec.~6]{shadden2005definition} which contains two counter-rotating gyres next to one another, with a vertical separatrix between them undergoing a periodic oscillation in the horizontal direction.
This ``double gyre'' flow is governed by the stream function
\begin{equation} \label{eq:stream_fct_double_gyre}
	\psi(x,y,t) := A\sin(\pi f(x,t))\sin(\pi y),
\end{equation} where the function $f(x,t)$ is given by $f(x,t) = \delta\sin(\omega t) x^2+ (1-2\delta\sin(\omega t)) x$ 
for $(x,y)$ in the domain $\Omega=[0,2]\times[0,1]\subseteq \mathbb{R}^2$. 
The flow itself is the solution of the ordinary differential equation $(\dot{x}, \dot{y}) = (-{\partial\psi}/{\partial y}, {\partial\psi}/{\partial x})$.
We set the parameter values to $A=0.25$,  $\delta=0.25$ and $\omega=2\pi$, as in~\cite{froyland_padberg_09}, implying a flow map with period $\frac{2\pi}{\omega}=1$. Contour plots of the stream function half a period apart (at maximal displacements during the separatrix' oscillation) are shown in Fig.~\ref{fig:contour_DG}.
}
\begin{figure}[htbp]
	\centering
\begin{subfigure}[b]{0.49\textwidth}
	\centering
	\includegraphics[width=\textwidth]{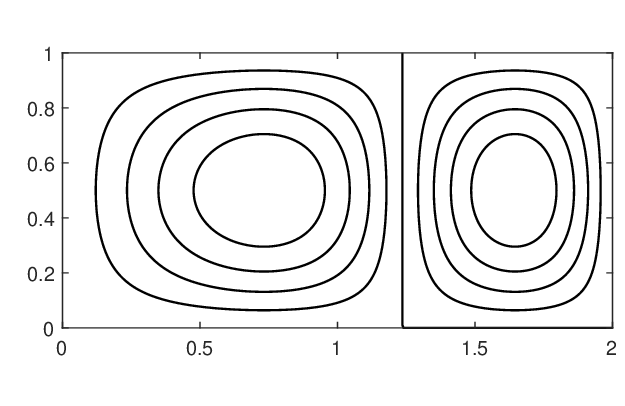}
\end{subfigure}
\hfill
\begin{subfigure}[b]{0.49\textwidth}
	\centering
	\includegraphics[width=\textwidth]{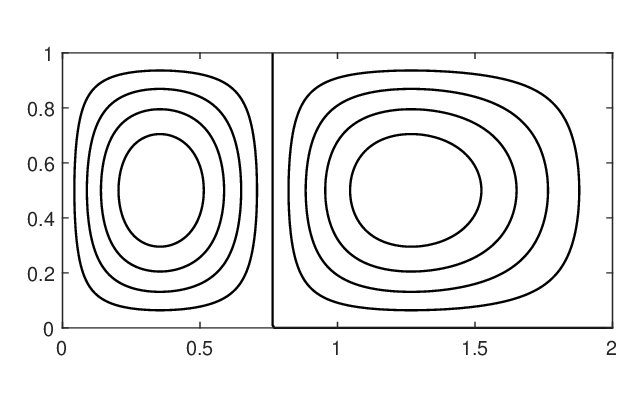}
\end{subfigure} 			 
\caption{Contour plots of the stream function $\psi$ from~\eqref{eq:stream_fct_double_gyre} at time $t = 1/4$ (left) and time $t = 3/4$ (right). At any time, the velocity field is tangential to level sets of the stream function.
}
\label{fig:contour_DG}
\end{figure}

\rev{
We define categories by discretizing the domain $\Omega$ uniformly into $64 \times 32 = 2048$ square boxes of edge length~$\frac{1}{32}$. For the input $X$ (initial time), we populate each box with 100 random points, which yields a sample size of~$S = 204\,800$. For the output $Y$ (final time), we consider the time-40-flow map of the double gyre system from initial time $t_0=0$ to final time~$t_1=40$ (implemented by a fourth-order Runge--Kutta scheme with constant stepsize $h=0.01$). As in the previous examples, and similarly to \cite{froyland2014almost}, we apply an additive i.i.d.\ uniform perturbation to every input and output point coordinate-wise from~$[-\rho,\rho]$, reflecting back any point that has been propelled out of $\Omega$ by this perturbation. We choose $\rho=\frac{1}{32}$. The input and output categories are then determined by assigning the (perturbed) points to the boxes they lie in.
}


\rev{
We compare the performance of DBMR (Algorithm~\ref{alg:the_alg}) with the classical approach (Algorithm~\ref{alg:classical_approach_coherence}) when identifying $r=3$ and $r = 5$ coherent sets.
We recall that the flow is 1-periodic and we chose 40 iterates of the time-1-flow map to generate our input-output pairs. Thus, any invariant sets of the time-1-flow map that are reasonably large compared with the perturbation size $\rho$ are expected to be found coherent. The structure of these invariant sets if fairly intricate, as can be seen, e.g., in \cite[Fig.~12]{BaKoPG19}.
We expect to observe two coherent gyre-like structures centered approximately around the points $(0.5,0.5)$ and~$(1.5,0.5)$, respectively. These two gyres can be ``stratified'' further into ring-like coherent sets. The two coherent gyres and some smaller, ``satellite'' invariant sets around them are surrounded and separated by a third large coherent set that is internally mixing and which hence cannot be separated any further into reasonably coherent subsets.
In Fig.~\ref{fig:Double_gyre} (top two rows) we show the results of the classical approach for $r=3$ and $r=5$ coherent pairs from Algorithm~\ref{alg:classical_approach_coherence}, respectively. In comparison to that, the DBMR result of Algorithm~\ref{alg:the_alg} (best of $100$ runs) are shown in Fig.~\ref{fig:Double_gyre} (bottom two rows), for $r=3$ and~$r=5$, respectively.
We observe that DBMR tends to identify a larger collection of the above-mentioned invariant sets as coherent, while the classical approach focuses on the gyre cores and their stratification.
}
\begin{figure}[htbp]
	\centering
	\begin{subfigure}[b]{0.49\textwidth}
		\centering
		\includegraphics[width=\textwidth]{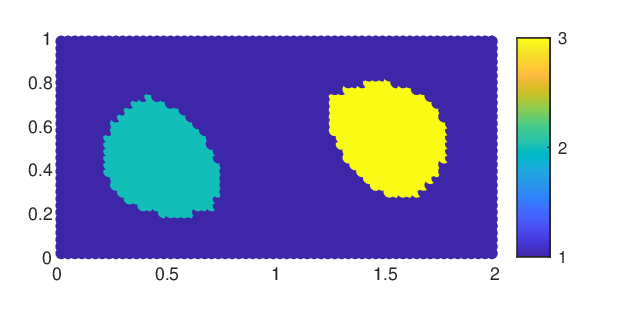}
	\end{subfigure}
	\hfill
	\begin{subfigure}[b]{0.49\textwidth}
		\centering
		\includegraphics[width=\textwidth]{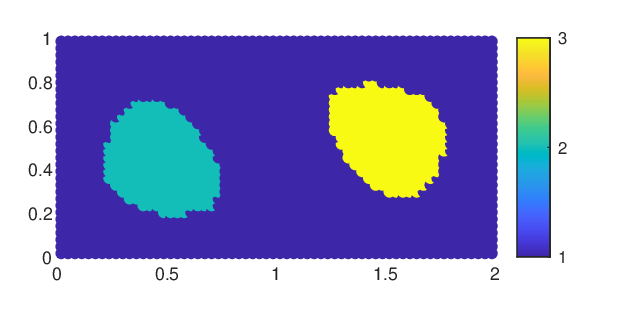}
	\end{subfigure} 	

 \begin{subfigure}[b]{0.49\textwidth}
		\centering
		\includegraphics[width=\textwidth]{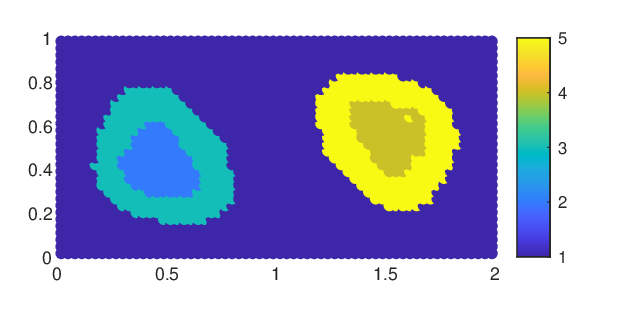}
	\end{subfigure}
	\hfill
	\begin{subfigure}[b]{0.49\textwidth}
		\centering
		\includegraphics[width=\textwidth]{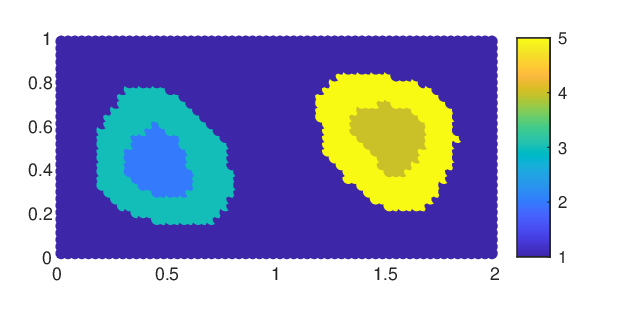}
	\end{subfigure} 

    \begin{subfigure}[b]{0.49\textwidth}
		\centering
		\includegraphics[width=\textwidth]{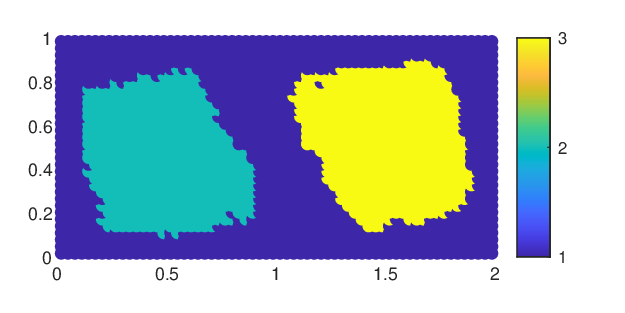}
	\end{subfigure}
	\hfill
	\begin{subfigure}[b]{0.49\textwidth}
		\centering
		\includegraphics[width=\textwidth]{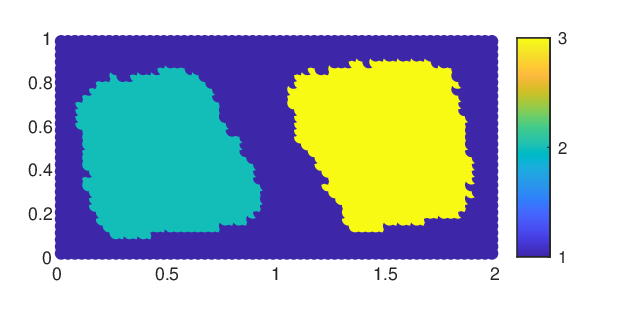}
	\end{subfigure}

    \begin{subfigure}[b]{0.49\textwidth}
		\centering
		\includegraphics[width=\textwidth]{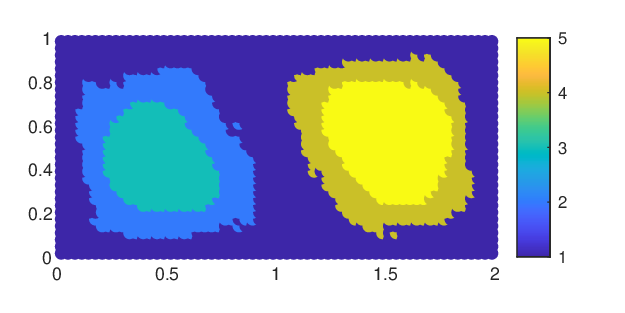}
	\end{subfigure}
	\hfill
	\begin{subfigure}[b]{0.49\textwidth}
		\centering
		\includegraphics[width=\textwidth]{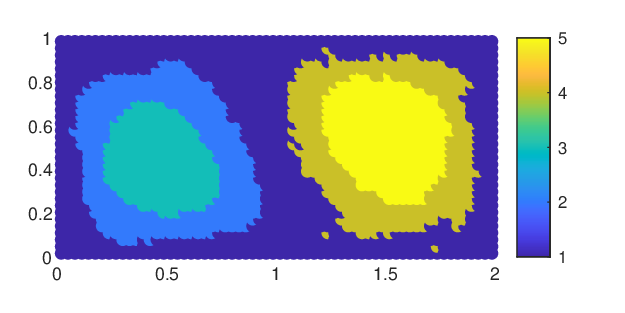}
	\end{subfigure} 
	\caption{\rev{Coherent sets for the double gyre flow from Example~\ref{example:periodically_perturbed_double_gyre}. Top two rows: The classical approach to coherence, clustering of singular vectors into $r=3$ (first row) and $r=5$ (second row) clusters to yield coherent sets at initial time (left) and at final time (right). Bottom two rows: DBMR with $r=3$ (third row) and $r=5$ (fourth row), coloring by the latent states from $\Gamma$ (left) and the sets $F_k$ given by~\eqref{eq:Sets_F_k} (right).}}
	\label{fig:Double_gyre}
\end{figure}

\end{example}

\paragraph{Classical approach versus DBMR for coherent set identification.}

In this manuscript, we have compared analytically as well as empirically the performance of two approaches to identify coherent sets of dynamical systems --- the classical approach (Algorithm~\ref{alg:classical_approach_coherence}) and DBMR (Algorithm~\ref{alg:the_alg}).
Let us briefly summarize the advantages and disadvantages of using DBMR for this task:

On the one hand, DBMR performs worse in terms of coherence \emph{given our measure of coherence} (see Definition \ref{def:coherence}).
This is almost a tautology since the classical approach by construction maximizes the degree of coherence $\cC_{\lowrank}$, whilst DBMR optimizes a different objective.
Furthermore, DBMR comes with the risk of running into local maxima of $\hat{\ell}$, cf.\ Example~\ref{example:piecewise_expanding_interval_map} with slight perturbation.

On the other hand, DBMR seems to promote large coherent sets (an attractive property in many settings of applied interest), whilst the classical approach might identify coherent sets that are very small, cf.\ Remark~\ref{remark:coherent_set_sizes} and Example~\ref{example:piecewise_expanding_interval_map} with no perturbation. In this light, the DBMR objective might be preferred, but future work developing a systematic comparison between optimization objectives for coherent sets is needed. In this direction, we conjecture that the entropic characterization of DBMR (see Remark \ref{remark:Connection_ML_KL}) can be shown to provide a theoretical foundation for the observed size-sensitive properties of DBMR. 

The `reduced transition matrix' $\lambda$ (operating on the compound input states grouped by $\Gamma$) of DBMR is a left stochastic matrix.
Therefore, DBMR is structure preserving in this sense, while the `reduced transition matrix' $\tP_{\textup{red}}$ of the classical approach can have negative entries and the entries in each of its columns typically do not sum to one.
A further advantage of DBMR might be that it does not require the approximation of the entire $mn$ entries of the ``full'' transition matrix $P$; instead it estimates the $r(n+m)$ entries (essentially, even only $rm+n$ entries, since $\Gamma$ has only one nonzero entry per column) of the factors $\lambda$ and $\Gamma$ directly from the data.
\cite{gerber2017toward} argue that the (comparatively few) matrix entries of the low rank ap\-prox\-i\-ma\-tion require far less data.
However, our experiments have neither verified nor refuted this intuition.

\section{Conclusion and Outlook}
\label{sec:Outlook}

In this paper, we have suggested and analyzed the application of Direct Bayesian Model Reduction (DBMR; Algorithm~\ref{alg:the_alg}, \cite{gerber2017toward}) for the identification of coherent sets and compared it with the classical approach based on truncated singular value decomposition (Algorithm \ref{alg:classical_approach_coherence}).
Both approaches perform a certain factorization of a matrix $A \approx BC$ into low rank matrices $B,C$, but maximize two different objectives, namely the `degree of coherence' as the sum of the leading singular values, corresponding to the minimization of the Frobenius norm $\| A - BC \|_F$, and the relaxed likelihood $\hat{\ell}$ from \eqref{eq:DBMR_objective}, connected to maximum likelihood estimation and minimization of the (generalized) Kullback--Leibler divergence $\KLD{A}{BC}$.
Therefore, on a broader scale, our contributions also establish connections between these two central minimization problems for matrix factorization.

The above-mentioned comparison is based on two central results, Theorems~\ref{thm:projection} and~\ref{thm:Frob_DBMR_bound}.
The first shows that the DBMR output $\Lambda = \lambda \Gamma$ can be written as a composition of the full model $P$ with an orthogonal projection $\Pi$, $\Lambda = P\Pi$.
While this is insightful in its own right, it also gives us the necessary tools to derive bounds on the degree of coherence of the reduced model~$\Lambda$ in Proposition~\ref{prop:sval_comp}.
The second theorem establishes a connection between the Frobenius norm distance and the Kullback--Leibler divergence mentioned above, which, to the best of our knowledge, is the first relationship of this kind.
For this purpose, we have derived certain Pinsker-type inequalities for the (weighted) $\ell^{2}$ norm in Appendix~\ref{section:Pinsker_l_2}, which might be of independent interest.

In our numerical experiments, DBMR was able to identify meaningful coherent sets.
It is well known that DBMR can get stuck in local maxima of its objective function, which we also observed (Example~\ref{example:piecewise_expanding_interval_map} with slight perturbation $\varepsilon=1$) even though we used a large number 100 of independent runs of DBMR.
The singular values, and thereby the degree of coherence of the corresponding reduced models was slightly inferior to the classical approach, which is hardly surprising since the classical approach optimizes precisely this objective.
However, the additional computations in Appendix~\ref{sec:DBMR_stat}, and in particular Figure~\ref{fig:DBMRtraj}, show that the two objectives are mostly aligned, backing up our theoretical findings from Theorem~\ref{thm:Frob_DBMR_bound}.

An important advantage of DBMR over the classical approach is that its low rank model $\Lambda = \lambda \Gamma$ is a product of a left stochastic matrix $\lambda$ and an affiliation matrix $\Gamma$, which has a clear probabilistic interpretation of a reduced transition matrix $\lambda$ that operates on compound states clustered by $\Gamma$.
This structure preservation is missed by the classical approach, where the `reduced transition matrix' $\tP_{\textup{red}}$ can have negative entries and the entries in each of its columns typically do not sum to one.

The connection between coherence---understood widely as subgroups of states subject to similar evolution---and matrix factorization remains an active field of research and various future directions are imaginable.
It is arguable whether the sum of the leading singular values of $\tP$ is a good measure for the `degree of coherence', see the discussion towards the end of section \ref{sec:numerical_examples}.
Establishing and optimizing other objectives, such as the DBMR objective $\hat{\ell}$ from \eqref{eq:DBMR_objective}, and analyzing connections between these objectives would deepen our theoretical understanding of both matrix factorization and the study of coherent structures.
The coherence problem~\eqref{eq:coherence_set} has a symmetry in the sense that matching partitions of \emph{both} input and output space are sought. In contrast, DBMR in its current form gives merely partitions of the input space. Future research could hence address the development of efficient `symmetrized' versions of DBMR.

\section*{Code availability}
\label{sec:code_availability}
MATLAB code for the Bayesian-Model-Reduction-Toolkit~\cite{gerber2017toward} is available at 
\begin{center}
\href{https://github.com/SusanneGerber/Bayesian-Model-Reduction-Toolkit}{https://github.com/SusanneGerber/Bayesian-Model-Reduction-Toolkit},
\end{center}
and the adaptation for this paper is available at 
\begin{center}
\href{https://github.com/RobertMaltePolzin}{https://github.com/RobertMaltePolzin/DBMR\_Coherence}.
\end{center}

\section*{Acknowledgment}

The authors thank Robin Chemnitz, Nicolas Gillis, Rupert Klein, Mattes Mollenhauer, Peter N\'evir, and Niklas Wulkow for discussions and helpful suggestions.
This research has been partially funded by the Deutsche Forschungsgemeinschaft (DFG, German Research Foun\-dation) through the grant CRC 1114 `Scaling Cascades in Complex Systems', Project Number 235221301, projects A01 ``Coupling a multiscale stochastic precipitation model to large scale atmospheric flow dynamics'', A02 ``Multiscale data and asymptotic model assimilation for atmospheric flows''  and A08 ``Characterization and prediction of quasi-stationary atmospheric states''.
The research of IK was funded by the DFG under Germany's Excellence Strategy (EXC-2046/1, project 390685689) through the project EF1-10 of the Berlin Mathematics Research Center MATH+.

\appendix

\section{The coherence problem: From continuous to discrete space}
\label{app:cont_discr}

The coherence problem that we consider here has one of its roots in fluid dynamics. There, the (Lagrangian) evolution of passive tracers advected by the flow field is described by a flow $\phi^{t,\tau}:\Omega\to\Omega$ on some (mostly two or three-dimensional) spatial domain~$\Omega$. The flow is nonautonomous, and $\phi^{t,\tau}$ denotes the dynamical evolution from time $t$ to~$t+\tau$. Of particular interest are non-trivial subsets $A\subset \Omega$ that ``evolve coherently'' under the flow on some time interval $[t,t+\tau]$, meaning that sets $\phi^{t,s}(A)$, $0\le s\le \tau$, only experience minimal filamentation. In other words, the flow does not ``disperse'' the set~$A$.

The setting can be simplified by considering the flow at only discrete time instances. For our purposes, only two time instances are enough, say $t$ and~$t+\tau$. The mapping $T:=\phi^{t,\tau}$ does not need to leave any set in some state space invariant, hence the states at initial and final time can belong to different sets. Formally, the dynamics thus boils down to a mapping~$T:\Omega_1 \to \Omega_2$. We assume that $\Omega_1,\Omega_2$ are measurable spaces, $T$ is a measurable map, and suppress the underlying sigma algebras.

The coherence problem for the map $T$ can now be vaguely stated as the task to find nontrivial subsets $E\subset \Omega_1$ and $F\subset \Omega_2$ such that $T(E) \approx F$ and $E,F$ are relatively `simple' in terms of their geometry and balancedness. The latter can be made precise by requiring that the relation $T(E) \approx F$ persists under slight (random) perturbations. We refer to \cite{froyland2010transport} for further details. The sets $E,F$ are then called a \emph{finite time coherent pair}.
There are precise functional-analytic formulations of this problem available in~\cite{froyland2013analytic,Fro15}. Instead of taking this route, we can discretize the dynamics first and state the coherence problem in the discrete setting directly. This was done in \cite{froyland2010transport} to arrive at a problem that is numerically accessible via matrix analysis.
The same setting arises in situations where a precise observation of the state of the system is not possible and only quantized (discrete) observations are performed.

To get to the discrete setting, we subdivide the subsets $\Omega_{1}$ and $\Omega_{2}$ into collections of mutually disjoint partition elements $\left\{ B_{1},...,B_{n}\right\} $ and $\left\{ C_{1},...,C_{m}\right\} $, respectively. We assume that the initial state $X$ is an $\Omega_1$-valued random variable with law $\mu$; thus $\mu$ is a probability measure supported on~$\Omega_1$. Let $\nu$ denote the pushforward of $\mu$ by $T$, i.e., the law of~$Y \coloneqq T(X)$. We then define the \emph{transition matrix} $P\in\mathbb{R}^{m\times n}$ by
\begin{equation}\label{eq:approx_PF}
	P_{ij}=\frac{\mu(B_j \cap T^{-1}(C_i))}{\mu(B_j)} = \mathbb{P}\left[ Y \in C_i \mid X\in B_j \right].
\end{equation}
Note that $P$ is left stochastic, i.e., $P_{ij}\ge 0$ for all $i,j$, and~$\sum_{i=1}^m P_{ij}=1$ for all~$j$. We further define the discrete distributions at initial and final times by $p\in\mathbb{R}^n$ and $q\in\mathbb{R}^m$, respectively, with
\[
p_j = \mu(B_j), \quad j=1,\ldots,n,\qquad q_i = \nu(C_i), \quad i = 1,\ldots,m.
\]
It follows that the discrete initial distribution is mapped to the discrete final one by the transition matrix,
\begin{equation}
    \label{eq:qPp}
    q = Pp.
\end{equation}
We also assume that $p>0$ and $q>0$, componentwise. If not, the associated partition elements are removed from the sets~$B_j$ and~$C_i$, respectively (and the sets $\Omega_1$ and $\Omega_2$ are restricted accordingly).

The transition matrix $P$ together with the (initial) distribution $p$ characterize a one-step random transition
that jumps from some element $B_j$ of the initial partition to some element $C_i$ of the final partition. This way, if one were to consider a sequence of partitions with diameter converging to zero, the associated sequence of transition matrices $P$ would constitute a \emph{small random perturbation}~\cite{Kif86} of $T$, see~\cite{Fr98}. Thus, formulating the coherence problem for this discrete dynamics (see main text) will automatically deliver coherent sets that are robust with respect to small random perturbations of~$T$. 

Finally, we note that approximating dynamical properties of $T$ through the dis\-cretization~\eqref{eq:approx_PF} is attributed to Ulam~\cite{ulam1960collection}.

\section{Proof of Theorem~\ref{thm:projection}}
\label{app:proof_projection_theorem}

In this section, we present the proof of Theorem~\ref{thm:projection}.
Here, $\gamma\colon [n]\to[\lowrank]$ denotes the assignment corresponding to $\Gamma$ (cf.\ Definition~\ref{def:hard_affiliation_matrix}).

\begin{lemma}
\label{lemma:DBMR_projection_tpi_properties}
The matrix $\tpi \in \mathbb{R}^{n \times n}$ given by \eqref{equ:definition_Pi_and_tpi} is symmetric, $ \tpi^\top = \tpi$, and a projection, $\tpi^2 = \tpi$.	
Consequently, $\Pi$ is a $p^{-1}$-orthogonal projection (for $p^{-1}$-symmetry, see \eqref{eq:p symmetric}).
\end{lemma}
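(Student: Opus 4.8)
The plan is to verify the two stated properties of $\tpi$ by a direct computation with the explicit entries in~\eqref{equ:definition_Pi_and_tpi}, and then to transfer them to $\Pi$ through the similarity relation $\Pi = D_p^{1/2}\tpi D_p^{-1/2}$ coming from the definition $\tpi = D_p^{-1/2}\Pi D_p^{1/2}$.

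First I would introduce the shorthand $m_i := \sum_l p_l\,\delta_{\gamma(l)\gamma(i)}$ for the total $p$-mass of the fiber $\gamma^{-1}(\{\gamma(i)\})$ containing $i$, so that $\tpi_{ij} = \sqrt{p_i p_j}\,\delta_{\gamma(i)\gamma(j)}/m_i$. The key structural observation is that $m_i$ depends on $i$ only through its class $\gamma(i)$; hence whenever $\gamma(i)=\gamma(j)$ one has $m_i=m_j$. Symmetry $\tpi^\top=\tpi$ is then immediate: the entry $\tpi_{ij}$ vanishes unless $\gamma(i)=\gamma(j)$, and in that case both the numerator $\sqrt{p_i p_j}$ and the denominator $m_i=m_j$ are symmetric under the exchange $i\leftrightarrow j$.

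For idempotence I would expand $(\tpi^2)_{ij}=\sum_k \tpi_{ik}\tpi_{kj}$ and note that a summand is nonzero only if $\gamma(i)=\gamma(k)=\gamma(j)$, so the sum collapses to $k$ ranging over the common fiber. On that fiber $m_k=m_i$, which yields $(\tpi^2)_{ij}=\frac{\sqrt{p_i p_j}}{m_i^2}\sum_{k:\,\gamma(k)=\gamma(i)} p_k = \frac{\sqrt{p_i p_j}}{m_i^2}\,m_i = \tpi_{ij}$, using $\sum_{k:\,\gamma(k)=\gamma(i)} p_k = m_i$. Thus $\tpi^2=\tpi$, and together with symmetry $\tpi$ is an orthogonal projection with respect to the Euclidean inner product $\langle\cdot,\cdot\rangle_2$.

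Finally, to pass to $\Pi=D_p^{1/2}\tpi D_p^{-1/2}$, I would observe that the map $x\mapsto D_p^{1/2}x$ is a unitary isomorphism from $(\mathbb{R}^n,\langle\cdot,\cdot\rangle_2)$ onto $(\mathbb{R}^n,\langle\cdot,\cdot\rangle_{p^{-1}})$, since $\langle D_p^{1/2}x,\,D_p^{1/2}y\rangle_{p^{-1}}=\langle x,y\rangle_2$. Conjugating the Euclidean orthogonal projection $\tpi$ by this isometry therefore produces a $\langle\cdot,\cdot\rangle_{p^{-1}}$-orthogonal projection; concretely, $\Pi^2=D_p^{1/2}\tpi^2 D_p^{-1/2}=\Pi$, and $\Pi^\top D_p^{-1}=D_p^{-1/2}\tpi D_p^{-1/2}=D_p^{-1}\Pi$ (using $\tpi^\top=\tpi$), which is exactly the $p^{-1}$-symmetry~\eqref{eq:p symmetric}. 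I do not anticipate a genuine obstacle here; the only step requiring care is the bookkeeping of the Kronecker deltas in the idempotence computation, namely recognizing that restricting to a single fiber forces the two denominators to coincide and lets the fiber mass cancel cleanly.
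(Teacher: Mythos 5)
Your proposal is correct and follows essentially the same route as the paper: an entrywise verification of symmetry and idempotence of $\tpi$ (exploiting that nonzero entries force $\gamma(i)=\gamma(j)$ and that the fiber mass in the denominator is constant on each fiber), followed by conjugation with $D_p^{1/2}$ to transfer both properties to $\Pi$. Your explicit remark that the denominator $m_i$ depends only on the class $\gamma(i)$, and your matrix formulation $\Pi^\top D_p^{-1}=D_p^{-1}\Pi$ of the $p^{-1}$-symmetry, are just slightly more explicit packagings of the steps the paper carries out directly.
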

\begin{proof}    
	Symmetry of $\tpi$ follows directly from its definition in \eqref{equ:definition_Pi_and_tpi} and the fact that $\tpi_{ij} \neq 0$ if and only if~$\gamma(i) = \gamma(j)$.	
	In order to show that $\tpi$ is a projection we compute
	\[
	(\tpi^2)_{ij}
	=
	\sum_k \frac{\sqrt{p_i p_k} \, \delta_{\gamma(i)\gamma(k)}}{\sum_l p_l \delta_{\gamma(l)\gamma(k)}}  \frac{\sqrt{p_k p_j} \, \delta_{\gamma(k)\gamma(j)}}{\sum_{l'} p_{l'} \delta_{\gamma(l')\gamma(j)}} 
	\\
	=
	\frac{\sqrt{p_i p_j}\, \delta_{\gamma(i)\gamma(j)}}{\sum_l p_l \delta_{\gamma(l)\gamma(i)}} \sum_k \frac{p_k \, \delta_{\gamma(k)\gamma(j)}}{\sum_{l'} p_{l'} \delta_{\gamma(l')\gamma(j)}} 
	=
	\tpi_{ij},
	\]	
where we use the fact that nonzero terms in the first sum require~$\gamma(i) = \gamma(j) = \gamma(k)$.		
It follows that $\Pi$ is a $p^{-1}$-orthogonal projection:
\begin{align*}
\Pi^2 = (D_p^{1/2}\tpi D_p^{-1/2})^2
&=D_p^{1/2}\tpi^2 D_p^{-1/2} = \Pi, 
\\
\langle u, \Pi v \rangle_{p^{-1}}
=
\langle u, D_p^{-1/2} \tpi D_p^{-1/2} v \rangle_2
&=
\langle D_p^{1/2} \tpi D_p^{-1/2} u, D_p^{-1} v  \rangle_2
=
\langle \Pi u, v \rangle_{p^{-1}},
\qquad
u,v \in \mathbb{R}^n.
\end{align*}	
\end{proof}

\begin{proof}[Proof of Theorem~\ref{thm:projection}]
$\Pi$ is left stochastic by definition and a $p^{-1}$-orthogonal projection by Lemma~\ref{lemma:DBMR_projection_tpi_properties}.
Since $N_{ij} = S P_{ij}p_j$ and $\sum_{i=1}^m N_{ij} = Sp_j$ by~\eqref{eq:N_and_pP_estimators} as well as $\Gamma_{kj} = \delta_{k \gamma(j)}$ by Definition~\ref{def:hard_affiliation_matrix}, equation \eqref{eq:lamda_hat} implies
\begin{equation*}
    {\lambda}_{ik}=\frac{\sum_{{j=1}}^{n}{\Gamma}_{kj}N_{ij}}{\sum_{{i'=1}}^{m}\sum_{{j'=1}}^{n}{\Gamma}_{kj'}N_{i'j'}} = \frac{\sum_{j=1}^n\delta_{k \gamma(j)}P_{ij} p_j}{\sum_{j'=1}^m \delta_{k \gamma(j')}p_{j'}}.
\end{equation*}
Hence,
\begin{equation*}
    (\lambda \Gamma)_{il}
    =
    \sum_{k=1}^\lowrank \left( \frac{\sum_{j=1}^n\delta_{k \gamma(j)}P_{ij} p_j}{\sum_{j'=1}^m \delta_{k \gamma(j')}p_{j'}} \delta_{k \gamma(l)}\right)
    =
    \sum_{j=1}^n P_{ij} \frac{p_j \, \delta_{\gamma(j) \gamma(l)}}{\sum_{j'} p_{j'} \, \delta_{\gamma(l) \gamma(j')}}
    =
    (P \Pi)_{il},
\end{equation*}
proving $\lambda \Gamma = P \Pi$.
The eigenvector properties in \eqref{item:projection_thm_eigenvectors} follow from
\begin{align*}
    \left( \Pi a^{(k)}  \right)_i
    &=
    \sum_{j=1}^n
    \frac{p_i \delta_{\gamma(i) \gamma(j)}}{\sum_{l} p_l \delta_{\gamma(l) \gamma(i)}}\, 
    p_j \delta_{\gamma(j)k}
    =
    p_i \delta_{\gamma(i)k}	 \sum_{j=1}^n
    \frac{p_j \, \delta_{\gamma(i) \gamma(j)}}{\sum_{l} p_l \delta_{\gamma(l) \gamma(i)}}
    =
    a^{(k)}_i,
    \\
    \left( \Pi p  \right)_i    
    &=
    \sum_{j=1}^n
    \frac{p_i \delta_{\gamma(i) \gamma(j)}}{\sum_{l} p_l \delta_{\gamma(l) \gamma(i)}}\, p_j
    =
    p_i \sum_{j=1}^n
    \frac{p_j \, \delta_{\gamma(i) \gamma(j)}}{\sum_{l} p_l \delta_{\gamma(l) \gamma(i)}}
    =
    p_i,
\end{align*}
where we use the fact that nonzero terms in the first sums  require~$\gamma(i) = \gamma(j)$. Since $\Pi$ is a projection, any of its eigenvalues can either be $0$ or $1$. For \eqref{eq:eigenspace}, it is hence sufficient to show that if $\langle b, a^{(k)}\rangle_{p^{-1}} =0$ for all $k \in \mathrm{Ran} \, \gamma$, then necessarily $\Pi b = 0$. This follows by noting that 
\begin{equation}
\label{eq:b orthogonal}
\langle b, a^{(k)} \rangle_{p^{-1}} = \sum_{i=1}^n b_i \delta_{\gamma(i) k}, 
\end{equation}
so that 
\begin{equation}
\nonumber
\sum_{j=1}^n \Pi_{ij} b_j = \frac{p_i \sum_{j=1}^n \delta_{\gamma(i)\gamma(j)}b_j}{\sum_{l=1}^n p_l \delta_{\gamma(l) \gamma(i)}} = 0
\end{equation}
if \eqref{eq:b orthogonal} is satisfied for all $k \in \mathrm{Ran} \, \gamma$.

The disjointness of the supports of the vectors $a^{(k)}$ follows from their definition (and $\Gamma$ being a hard affiliation matrix). Item is now a direct consequence of \eqref{item:projection_thm_eigenvectors}.
Item \eqref{item:projection_thm_DBMR_structure_preserving} follows directly from \eqref{item:projection_thm_eigenvectors} and $\Lambda = P \Pi$, $\Lambda p = P\Pi p = Pp = q$.
\end{proof}

\section{Pinsker's inequalities for the (weighted) $\ell^{2}$ norm}
\label{section:Pinsker_l_2}

The classical formulation of Pinsker's inequality \cite[Lemma~2.5]{tsybakov2009nonparametric} bounds the squared $\ell^{1}$ norm (or, equivalently, the squared total variation norm) of the difference of two probability vectors $u,v \in \bR^{m}$ by the Kullback--Leibler divergence,
\begin{equation}
\label{equ:ell_1_Pinsker}
\norm{u-v}_{1}^{2} \leq 2 \, \KLD{u}{v}.
\end{equation}
In this section, we derive a similar result for the (possibly weighted) $\ell^{2}$ norm in place of the $\ell^{1}$ norm.
While this can easily be achieved by applying the inequality $\norm{x}_{2} \leq \norm{x}_{1}$, $x\in\bR^{m}$, our aim is to obtain bounds that are as sharp as possible.
For this purpose, we use the concepts of balancedness and $q$-weighted balancedness from Definition \ref{def:balancedness} and state four versions of Pinsker's inequality in Proposition \ref{prop:l2_Pinsker} below which are particularly sharp in cases where either
\begin{enumerate}[(a)]
\item the difference $u-v$ has high balancedness, or
\item we require a bound of the $q$-weighted $\ell^{2}$ norm and $u-v$ has high $q$-balancedness, or
\item the vector $u$ has high balancedness and the difference $u-v$ is small, or
\item we require a bound of the $q$-weighted $\ell^{2}$ norm, $u$ has high $q$-balancedness and the difference $u-v$ is small,
\end{enumerate}
respectively.

\begin{lemma}
\label{lemma:technical_inequality_log_Taylor}
For any $x\in\bR$ with $x > -1$,
\[
\log(1+x)
\leq
x - \frac{x^{2}}{2} + \frac{x^{3}}{3}.
\]
\end{lemma}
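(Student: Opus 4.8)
The plan is to prove the inequality by showing that the difference function has a global minimum of zero at the origin. Define
\[
g(x) := x - \frac{x^{2}}{2} + \frac{x^{3}}{3} - \log(1+x), \qquad x > -1,
\]
so that the claim becomes $g(x) \geq 0$ on $(-1,\infty)$. First I would record the boundary value $g(0) = 0$, which will serve as the reference point.

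Next I would differentiate. We have $g'(x) = 1 - x + x^{2} - \tfrac{1}{1+x}$, and the key algebraic step is to put this over a common denominator. Using $(1 - x + x^{2})(1+x) = 1 + x^{3}$, one finds the clean expression
\[
g'(x) = \frac{1+x^{3}}{1+x} - \frac{1}{1+x} = \frac{x^{3}}{1+x}.
\]
Since $x > -1$ forces $1+x > 0$, the sign of $g'(x)$ equals the sign of $x^{3}$, hence the sign of $x$. Thus $g$ is strictly decreasing on $(-1,0)$ and strictly increasing on $(0,\infty)$, so $x = 0$ is the unique global minimizer of $g$ on $(-1,\infty)$. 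Combined with $g(0) = 0$, this yields $g(x) \geq 0$ everywhere on the interval, which is the desired inequality.

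An equivalent route, should one prefer to avoid monotonicity arguments, is to write both sides as integrals from $0$ to $x$: since $\log(1+x) = \int_{0}^{x} \tfrac{1}{1+t}\,dt$ and $x - \tfrac{x^{2}}{2} + \tfrac{x^{3}}{3} = \int_{0}^{x} (1 - t + t^{2})\,dt$, the difference is $\int_{0}^{x} \tfrac{t^{3}}{1+t}\,dt$ by the same simplification, and a short sign check of the integrand on $[0,x]$ (for $x \geq 0$) and on $[x,0]$ (for $-1 < x < 0$) gives nonnegativity. I do not anticipate a genuine obstacle here, as the statement is an elementary calculus fact; the only point requiring care is the factorization $1 - x + x^{2} = \tfrac{1+x^{3}}{1+x}$ that collapses the derivative to $\tfrac{x^{3}}{1+x}$, after which the conclusion is immediate.
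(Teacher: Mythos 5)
Your proof is correct and follows essentially the same route as the paper: both define the difference function, use the identity $(1-x+x^{2})(1+x) = 1+x^{3}$ to reduce the derivative to $x^{3}/(1+x)$, and conclude from the sign change at $x=0$ together with the value $0$ there. The integral variant you mention is just a repackaging of the same computation, so there is nothing further to reconcile.
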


\begin{proof}
The function $f(x) = x - x^{2}/2 + x^{3}/3 - \log(1+x)$ satisfies $f(0) = 0$ and, since
$(1+x) f'(x) = x^{3}$,
\[
f'(x)
\begin{cases}
\geq 0 & \text{if } x\geq 0,
\\
\leq 0 & \text{if } -1 < x < 0,
\end{cases}
\]
proving the claim by the fundamental theorem of calculus.
\end{proof}

The following result is utilized in Theorem~\ref{thm:Frob_DBMR_bound}.

\begin{proposition}
\label{prop:l2_Pinsker}
Let $u,v,q \in \bR^{m}$, $m\in\bN$, be probability vectors such that $q>0$ (componentwise) and
$\alpha(u,v) \coloneqq \frac{2}{3} \max_{i} \frac{|u_{i}-v_{i}|}{u_{i}} \in [0,\infty]$.\footnote{Recall the convention $\tfrac{0}{0} = 0$.}
Then
\begin{enumerate}[(a)]
\item
\label{item:l2_Pinsker_Hoelder}
$\displaystyle
\| u-v \|_{2}^{2}
\leq
\frac{2\, \KLD{u}{v}}{m\, \fB(u-v)},
$
\item
\label{item:weighted_l2_Pinsker_Hoelder}
$\displaystyle
\sum_{i=1}^{m} \frac{|u_{i} - v_{i}|^{2}}{q_{i}}
\leq
\frac{ 2\, \KLD{u}{v} }{\fB_{q}(u-v)},
$
\item
\label{item:l2_Pinsker_Taylor}
$\displaystyle
\| u-v \|_{2}^{2}
\leq
\frac{2 \, \KLD{u}{v} }{m\, \fB(u) (1 - \alpha(u,v))}$,
\hfill
if $\alpha(u,v) < 1$.
\item
\label{item:weighted_l2_Pinsker_Taylor}
$\displaystyle
\sum_{i=1}^{m} \frac{|u_{i} - v_{i}|^{2}}{q_{i}}
\leq
\frac{2 \, \KLD{u}{v}}{\fB_{q}(u) (1 - \alpha(u,v))}$,
\hfill
if $\alpha(u,v) < 1$.
\end{enumerate}	
\end{proposition}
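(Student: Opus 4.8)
The plan is to treat the two ``Hölder-type'' bounds \eqref{item:l2_Pinsker_Hoelder}--\eqref{item:weighted_l2_Pinsker_Hoelder} and the two ``Taylor-type'' bounds \eqref{item:l2_Pinsker_Taylor}--\eqref{item:weighted_l2_Pinsker_Taylor} separately, since they rely on different mechanisms. For the first pair I would reduce everything to the classical Pinsker inequality \eqref{equ:ell_1_Pinsker}, whereas for the second pair I would obtain a sharper lower bound on $\KLD{u}{v}$ directly from the third-order logarithm estimate in Lemma~\ref{lemma:technical_inequality_log_Taylor}. Throughout, the degenerate case $u=v$ is covered by the convention $\fB(0)=\fB_{q}(0)=1$.

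For part \eqref{item:l2_Pinsker_Hoelder} the starting point is the elementary bound $\norm{x}_2^2 \le \norm{x}_\infty \norm{x}_1$ applied to $x = u-v$. Since $\norm{u-v}_\infty = \norm{u-v}_1 / (m\,\fB(u-v))$ directly from Definition~\ref{def:balancedness}, this yields $\norm{u-v}_2^2 \le \norm{u-v}_1^2 / (m\,\fB(u-v))$, and I would finish by inserting $\norm{u-v}_1^2 \le 2\KLD{u}{v}$ from \eqref{equ:ell_1_Pinsker}. Part \eqref{item:weighted_l2_Pinsker_Hoelder} is analogous but even more direct: writing $\sum_i |u_i-v_i|^2/q_i = \sum_i |u_i-v_i|\cdot(|u_i-v_i|/q_i) \le (\max_i |u_i-v_i|/q_i)\,\norm{u-v}_1$ and using $\max_i |u_i-v_i|/q_i = \norm{u-v}_1/\fB_{q}(u-v)$ gives $\sum_i |u_i-v_i|^2/q_i \le \norm{u-v}_1^2/\fB_{q}(u-v)$, after which \eqref{equ:ell_1_Pinsker} again closes the estimate.

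The heart of the argument lies in parts \eqref{item:l2_Pinsker_Taylor}--\eqref{item:weighted_l2_Pinsker_Taylor}. I would first dispose of the case where $u$ is not absolutely continuous with respect to $v$ (then $\KLD{u}{v}=\infty$ and nothing is to prove), and note that the hypothesis $\alpha(u,v)<1$ forces $u$ and $v$ to share the same support, so that all sums below run over indices with $u_i,v_i>0$. Writing $\KLD{u}{v} = -\sum_i u_i \log(v_i/u_i)$ and setting $x_i = (v_i-u_i)/u_i > -1$, Lemma~\ref{lemma:technical_inequality_log_Taylor} gives the lower bound
\[
\KLD{u}{v} \ge -\sum_i u_i x_i + \tfrac12 \sum_i u_i x_i^2 - \tfrac13 \sum_i u_i x_i^3 .
\]
Here the linear term vanishes because $\sum_i u_i x_i = \sum_i (v_i-u_i)=0$, the quadratic term equals $\tfrac12\sum_i (u_i-v_i)^2/u_i$, and the cubic remainder is controlled using $|x_i| = |u_i-v_i|/u_i \le \tfrac32\alpha(u,v)$ (by the very definition of $\alpha$), which bounds $-\tfrac13\sum_i u_i x_i^3$ from below by $-\tfrac{\alpha(u,v)}{2}\sum_i (u_i-v_i)^2/u_i$. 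Combining these yields the key weighted inequality
\[
\sum_i \frac{(u_i-v_i)^2}{u_i} \le \frac{2\KLD{u}{v}}{1-\alpha(u,v)} .
\]

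From this key inequality the two claims follow by reweighting exactly as in the Hölder parts: for \eqref{item:l2_Pinsker_Taylor} I would use $\sum_i (u_i-v_i)^2 = \sum_i u_i\cdot (u_i-v_i)^2/u_i \le \norm{u}_\infty \sum_i (u_i-v_i)^2/u_i$ together with $\norm{u}_\infty = 1/(m\,\fB(u))$, and for \eqref{item:weighted_l2_Pinsker_Taylor} the factor $\max_i u_i/q_i = 1/\fB_{q}(u)$ in place of $\norm{u}_\infty$. I expect the only genuinely delicate step to be the control of the cubic remainder: one must verify that Lemma~\ref{lemma:technical_inequality_log_Taylor} applies termwise (i.e.\ $x_i>-1$) and that the sign and magnitude of $-\tfrac13\sum_i u_i x_i^3$ are correctly absorbed into the factor $1-\alpha(u,v)$. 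This is precisely where the hypothesis $\alpha(u,v)<1$ becomes essential, and where a careless estimate would lose the sharpness that distinguishes these bounds from the Hölder versions.
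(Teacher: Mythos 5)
Your proposal is correct and follows essentially the same route as the paper's proof: parts \eqref{item:l2_Pinsker_Hoelder}--\eqref{item:weighted_l2_Pinsker_Hoelder} via H\"older's inequality combined with the classical Pinsker bound \eqref{equ:ell_1_Pinsker}, and parts \eqref{item:l2_Pinsker_Taylor}--\eqref{item:weighted_l2_Pinsker_Taylor} via the same support reduction, Lemma~\ref{lemma:technical_inequality_log_Taylor} applied to $x_i=(v_i-u_i)/u_i$, cancellation of the linear term by $\sum_i(v_i-u_i)=0$, and absorption of the cubic term into the factor $1-\alpha(u,v)$. The only difference is cosmetic: you isolate the intermediate inequality $\sum_i (u_i-v_i)^2/u_i \le 2\,\KLD{u}{v}/(1-\alpha(u,v))$ and then reweight, whereas the paper performs the reweighting by $\norm{u}_{\infty}$ (resp.\ $\max_i u_i/q_i$) directly in the final step.
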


\begin{proof}
Let $\eta \coloneqq v-u$.
If $\eta = 0$, there is nothing to show.
Otherwise, Hölder's inequality and Pinsker's inequality \eqref{equ:ell_1_Pinsker} yield
\begin{alignat*}{4}
\norm{\eta}_{2}^{2}
&\leq
\norm{\eta}_{\infty} \norm{\eta}_{1}
&&=
\frac{\norm{\eta}_{\infty}}{\norm{\eta}_{1}} \norm{\eta}_{1}^{2}
&&\leq
\frac{2\, \KLD{u}{v}}{m\, \fB(u-v)},
\\
\sum_{i=1}^{m} \frac{|\eta_{i}|^{2}}{q_{i}}
&\leq
\max_{i} \frac{|\eta_{i}|}{q_{i}} \norm{\eta}_{1}
&&=
\frac{\max_{i} \frac{|\eta_{i}|}{q_{i}}}{\norm{\eta}_{1}} \norm{\eta}_{1}^{2}
&&\leq
\frac{ 2\, \KLD{u}{v} }{\fB_{q}(u-v)},
\end{alignat*}
proving \eqref{item:l2_Pinsker_Hoelder} and \eqref{item:weighted_l2_Pinsker_Hoelder}.
In order to show \eqref{item:l2_Pinsker_Taylor} and \eqref{item:weighted_l2_Pinsker_Taylor}, first note that, if $v_{i} = 0$ and $u_{i} \neq 0$ for some $i$, then $\KLD{u}{v}=\infty$ and there is also nothing to show.
On the other hand, if $u_{i} = 0$ and $v_{i} \neq 0$ for some $i$, then the condition $\alpha(u,v) < 1$ is violated.
Finally, if $u_{i} = 0$ and $v_{i} = 0$ for some $i$, then we can reduce the dimension $m$ by one and work with $[m]\setminus\{i\}$ without changing any of the quantities involved.
Hence, we can assume $u_{i} \neq 0 \neq v_{i}$ for each $i\in [m]$.
Now, since $\sum_{i} \eta_{i} = 0$ and assuming that $\alpha(u,v) < 1$, Lemma \ref{lemma:technical_inequality_log_Taylor} implies
\begin{align*}
\KLD{u}{v}
&=
- \sum_{i=1}^{m} u_{i} \log\bigg(\frac{u_{i}+\eta_{i}}{u_{i}}\bigg)
\\
&\geq
\sum_{i=1}^{m} u_{i} \bigg( - \frac{\eta_{i}}{u_{i}}  + \frac{\eta_{i}^{2}}{2u_{i}^{2}} - \frac{\eta_{i}^{3}}{3u_{i}^{3}}\bigg)
\\
&=
\sum_{i=1}^{m} \frac{\eta_{i}^{2}}{2u_{i}} \bigg( 1 - \frac{2 \eta_{i}}{3u_{i}}\bigg)
\\
&\geq
\frac{1 - \alpha(u,v)}{2 \norm{u}_{\infty}}\, \|u-v\|_{2}^{2},
\end{align*}
proving \eqref{item:l2_Pinsker_Taylor}.
The proof of \eqref{item:weighted_l2_Pinsker_Taylor} goes similarly with a slight modification of the last step:
\[
\KLD{u}{v}
\geq
\sum_{i=1}^{m} \frac{\eta_{i}^{2}}{q_{i}} \frac{q_{i}}{2u_{i}} \bigg( 1 - \frac{2 \eta_{i}}{3u_{i}}\bigg)
\geq
\frac{1 - \alpha(u,v)}{2 \max_{i} \frac{u_{i}}{q_{i}}}\, \sum_{i=1}^{m} \frac{|u_{i} - v_{i}|^{2}}{q_{i}}.
\]
\end{proof}

\section{Courant--Fischer Theorem for Singular Values}

The common formulation of the Courant--Fischer (or min-max) theorem is stated for the eigenvalues of quadratic matrices.
However, in section~\ref{sec:coh_svd} we require a version for singular values of an arbitrary matrix, which is a well-known consequence.
Let us state and prove the precise version that we are going to use:
\begin{theorem}
\label{thm:Courant_Fischer_singular_values_version}
Let $n_{1},n_{2} \in \bN$ and $M \in \bR^{n_{1}\times n_{2}}$ be an arbitrary matrix with $\mathrm{rank} (M) \coloneqq s \leq \min(n_{1},n_{2})$ and ordered positive singular values $\sigma_{1} \geq \sigma_{2} \geq \cdots \geq \sigma_{s} > 0$.
Further, let $M = U \Sigma V^{\top}$ be a singular value decomposition of $M$ with $\Sigma = \textrm{diag}(\sigma_{1},\dots,\sigma_{s})$ and with $U \in \bR^{n_{1}\times s}$, $V \in \bR^{n_{2}\times s}$ having orthonormal columns.
Then, denoting by $\fW_{k}$ the set of $k$-dimensional subspaces of $\bR^{n_{2}}$,
\begin{align}
\label{equ:min_max_thm_singular_values}
\max_{W \in \fW_{k}} \min_{x \in W,\, \norm{x}_{2} = 1} \norm{Mx}_{2}
&=
\sigma_{k},
&&
k \in [s],
\\
\label{equ:min_max_thm_singular_values_our_version}
\max_{(e_{1},\dots,e_{\lowrank}) \text{ orthonormal}} \sum_{k=1}^{\lowrank} \norm{M e_{k}}_{2}
&=
\sum_{k=1}^{\lowrank} \sigma_{k},
&&
\lowrank \in [s],
\end{align}
where  \eqref{equ:min_max_thm_singular_values} is maximized by $W_{k}^{\star} = \mathrm{span}(V_{\bullet 1},\dots, V_{\bullet k})$. (with the inner minimization problem solved by $x = V_{\bullet k}$), while \eqref{equ:min_max_thm_singular_values_our_version} is maximized by the right singular vectors $e_{1} = V_{\bullet 1},\dots,e_{\lowrank} = V_{\bullet \lowrank}$.
\end{theorem}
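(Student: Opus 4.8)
The plan is to reduce everything to the symmetric positive semidefinite matrix $M^{\top}M \in \bR^{n_{2}\times n_{2}}$, whose eigenvalues are $\sigma_{1}^{2} \geq \cdots \geq \sigma_{s}^{2} > 0 = \cdots = 0$ with orthonormal eigenvectors $V_{\bullet 1},\dots,V_{\bullet n_{2}}$ (completing the columns of $V$ to an orthonormal basis of $\bR^{n_{2}}$). The elementary identity $\norm{Mx}_{2}^{2} = x^{\top}M^{\top}M x$ is the bridge that turns every statement about singular values of $M$ into a statement about this eigenvalue problem.

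For \eqref{equ:min_max_thm_singular_values} I would invoke the classical (eigenvalue) Courant--Fischer theorem applied to $M^{\top}M$, which gives $\max_{W\in\fW_{k}}\min_{x\in W,\,\norm{x}_{2}=1} x^{\top}M^{\top}M x = \sigma_{k}^{2}$, with the outer maximum attained at $W_{k}^{\star} = \mathrm{span}(V_{\bullet 1},\dots,V_{\bullet k})$ and the inner minimum at $x = V_{\bullet k}$. Since $t\mapsto\sqrt{t}$ is monotone increasing on $[0,\infty)$, it commutes with both the inner $\min$ and the outer $\max$ of the nonnegative quantities involved; taking square roots therefore yields $\max_{W}\min_{x}\norm{Mx}_{2}=\sigma_{k}$ with the identical optimizers. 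This step is routine once the reduction is in place.

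For \eqref{equ:min_max_thm_singular_values_our_version} I would first secure the lower bound by exhibiting the explicit candidate $e_{k}=V_{\bullet k}$: because $\lowrank\le s$ forces $\sigma_{k}>0$, we have $M V_{\bullet k}=\sigma_{k}U_{\bullet k}$ with unit $U_{\bullet k}$, so $\sum_{k=1}^{\lowrank}\norm{M V_{\bullet k}}_{2}=\sum_{k=1}^{\lowrank}\sigma_{k}$ and the maximum is at least $\sum_{k=1}^{\lowrank}\sigma_{k}$. For the matching upper bound I would expand an arbitrary orthonormal system in the right-singular basis, setting $c_{ik}\coloneqq \langle V_{\bullet i},e_{k}\rangle_{2}$ so that $\norm{M e_{k}}_{2}=\big(\sum_{i}\sigma_{i}^{2}c_{ik}^{2}\big)^{1/2}$, and exploit the two families of constraints coming from orthonormality, namely $\sum_{i}c_{ik}^{2}\le 1$ for every $k$ (columns) and $\sum_{k}c_{ik}^{2}\le 1$ for every $i$ (rows). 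This reduces the task to a finite-dimensional optimization over the coefficient array $(c_{ik})$, which I would control by a rearrangement/majorization argument that transfers the admissible ``mass'' onto the largest singular values.

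The step I expect to be the main obstacle is exactly this upper bound in \eqref{equ:min_max_thm_singular_values_our_version}. For the \emph{squared} norms the analogous claim is immediate, since $\sum_{k}\norm{M e_{k}}_{2}^{2}=\sum_{k}e_{k}^{\top}M^{\top}M e_{k}=\mathrm{tr}(E^{\top}M^{\top}M E)$ is a linear functional of $(c_{ik}^{2})$ and the Ky Fan trace-maximization principle yields the value $\sum_{k}\sigma_{k}^{2}$ at once. The sum of the norms themselves, by contrast, is a \emph{concave} (not linear) functional of the array $(c_{ik}^{2})$, so no one-line trace argument applies and the extremum need not sit at the ``corner'' given by the singular vectors. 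I would therefore treat this step with particular care, verifying that the singular-vector configuration genuinely dominates every admissible coefficient array; this is the delicate point where the argument must be scrutinized most closely.
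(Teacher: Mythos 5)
Your proof of \eqref{equ:min_max_thm_singular_values} is correct and coincides with the paper's own argument (reduce to $M^{\top}M$ via $\norm{Mx}_{2}^{2}=\langle x, M^{\top}Mx\rangle_{2}$, apply the classical Courant--Fischer theorem, take square roots), and your lower bound ``$\geq$'' in \eqref{equ:min_max_thm_singular_values_our_version} via the candidate $e_{k}=V_{\bullet k}$ is also fine. But the step you flagged as delicate --- the upper bound in \eqref{equ:min_max_thm_singular_values_our_version} --- is a genuine gap, and no rearrangement or majorization argument can close it, because the claimed equality is \emph{false}. Take $M=\mathrm{diag}(2,1)$ (so $s=2$, $\sigma_{1}=2$, $\sigma_{2}=1$, $U=V=\mathrm{Id}$) and the orthonormal system $e_{1}=\tfrac{1}{\sqrt{2}}(1,1)^{\top}$, $e_{2}=\tfrac{1}{\sqrt{2}}(1,-1)^{\top}$: then $\norm{Me_{1}}_{2}=\norm{Me_{2}}_{2}=\sqrt{5/2}$, so $\norm{Me_{1}}_{2}+\norm{Me_{2}}_{2}=\sqrt{10}>3=\sigma_{1}+\sigma_{2}$. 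More generally, rotating $V_{\bullet 1},V_{\bullet 2}$ by $\pi/4$ gives the value $\sqrt{2(\sigma_{1}^{2}+\sigma_{2}^{2})}$, which strictly exceeds $\sigma_{1}+\sigma_{2}$ whenever $\sigma_{1}\neq\sigma_{2}$; the sum of (unsquared) norms rewards \emph{balancing} the images across directions rather than aligning with the singular basis. This is exactly the phenomenon your concavity remark predicts: the true sharp upper bound here is the Cauchy--Schwarz/Ky Fan bound $\sum_{k=1}^{\lowrank}\norm{Me_{k}}_{2}\leq\sqrt{\lowrank\,(\sigma_{1}^{2}+\cdots+\sigma_{\lowrank}^{2})}$, which the counterexample attains, not $\sigma_{1}+\cdots+\sigma_{\lowrank}$.

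For comparison, the paper's own proof of this step consists of the single claim that ``a recursive application of \eqref{equ:min_max_thm_singular_values} implies the inequality $\leq$,'' which is invalid: \eqref{equ:min_max_thm_singular_values} only yields $\min_{x\in\mathrm{span}(e_{1},\dots,e_{k}),\,\norm{x}_{2}=1}\norm{Mx}_{2}\leq\sigma_{k}$, and this minimum over the span is dominated \emph{by} $\norm{Me_{k}}_{2}$, so it cannot bound $\norm{Me_{k}}_{2}$ from above. Your hesitation is therefore vindicated --- the defect lies in the theorem, not in your strategy. The statement that is actually true (and that the relaxed coherence problem \eqref{eq:coherence_relax2} implicitly requires) pairs the images against a \emph{second} orthonormal family: $\max\sum_{k=1}^{\lowrank}\langle f_{k},Me_{k}\rangle_{2}=\sum_{k=1}^{\lowrank}\sigma_{k}$ over orthonormal systems $(e_{k})$ and $(f_{k})$, which is von Neumann's trace inequality; alternatively, with squared norms one has $\max\sum_{k=1}^{\lowrank}\norm{Me_{k}}_{2}^{2}=\sum_{k=1}^{\lowrank}\sigma_{k}^{2}$ by the trace argument you already identified. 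Either repair restores a correct statement, but the theorem as written, and the paper's proof of it, do not.
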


\begin{proof}
First note that, for each $k \in [s]$, $\sigma_{k} = \sqrt{\lambda_{k}}$, where $\lambda_{1} \geq \cdots \geq \lambda_{s} > 0$ are the positive eigenvalues of $M^{\top} M$ and that
\[
\norm{Mx}_{2}^{2}
=
\langle Mx , Mx \rangle_{2}
=
\langle x , M^{\top} Mx \rangle_{2}
\leq 
\norm{x}_{2} \norm{M^{\top} Mx}_{2},
\]
with equality if and only if $x$ and $M^{\top} Mx$ are collinear, i.e.\ whenever $x$ is an eigenvector of $M^{\top} M$.
Hence, the inequality ``$\leq$'' in \eqref{equ:min_max_thm_singular_values} follows directly from the classical Courant--Fischer theorem \cite[Theorem 4.2.6]{HoJo13}.
To see the equality for $W = W_{k}^{\star}$, consider any $x = \sum_{j=1}^{k} \alpha_{j} V_{\bullet j} \in W_{k}^{\star}$, $\alpha \in \bR^{k}$, and observe that
\[
\norm{Mx}_{2}^{2}
=
\norm{ \sum_{j=1}^{k} \alpha_{j} \sigma_{j} U_{\bullet j} }_{2}^{2}
=
\sum_{j=1}^{k} (\alpha_{j} \sigma_{j})^{2}
\geq
\sigma_{k}^{2} \sum_{j=1}^{k} \alpha_{j}^{2}
=
\sigma_{k}^{2} \norm{x}_{2}^{2},
\]
with equality for $x = V_{\bullet k}$.

To see \eqref{equ:min_max_thm_singular_values_our_version}, note that a recursive application of \eqref{equ:min_max_thm_singular_values} implies the inequality ``$\leq$'' in \eqref{equ:min_max_thm_singular_values_our_version}, while the choice $e_{k} = V_{\bullet k}$ satisfies equality:
\[
\sum_{k=1}^{\lowrank} \norm{M V_{\bullet k}}_{2}
=
\sum_{k=1}^{\lowrank} \norm{\sigma_{k} U_{\bullet k}}_{2}
=
\sum_{k=1}^{\lowrank} \sigma_{k}.
\]
\end{proof}

\section{Collective analysis of DBMR runs}
\label{sec:DBMR_stat}

To support the the investigations in section~\ref{sec:numerical_examples}, here we provide a brief analysis involving multiple DBMR runs. Due to the non-concavity of the objective $\hat\ell$, DBMR can settle in different local maxima of $\hat\ell$, depending on the initialization~$\Gamma^{(0)}$.

Results of $100$ runs of DBMR are presented in Figures~\ref{fig:spectrum_100_runs_1}  (Example~\ref{example:1_three_coherent_sets}) and~\ref{fig:spectrum_100_runs_2} (Ex\-am\-ple~\ref{example:piecewise_expanding_interval_map}). The top row of images compare the first $5$ singular values of $\tLam$ for every converged pair $(\lambda,\Gamma)$ (blue crosses) with the singular values of the full model~$\tP$ (red dots).
Since $\mathrm{rank}\,\tLam = 3$, we do not show its 4th and 5th singular values, which are always zero. The bottom row of images present, for every converged pair $(\lambda,\Gamma)$, the DBMR objective $\hat\ell(\lambda,\Gamma)$ versus the degree of coherence~$\cC_3(\Lambda)$. In the bottom panels a colorbar indicates the number of solutions in the histogram bins. In each of the bottom rows, one panel is showing the results as a scatter plot instead of a histogram, for better visual experience.

\begin{figure}[htbp]
	\centering
	\begin{subfigure}[b]{0.32\textwidth}
		\centering
		\includegraphics[width=\textwidth]{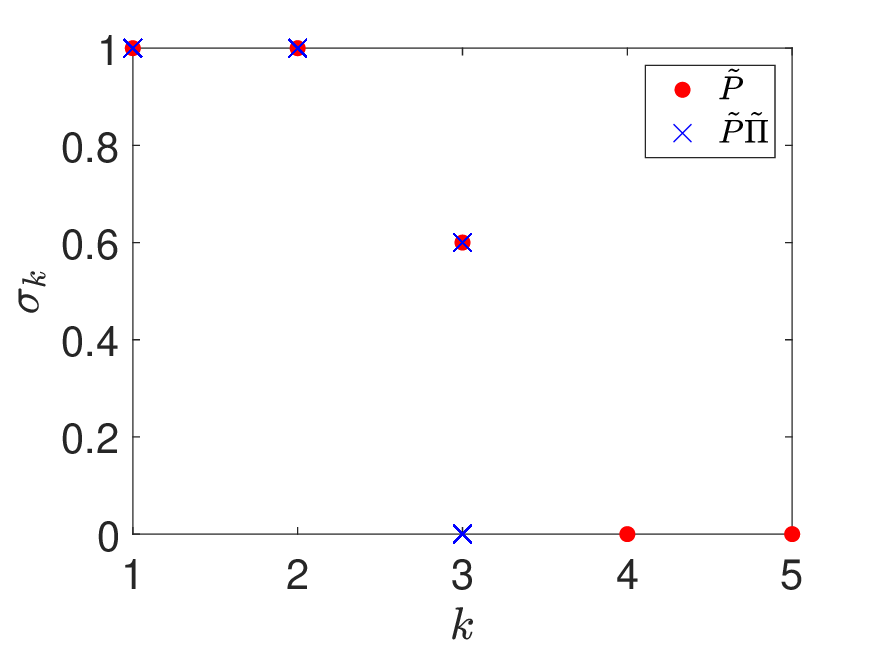}
	\end{subfigure}
	\begin{subfigure}[b]{0.32\textwidth}
		\centering
		\includegraphics[width=\textwidth]{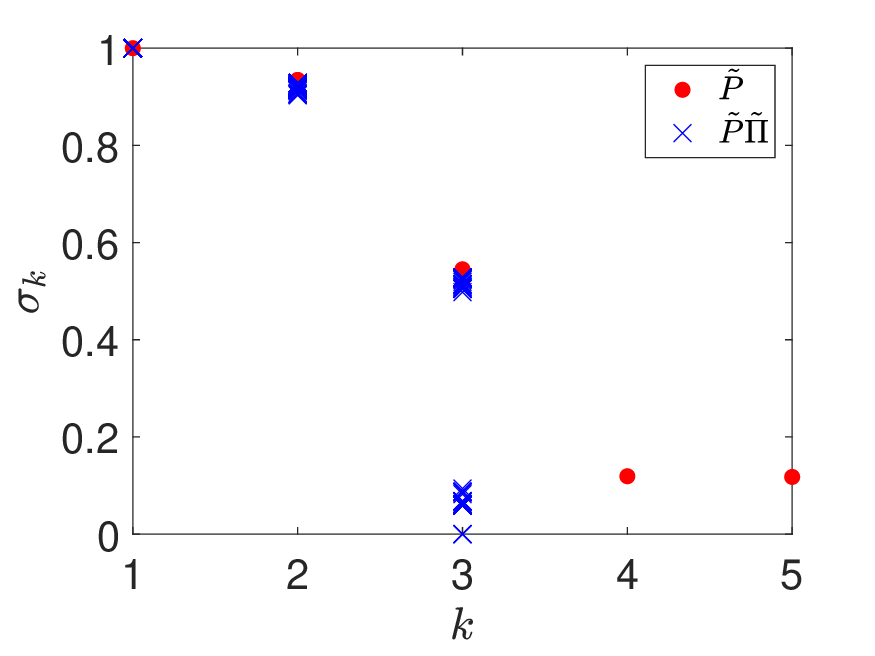}
	\end{subfigure}     
        \begin{subfigure}[b]{0.32\textwidth}
		\centering
		\includegraphics[width=\textwidth]{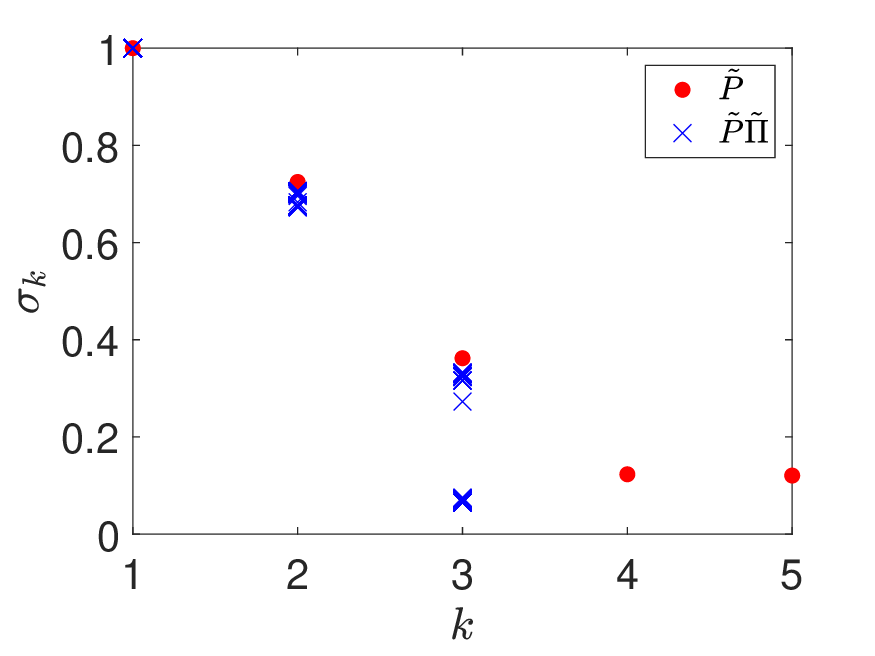}
	\end{subfigure}
    \vfill
	\begin{subfigure}[b]{0.32\textwidth}
		\centering
		\includegraphics[width=\textwidth]{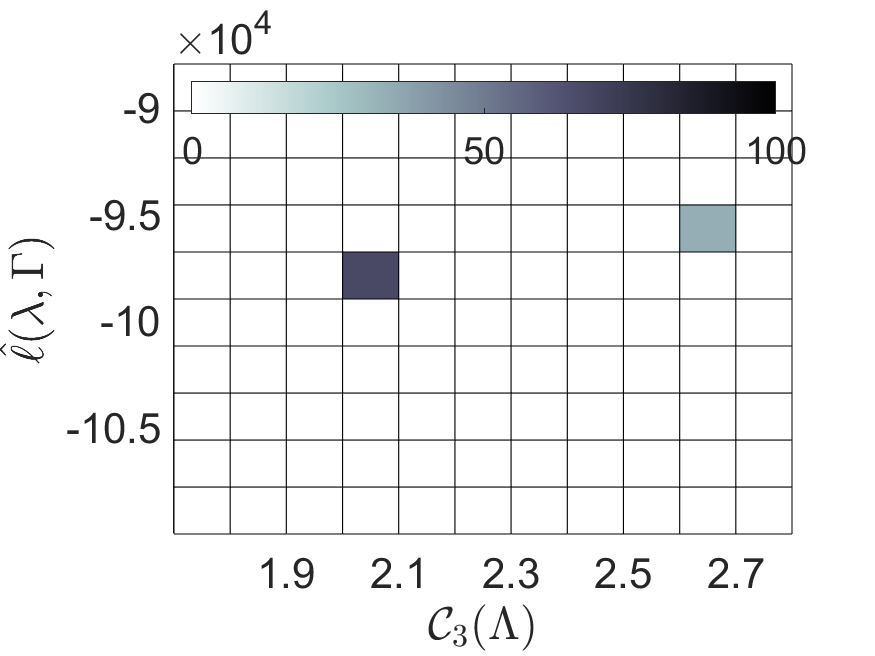}
	\end{subfigure} 
        \begin{subfigure}[b]{0.32\textwidth}
		\centering
		\includegraphics[width=\textwidth]{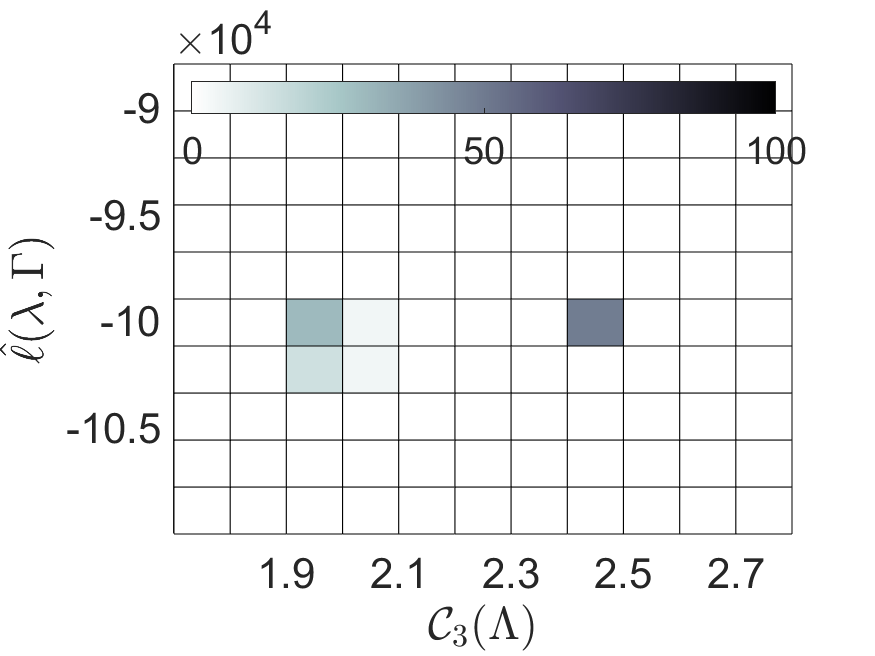}
	\end{subfigure}
	\begin{subfigure}[b]{0.32\textwidth}
		\centering
		\includegraphics[width=\textwidth]{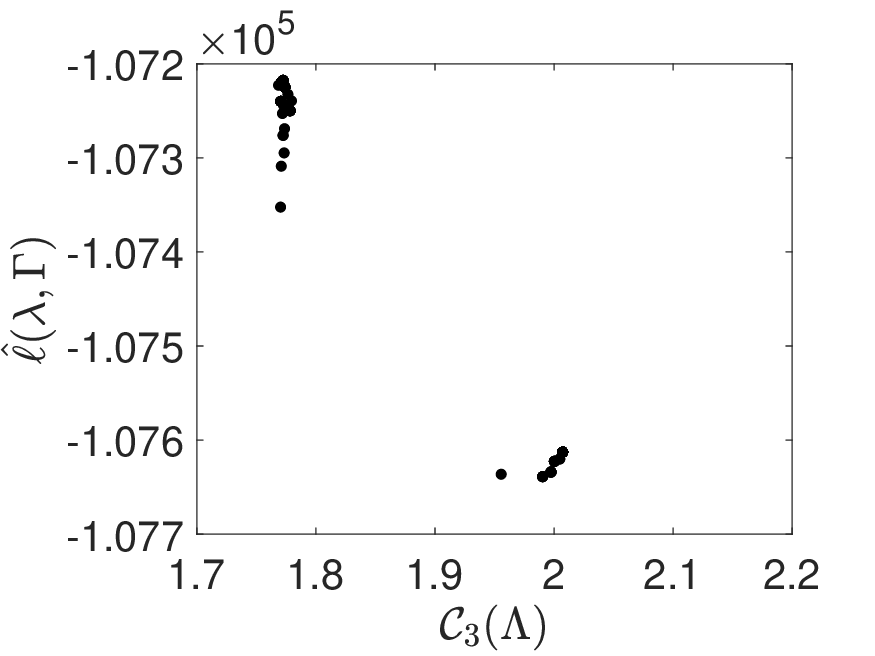}
	\end{subfigure}
	\caption{\textit{Top:} First $5$ singular values of $\tP$ and $\tLam = \tP\tpi$ for the transition matrix of Example~\ref{example:1_three_coherent_sets}. \textit{Bottom:} Likelihood bound $\hat\ell(\lambda,\Gamma)$ in dependence of the degree of coherence~$\cC_{3}(\Lambda)$. Results are shown for $100$ runs of DBMR with $\lowrank=3$ latent states. \textit{Left:} unperturbed; \textit{center:} slightly perturbed \textit{right:} strongly perturbed.
 }
	\label{fig:spectrum_100_runs_1}
\end{figure}

\begin{figure}[htbp]
	\centering
	\begin{subfigure}[b]{0.32\textwidth}
		\centering
		\includegraphics[width=\textwidth]{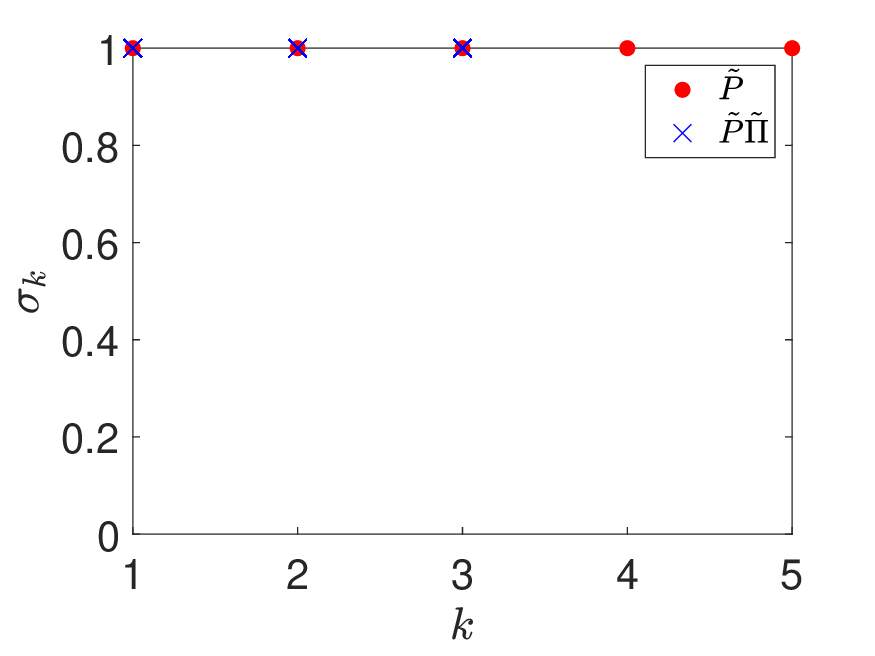}
	\end{subfigure}
 \hfill
    \begin{subfigure}[b]{0.32\textwidth}
		\centering
		\includegraphics[width=\textwidth]{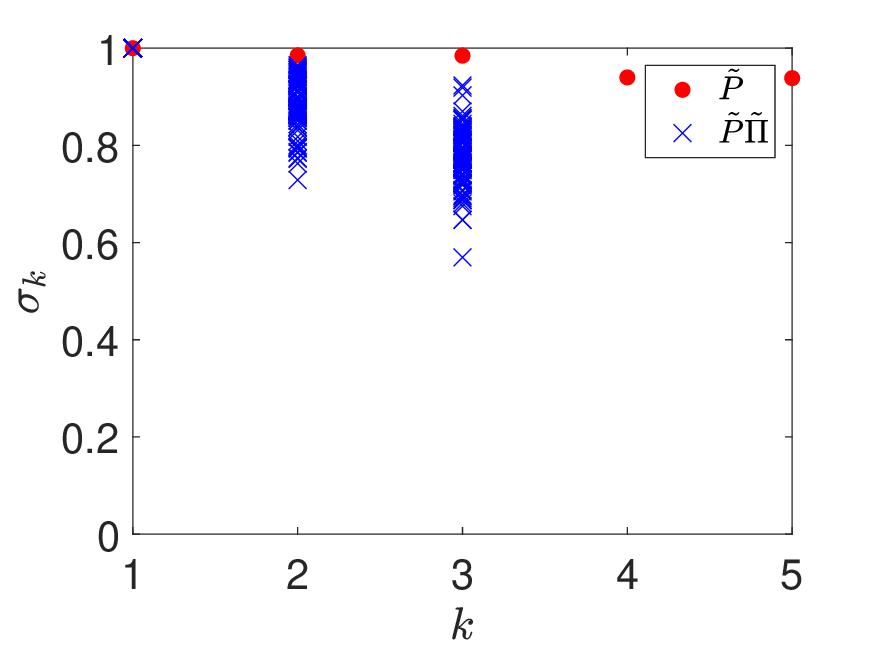}
	\end{subfigure}
 \hfill
    \begin{subfigure}[b]{0.32\textwidth}
		\centering
		\includegraphics[width=\textwidth]{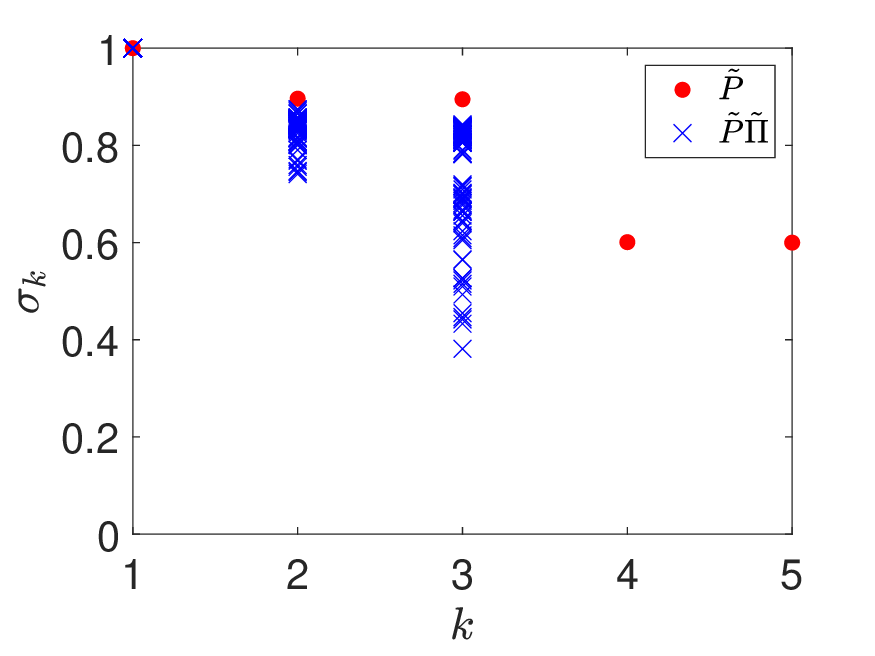}
	\end{subfigure}
 \vfill
	\begin{subfigure}[b]{0.32\textwidth}
		\centering
		\includegraphics[width=\textwidth]{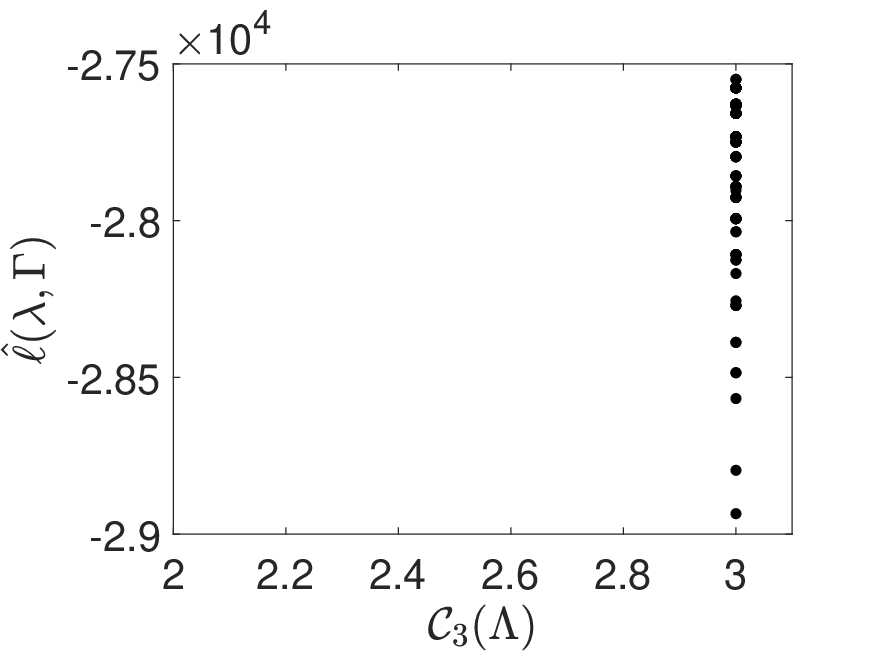}
	\end{subfigure} 
    \hfill    
	\begin{subfigure}[b]{0.32\textwidth}
		\centering
		\includegraphics[width=\textwidth]{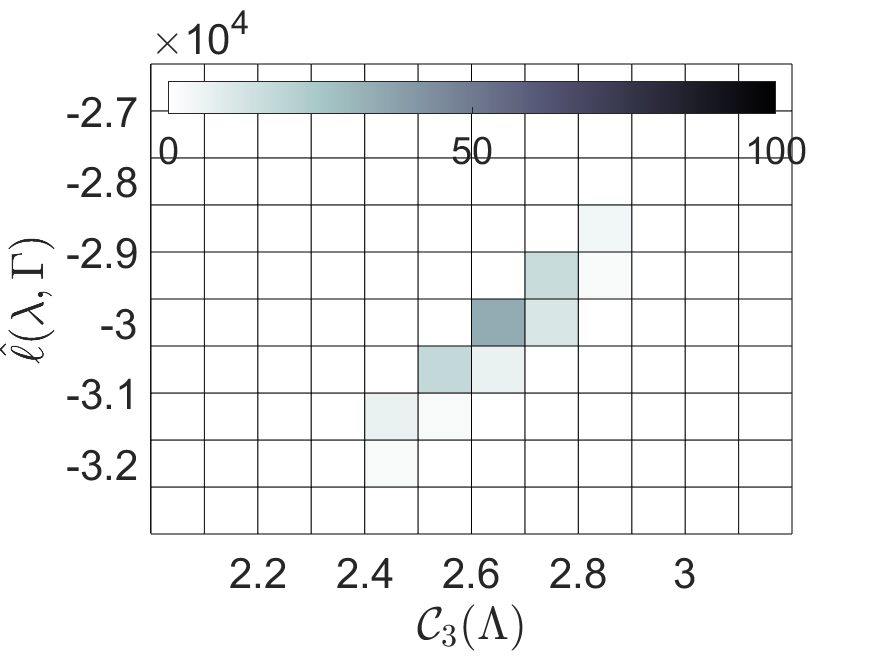}
	\end{subfigure} 
 \hfill
	\begin{subfigure}[b]{0.32\textwidth}
		\centering
		\includegraphics[width=\textwidth]{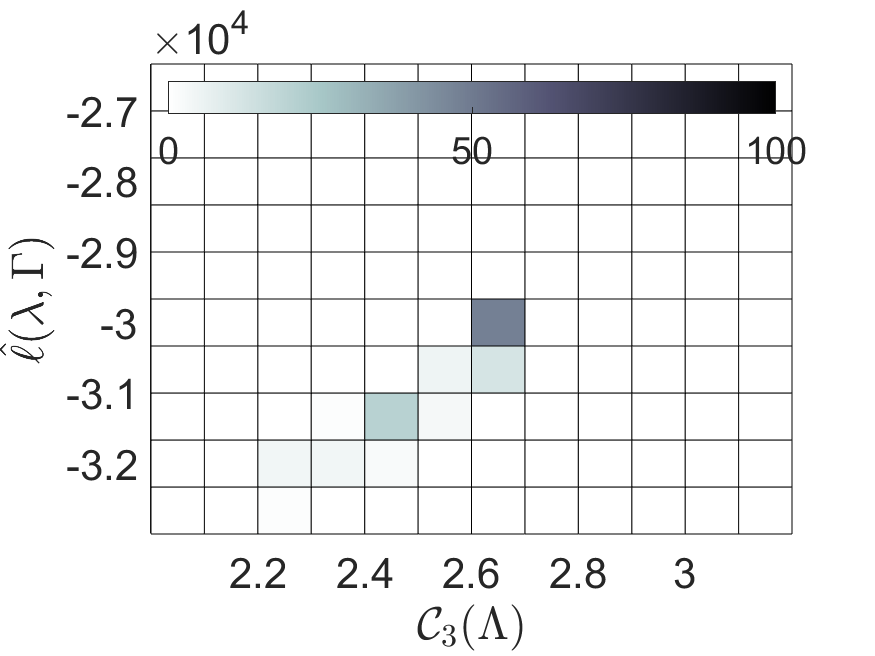}
	\end{subfigure} 
	\caption{\textit{Top:} First $5$ singular values of $\tP$ and $\tLam = \tP\tpi$ for the transition matrix of Example~\ref{example:piecewise_expanding_interval_map}. \textit{Bottom:} Likelihood bound $\hat\ell(\lambda,\Gamma)$ in dependence of the degree of coherence~$\cC_{3}(\Lambda)$. Results are shown for $100$ runs of DBMR with $\lowrank=3$ latent states. \textit{Left:} unperturbed; \textit{center:} slightly perturbed \textit{right:} strongly perturbed.}
	\label{fig:spectrum_100_runs_2}
\end{figure}

We observe that in the unperturbed case DBMR recovers $\sigma_2=1$ perfectly for all runs, but $\sigma_3=0.6$ is recovered only in $60\%$ of the runs, and in the remaining runs it converges to a (degenerate) model with effectively $\lowrank=2$ latent states; i.e., $\sigma_3(\tLam)=0$. The bottom left panel of Figure~\ref{fig:spectrum_100_runs_1} shows that the degenerate results are suboptimal, in the sense that the corresponding iterations get stuck in a suboptimal local maximum. As the perturbation in the data is increased, $\sigma_2$ is close to optimal and we start to get results between the previous two extreme cases of $\sigma_3(\tLam)$, and the associated degrees of coherence $\cC_3(\tLam)$ spread out from the previous two values somewhat.
For large perturbation, this process continues, and we see clustering of the objectives $(\cC,\hat\ell)$ around $(1.75, -1.072\cdot 10^5)$ and $(2,-1.076\cdot 10^5)$. That is, the objectives work against one another.
We thus note that while increasing the perturbation improved the success rate of the DBMR runs finding a global maximum, it also turned the harmonious objectives into mildly conflicting ones.

\begin{figure}[htb]
    \centering
    \begin{subfigure}[b]{0.49\textwidth}
		\centering
		\includegraphics[width=\textwidth]{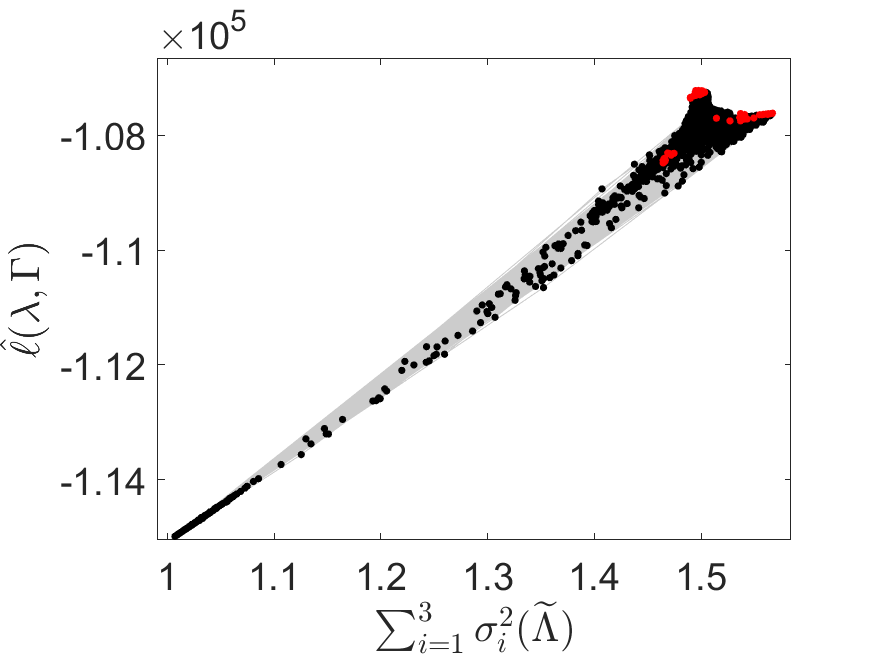}
	\end{subfigure}
 \hfill
    \begin{subfigure}[b]{0.49\textwidth}
		\centering
		\includegraphics[width=\textwidth]{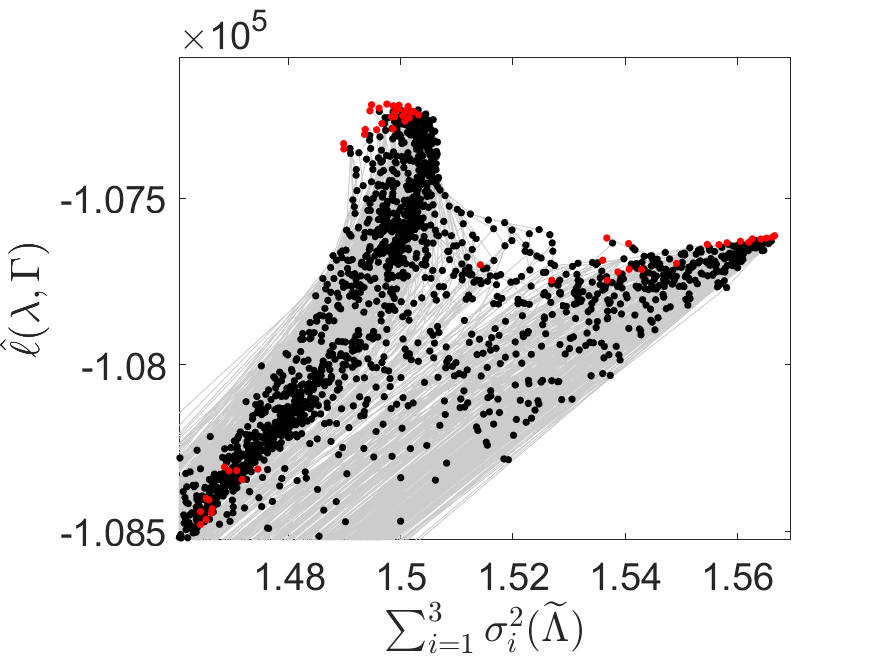}
	\end{subfigure}
    \caption{The two objectives $\hat\ell(\lambda,\Gamma)$ and $\|\tLam\|_F^2$ for all iterates of 1000 randomly initialized DBMR runs in the strongly perturbed case of Example~\ref{example:1_three_coherent_sets}. Grey lines connect data points for successive DBMR iterates, red dots indicate the endpoints of the 1000 runs (local optima of the DBMR objective~\eqref{eq:DBMR_objective}). The right panel is a close-up of the region with the highest values of both objectives, corresponding to the top right corner of the left panel.}
    \label{fig:DBMRtraj}
\end{figure}

Figure~\ref{fig:spectrum_100_runs_2} (Example~\ref{example:piecewise_expanding_interval_map}) shows that for the unperturbed case all DBMR runs converge to coherence-optimal partitions.
However, for the perturbed cases, many local optima trap the runs.
As discussed in Example~\ref{example:piecewise_expanding_interval_map}, we attribute these local minima to the many coherent sets of different sizes present in this system. In this example we observe no conflict between the optimization criteria~$\cC$ and~$\hat\ell$.

To get a more comprehensive picture for Example~\ref{example:1_three_coherent_sets} with large perturbation, where the conflict between the optimization criteria arises, we consider 1000 new DBMR runs. Every run is initialized, as before, with an affiliation matrix $\Gamma^{(0)}$ of which every column is a uniform i.i.d.\ sample of one of the three standard unit vectors.
This time, instead of considering only the converged DBMR solutions, we keep all iterates of every run, that is, whole ``DBMR trajectories'': $6674$ pairs of $(\lambda,\Gamma)$ in total.
For all these, we depict $\hat\ell(\lambda,\Gamma)$ and $\sum_{i=1}^3\sigma_i(\tLam)^2 = \|\tLam\|_F^2$, where the latter equality is due to~$\mathrm{rank}(\tLam) \le \lowrank = 3$.
By Corollary~\ref{cor:relating_objectives} this is an equivalent objective to~$\|\tP - \tLam\|_F^2$, since $\|\tP\|_F^2$ depends only on the data and not on DBMR iterates. The results are shown in the left-hand panel of Figure~\ref{fig:DBMRtraj}, with a close-up on the region with the conflicting optima on the right. In the left-hand panel all initial points of DBMR runs satisfy~$\|\tLam\|_F^2\le 1.1$. We observe that, although there is some ``spread'' during the DBMR iterations and in particular in the local DBMR optima, the correlation between the objectives is quite high for this set of matrices.
We also observe a third cluster of local optima around $\hat{\ell}(\lambda,\Gamma) \approx -1.084 \cdot 10^5$ which was not found by the previous 100 runs, cf.\ 	Figure~\ref{fig:spectrum_100_runs_1} (bottom right panel).
It seems to correspond to degenerate local minima essentially belonging to a coherent 2-partition, as can be seen by the corresponding DBMR transition matrix $\Lambda$, depicted in Figure~\ref{fig:modelcase_otheropt} (right).
This figure illustrates three DBMR transition matrices $\Lambda$, each corresponding to one DBMR optimum from the three red clusters in Figure~\ref{fig:DBMRtraj}.
Its left panel is identical to the bottom right panel Figure~\ref{fig:CSI_1} with the highest DBMR objective value, while the center panel with objective value around~$-1.076\cdot 10^5$ corresponds to a clustering that is closer to the coherence-optimal partition (Figure~\ref{fig:CSI_1} middle right).

\begin{figure}
    \centering
 \begin{subfigure}[b]{0.32\textwidth}
		\centering
		\includegraphics[width=\textwidth]{Capital_Lambda_d_10.eps}
	\end{subfigure}
    \hfill
    \begin{subfigure}[b]{0.32\textwidth}
		\centering
		\includegraphics[width=\textwidth]{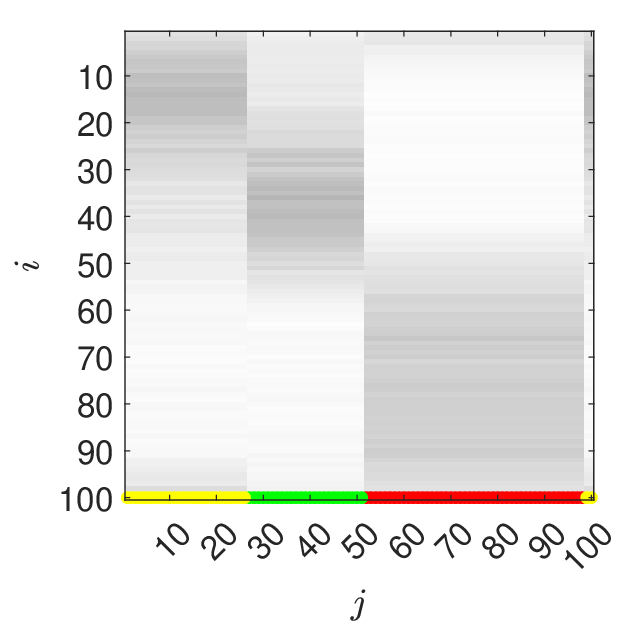}
	\end{subfigure}
    \hfill
    \begin{subfigure}[b]{0.32\textwidth}
		\centering
		\includegraphics[width=\textwidth]{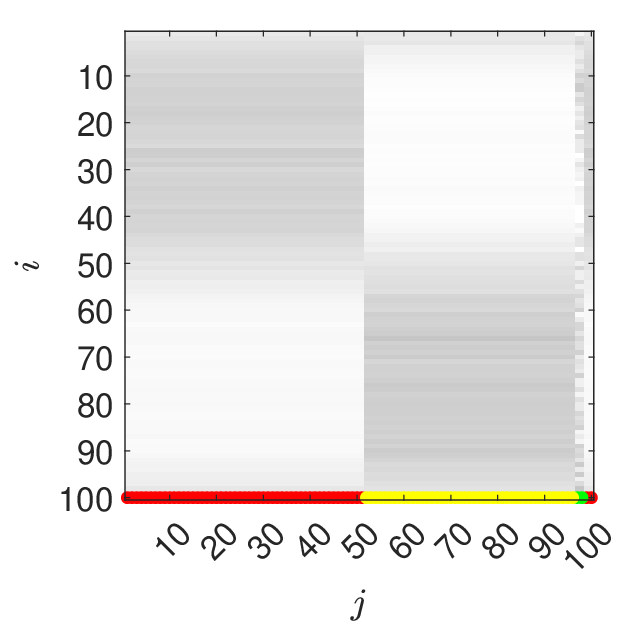}
	\end{subfigure}
    \caption{Three DBMR output transition matrices $\Lambda = \lambda\Gamma$ for optimal (left: $\hat\ell(\lambda,\Gamma) \approx -1.072\cdot 10^5$) and suboptimal (center: $\hat\ell(\lambda,\Gamma) \approx -1.076 \cdot 10^5$, right: $\hat\ell(\lambda,\Gamma) \approx -1.084 \cdot 10^5$) local maxima in Example~\ref{example:1_three_coherent_sets} with large perturbation, each corresponding to one of the three red clusters in Figure~\ref{fig:DBMRtraj}.
    The coloring of the bottom line indicates the partition given by the affiliation matrix~$\Gamma$.}
    \label{fig:modelcase_otheropt}
\end{figure}

\newcommand{\etalchar}[1]{$^{#1}$}



\begin{thebibliography}{dLGDLR08}

\bibitem[ABB{\etalchar{+}}17]{aref2017}
H.~Aref, J.~R. Blake, M.~Budi{\v{s}}i{\'c}, S.~S. Cardoso, J.~H. Cartwright,
  H.~J. Clercx, K.~El~Omari, U.~Feudel, R.~Golestanian, E.~Gouillart, G.~F. van
  Heijst, T.~S. Krasnopolskaya, Y.~L. Guer, R.~S. MacKay, V.~V. Meleshko,
  G.~Metcalfe, I.~Mezić, A.~P.~S. de~Moura, O.~Piro, M.~F.~M. Speetjens,
  R.~Sturman, J.-L. Thiffeault, and I.~Tuval.
\newblock Frontiers of chaotic advection.
\newblock {\em Reviews of Modern Physics}, 89(2):025007, 2017.

\bibitem[Are84]{aref1984}
H.~Aref.
\newblock Stirring by chaotic advection.
\newblock {\em Journal of Fluid Mechanics}, 143:1--21, 1984.

\bibitem[BK17]{banisch2017understanding}
R.~Banisch and P.~Koltai.
\newblock Understanding the geometry of transport: diffusion maps for
  {L}agrangian trajectory data unravel coherent sets.
\newblock {\em Chaos}, 27(3):035804, 16, 2017.

\bibitem[BKPG19]{BaKoPG19}
R.~Banisch, P.~Koltai, and K.~Padberg-Gehle.
\newblock Network measures of mixing.
\newblock {\em Chaos: An Interdisciplinary Journal of Nonlinear Science},
  29(6):063125, 2019.

\bibitem[Den17]{denner2017coherent}
A.~Denner.
\newblock {\em Coherent structures and transfer operators}.
\newblock PhD thesis, Technische Universit{\"a}t M{\"u}nchen, 2017.

\bibitem[dLGDLR08]{de2008shrinking}
D.~M. de~Lachapelle, D.~Gfeller, and P.~De~Los~Rios.
\newblock Shrinking matrices while preserving their eigenpairs with application
  to the spectral coarse graining of graphs.
\newblock {\em Submitted to SIAM Journal on Matrix Analysis and Applications},
  2008.

\bibitem[DLP06]{ding2006nonnegative}
C.~Ding, T.~Li, and W.~Peng.
\newblock Nonnegative matrix factorization and probabilistic latent semantic
  indexing: {E}quivalence chi-square statistic, and a hybrid method.
\newblock In {\em AAAI}, volume~42, pages 137--43, 2006.

\bibitem[DLPP06]{ding2006orthogonal}
C.~Ding, T.~Li, W.~Peng, and H.~Park.
\newblock Orthogonal nonnegative matrix tri-factorizations for clustering.
\newblock In {\em Proceedings of the 12th ACM SIGKDD international conference
  on Knowledge discovery and data mining}, pages 126--135, 2006.

\bibitem[DW05]{deuflhard2005robust}
P.~Deuflhard and M.~Weber.
\newblock Robust {P}erron cluster analysis in conformation dynamics.
\newblock {\em Linear Algebra Appl.}, 398:161--184, 2005.

\bibitem[FBD09]{fevotte2009nonnegative}
C.~F{\'e}votte, N.~Bertin, and J.-L. Durrieu.
\newblock Nonnegative matrix factorization with the {I}takura-{S}aito
  divergence: With application to music analysis.
\newblock {\em Neural computation}, 21(3):793--830, 2009.

\bibitem[FHRVS15]{froyland2015studying}
G.~Froyland, C.~Horenkamp, V.~Rossi, and E.~Van~Sebille.
\newblock Studying an {A}gulhas ring's long-term pathway and decay with
  finite-time coherent sets.
\newblock {\em Chaos: An Interdisciplinary Journal of Nonlinear Science},
  25(8), 2015.

\bibitem[FI11]{fevotte2011algorithms}
C.~F{\'e}votte and J.~Idier.
\newblock Algorithms for nonnegative matrix factorization with the
  $\beta$-divergence.
\newblock {\em Neural computation}, 23(9):2421--2456, 2011.

\bibitem[FLS10]{FrLlSa10}
G.~Froyland, S.~Lloyd, and N.~Santitissadeekorn.
\newblock Coherent sets for nonautonomous dynamical systems.
\newblock {\em Phys. D}, 239(16):1527--1541, 2010.

\bibitem[FP09]{froyland_padberg_09}
G.~Froyland and K.~Padberg.
\newblock Almost-invariant sets and invariant manifolds -- connecting
  probabilistic and geometric descriptions of coherent structures in flows.
\newblock {\em Phys. D}, 238:1507--1523, 2009.

\bibitem[FPG14]{froyland2014almost}
G.~Froyland and K.~Padberg-Gehle.
\newblock Almost-invariant and finite-time coherent sets: directionality,
  duration, and diffusion.
\newblock In {\em Ergodic theory, open dynamics, and coherent structures},
  pages 171--216. Springer, 2014.

\bibitem[Fro98]{Fr98}
G.~Froyland.
\newblock Approximating physical invariant measures of mixing dynamical systems
  in higher dimensions.
\newblock {\em Nonlinear Anal.}, 32(7):831--860, 1998.

\bibitem[Fro13]{froyland2013analytic}
G.~Froyland.
\newblock An analytic framework for identifying finite-time coherent sets in
  time-dependent dynamical systems.
\newblock {\em Phys. D}, 250:1--19, 2013.

\bibitem[Fro15]{Fro15}
G.~Froyland.
\newblock Dynamic isoperimetry and the geometry of {L}agrangian coherent
  structures.
\newblock {\em Nonlinearity}, 28(10):3587--3622, 2015.

\bibitem[FRS19]{froyland2019sparse}
G.~Froyland, C.~P. Rock, and K.~Sakellariou.
\newblock Sparse eigenbasis approximation: {M}ultiple feature extraction across
  spatiotemporal scales with application to coherent set identification.
\newblock {\em Commun. Nonlinear Sci. Numer. Simul.}, 77:81--107, 2019.

\bibitem[FSM10]{froyland2010transport}
G.~Froyland, N.~Santitissadeekorn, and A.~Monahan.
\newblock Transport in time-dependent dynamical systems: finite-time coherent
  sets.
\newblock {\em Chaos}, 20(4):043116, 10, 2010.

\bibitem[GH17]{gerber2017toward}
S.~Gerber and I.~Horenko.
\newblock Toward a direct and scalable identification of reduced models for
  categorical processes.
\newblock {\em Proceedings of the National Academy of Sciences},
  114(19):4863--4868, 2017.

\bibitem[Gil20]{gillis2020nonnegative}
N.~Gillis.
\newblock {\em Nonnegative Matrix Factorization}.
\newblock Society for Industrial and Applied Mathematics, 2020.

\bibitem[GK78]{gohberg1978introduction}
I.~Gohberg and M.~G. Kre{\u\i}n.
\newblock {\em Introduction to the theory of linear nonselfadjoint operators},
  volume~18.
\newblock American Mathematical Soc., 1978.

\bibitem[GONH18]{gerber2018scalable}
S.~Gerber, S.~Olsson, F.~No{\'e}, and I.~Horenko.
\newblock A scalable approach to the computation of invariant measures for
  high-dimensional {M}arkovian systems.
\newblock {\em Scientific reports}, 8(1):1796, 2018.

\bibitem[HBV13]{haller2013coherent}
G.~Haller and F.~J. Beron-Vera.
\newblock Coherent {L}agrangian vortices: The black holes of turbulence.
\newblock {\em Journal of Fluid Mechanics}, 731:R4, 2013.

\bibitem[HE15]{HsEu15}
T.~Hsing and R.~Eubank.
\newblock {\em Theoretical foundations of functional data analysis, with an
  introduction to linear operators}.
\newblock John Wiley \& Sons, 2015.

\bibitem[Hec98]{heckerman1998tutorial}
D.~Heckerman.
\newblock {\em A tutorial on learning with {B}ayesian networks}.
\newblock Springer, 1998.

\bibitem[HJ13]{HoJo13}
R.~A. Horn and C.~R. Johnson.
\newblock {\em Matrix analysis}.
\newblock Cambridge University Press, 2 edition, 2013.

\bibitem[Hof99]{hofmann1999probabilistic}
T.~Hofmann.
\newblock Probabilistic latent semantic indexing.
\newblock In {\em Proceedings of the 22nd annual international ACM SIGIR
  conference on Research and development in information retrieval}, pages
  50--57, 1999.

\bibitem[Hof01]{hofmann2001unsupervised}
T.~Hofmann.
\newblock Unsupervised learning by probabilistic latent semantic analysis.
\newblock {\em Machine learning}, 42(1-2):177--196, 2001.

\bibitem[Hot36]{Hot36}
H.~Hotelling.
\newblock Relations between two sets of variates.
\newblock {\em Biometrika}, 28(3-4):321--377, 12 1936.

\bibitem[HP98]{haller1998finite}
G.~Haller and A.~C. Poje.
\newblock Finite time transport in aperiodic flows.
\newblock {\em Phys. D}, 119(3):352--380, 1998.

\bibitem[HS05]{huisinga2005metastability}
W.~Huisinga and B.~Schmidt.
\newblock Metastability and dominant eigenvalues of transfer operators.
\newblock {\em Lecture Notes in Computational Science and Engineering}, 49:167,
  2005.

\bibitem[KCS16]{koltai2016metastability}
P.~Koltai, G.~Ciccotti, and C.~Sch{\"u}tte.
\newblock On metastability and {M}arkov state models for non-stationary
  molecular dynamics.
\newblock {\em The Journal of Chemical Physics}, 145(17):174103, 2016.

\bibitem[KHMN19]{klus2019kernel}
S.~Klus, B.~E. Husic, M.~Mollenhauer, and F.~No\'{e}.
\newblock Kernel methods for detecting coherent structures in dynamical data.
\newblock {\em Chaos}, 29(12):123112, 15, 2019.

\bibitem[Kif86]{Kif86}
Y.~Kifer.
\newblock General random perturbations of hyperbolic and expanding
  transformations.
\newblock {\em Journal D'Analyse Math\'ematique}, 47:111--150, 1986.

\bibitem[KWNS18]{KWNS18}
P.~Koltai, H.~Wu, F.~No\'e, and C.~Sch\"utte.
\newblock Optimal data-driven estimation of generalized {M}arkov state models
  for non-equilibrium dynamics.
\newblock {\em Computation}, 6(1), 2018.

\bibitem[LD14]{li2014nonnegative}
T.~Li and C.~Ding.
\newblock Nonnegative matrix factorizations for clustering: A survey.
\newblock In C.~C. Aggarwal and C.~K. Reddy, editors, {\em Data Clustering:
  Algorithms and Applications}, Data Mining and Knowledge Discovery Series,
  chapter~7, pages 149--176. Chapman and Hall/CRC, 1. edition, 2014.

\bibitem[LS99]{lee1999learning}
D.~D. Lee and H.~S. Seung.
\newblock Learning the parts of objects by non-negative matrix factorization.
\newblock {\em Nature}, 401(6755):788--791, 1999.

\bibitem[LS00]{lee2000algorithms}
D.~Lee and H.~S. Seung.
\newblock Algorithms for non-negative matrix factorization.
\newblock {\em Advances in neural information processing systems}, 13, 2000.

\bibitem[LSZL20]{lu2020community}
H.~Lu, X.~Sang, Q.~Zhao, and J.~Lu.
\newblock Community detection algorithm based on nonnegative matrix
  factorization and pairwise constraints.
\newblock {\em Physica A: Statistical Mechanics and its Applications},
  545:123491, 2020.

\bibitem[OBA24]{ortiz2022community}
M.~Ortiz-Bouza and S.~Aviyente.
\newblock Community detection in multiplex networks based on orthogonal
  nonnegative matrix tri-factorization.
\newblock {\em IEEE Access}, 12:6423--6436, 2024.

\bibitem[OW05]{ordentlich2005distribution}
E.~Ordentlich and M.~J. Weinberger.
\newblock A distribution dependent refinement of {P}insker's inequality.
\newblock {\em IEEE Transactions on Information Theory}, 51(5):1836--1840,
  2005.

\bibitem[PGAG14]{pompili2014two}
F.~Pompili, N.~Gillis, P.-A. Absil, and F.~Glineur.
\newblock Two algorithms for orthogonal nonnegative matrix factorization with
  application to clustering.
\newblock {\em Neurocomputing}, 141:15--25, 2014.

\bibitem[RKLW90]{romkedar_etal_1990}
V.~Rom-Kedar, A.~Leonard, and S.~Wiggins.
\newblock An analytical study of transport, mixing and chaos in an unsteady
  vortical flow.
\newblock {\em J. Fluid Mech.}, 214:347--394, 1990.

\bibitem[RW13]{roblitz2013fuzzy}
S.~R{\"o}blitz and M.~Weber.
\newblock Fuzzy spectral clustering by {P}{C}{C}{A}+: {A}pplication to {M}arkov
  state models and data classification.
\newblock {\em Advances in Data Analysis and Classification}, 7(2):147--179,
  2013.

\bibitem[SG08]{singh2008unified}
A.~P. Singh and G.~J. Gordon.
\newblock A unified view of matrix factorization models.
\newblock In {\em Machine Learning and Knowledge Discovery in Databases:
  European Conference, ECML PKDD 2008, Antwerp, Belgium, September 15-19, 2008,
  Proceedings, Part II 19}, pages 358--373. Springer Berlin Heidelberg, 2008.

\bibitem[SLM05]{shadden2005definition}
S.~C. Shadden, F.~Lekien, and J.~E. Marsden.
\newblock Definition and properties of lagrangian coherent structures from
  finite-time {L}yapunov exponents in two-dimensional aperiodic flows.
\newblock {\em Physica D: Nonlinear Phenomena}, 212(3-4):271--304, 2005.

\bibitem[SRS{\etalchar{+}}08]{shashanka2008probabilistic}
M.~Shashanka, B.~Raj, P.~Smaragdis, et~al.
\newblock Probabilistic latent variable models as nonnegative factorizations.
\newblock {\em Computational intelligence and neuroscience}, 2008, 2008.

\bibitem[SS13]{schutte2013metastability}
C.~Sch{\"u}tte and M.~Sarich.
\newblock {\em Metastability and Markov State Models in Molecular Dynamics},
  volume~24.
\newblock American Mathematical Soc., 2013.

\bibitem[Tsy09]{tsybakov2009nonparametric}
A.~B. Tsybakov.
\newblock {\em Introduction to nonparametric estimation}.
\newblock Springer Series in Statistics. Springer, New York, 2009.
\newblock Revised and extended from the 2004 French original, Translated by
  Vladimir Zaiats.

\bibitem[UHZ{\etalchar{+}}16]{udell2016generalized}
M.~Udell, C.~Horn, R.~Zadeh, S.~Boyd, et~al.
\newblock Generalized low rank models.
\newblock {\em Foundations and Trends{\textregistered} in Machine Learning},
  9(1):1--118, 2016.

\bibitem[Ula60]{ulam1960collection}
S.~M. Ulam.
\newblock {\em A collection of mathematical problems}, volume~8.
\newblock Interscience Publishers, 1960.

\bibitem[Was04]{wasserman2004all}
L.~Wasserman.
\newblock {\em All of statistics: a concise course in statistical inference},
  volume~26.
\newblock Springer, 2004.

\bibitem[WKZ{\etalchar{+}}18]{wu2018nonnegative}
W.~Wu, S.~Kwong, Y.~Zhou, Y.~Jia, and W.~Gao.
\newblock Nonnegative matrix factorization with mixed hypergraph regularization
  for community detection.
\newblock {\em Information Sciences}, 435:263--281, 2018.

\bibitem[WN20]{wu2020variational}
H.~Wu and F.~No{\'e}.
\newblock Variational approach for learning {M}arkov processes from time series
  data.
\newblock {\em Journal of Nonlinear Science}, 30(1):23--66, 2020.

\bibitem[WZ12]{wang2012nonnegative}
Y.-X. Wang and Y.-J. Zhang.
\newblock Nonnegative matrix factorization: {A} comprehensive review.
\newblock {\em IEEE Transactions on knowledge and data engineering},
  25(6):1336--1353, 2012.

\bibitem[YL13]{yang2013overlapping}
J.~Yang and J.~Leskovec.
\newblock Overlapping community detection at scale: {A} nonnegative matrix
  factorization approach.
\newblock In {\em Proceedings of the sixth ACM international conference on Web
  search and data mining}, pages 587--596, 2013.

\end{thebibliography}

\end{document}